%% file: main.tex
\documentclass[10pt, draftclsnofoot, onecolumn]{IEEEtran}
\IEEEoverridecommandlockouts
\usepackage{amsmath}
\usepackage{amssymb}
\usepackage{amsthm}
\usepackage{bbm}
\usepackage{graphicx}
\usepackage{xcolor}
\usepackage{mathtools}
\usepackage{float}
\usepackage{cite}
\usepackage[hidelinks]{hyperref}
\usepackage[nameinlink,capitalise,noabbrev]{cleveref}
\usepackage{booktabs}
\usepackage{subcaption}

\theoremstyle{plain}
\newtheorem{definition}{Definition}[] 
\newtheorem{theorem}{Theorem}[]
\newtheorem{lemma}{Lemma}[]
\newtheorem{remark}{Remark}[]
\newtheorem{corollary}{Corollary}[]

\newcommand{\remove}[1]{}
\newcommand{\E}{\mathbb E}

\input defs

\title{Finite-Sample Binary Hypothesis Testing via R\'enyi Divergences:\\Strong Converse and Local Privacy}

\begin{document}

\author{Roberto~Bruno\textsuperscript{1}, Adrien~Vandenbroucque\textsuperscript{2}, and Amedeo~Roberto~Esposito\textsuperscript{3}\\[1ex]
\textsuperscript{1}Department of Computer Science, University of Salerno, Fisciano, Italy\\
\textsuperscript{2}School of Computer and Communication Sciences, EPFL, Lausanne, Switzerland\\
\textsuperscript{3}Okinawa Institute of Science and Technology, Onna, Okinawa, Japan\\[1ex]
Email: {\tt rbruno@unisa.it, adrien.vandenbroucque@epfl.ch, amedeo.esposito@oist.jp}}

\maketitle

\begin{abstract}
We study asymmetric simple binary hypothesis testing between $H_0:P_0^{ n}$ and $H_1:P_1^{ n}$, based on $n$ independent and identically distributed observations. Leveraging a variational representation of R\'enyi divergence of order $\alpha$, we derive our main result: a finite-sample converse with $\alpha>1$. The bound uses both directions of the divergence $D_\alpha(P_1\|P_0)$ and $D_\alpha(P_0\|P_1)$, tensorises under product measures, and contains familiar data-processing converses as boundary cases. For comparison, we apply the same variational approach to general $f$-divergences and specialise it to total variation, $E_\gamma$, Hellinger, and Kullback Leibler divergences, thereby recovering familiar converses within a unified framework. Together with an achievability bound involving R\'enyi divergence with $\alpha\in (0,1)$, the main converse recovers the phase transition of the optimal Type II error under the exponentially decaying Type I error constraint $\varepsilon_n=e^{-nr}$. Under regularity conditions, the optimal Type II error vanishes exponentially when $r<D(P_1\|P_0)$ and converges exponentially fast to one when $r>D(P_1\|P_0)$. We also derive sample-complexity bounds and extend both the converse and achievability analyses to locally differentially private observations, quantifying the cost of privacy and recovering the non-private achievability bound as the privacy constraint vanishes.
\end{abstract}

\section{Introduction}\label{sec:intro}
Given $n$ independent observations generated by an unknown probability distribution $P$, binary hypothesis testing asks for a decision between the null hypothesis $H_0:P=P_0$ and the alternative hypothesis $H_1:P=P_1$. A Type~I error occurs when one rejects $H_0$ even though it is true, whereas a Type~II error occurs when one accepts $H_0$ when $H_1$ is true. These errors cannot generally be minimised simultaneously. In the asymmetric Neyman--Pearson formulation, one therefore minimises the Type~II error subject to a constraint on the Type~I error. We write $\beta_n(\varepsilon)$ for the smallest Type~II error attainable when the Type~I error is at most $\varepsilon\in(0,1)$.

The asymptotic theory is well understood. For fixed $\varepsilon$, the Chernoff--Stein lemma identifies the Kullback-Leibler divergence $D(P_0\|P_1)$ as the optimal error exponent of $\beta_n(\varepsilon)$ \cite{chernoff1956large,chernoff1952measure,cover1999elements}. A different phenomenon appears when the Type~I constraint itself decays exponentially, $\varepsilon_n=e^{-nr}$. In that regime, $D(P_1\|P_0)$ becomes a critical threshold: if $r<D(P_1\|P_0)$, both errors vanish exponentially, whereas if $r>D(P_1\|P_0)$, the optimal Type~II error converges to one \cite{Blahut,Han_strongconverse}. The latter statement is what is typically called a strong converse.

These asymptotic results do not provide performance guarantees at a finite sample size. A useful finite-sample result for the strong converse regime should meet three requirements. It should remain informative at small to  moderate sample sizes, rely on divergences that tensorise under product measures, and recover convergence of the Type II error to one when the Type I exponent exceeds the critical threshold given by KL. Tensorisation allows divergences between the $n$-dimensional product distributions to be computed from their one-sample marginals. Most information measures meet these requirements only in part. Total variation and $E_\gamma$ can describe the finite-sample testing tradeoff sharply but do not tensorise~\cite{bhattacharyya2025total}. The Kullback–Leibler divergence tensorises, but its elementary finite-sample converses yield only a weak converse: when the Type I exponent exceeds the critical threshold, they can show that the optimal Type II error remains bounded away from zero, but not that it converges to one. Establishing the so-called strong converse requires additional tools such as blowing-up or smoothing-out arguments~\cite{ahlswede1976bounds,liu2020second}. Hellinger-type quantities tensorise and lead to useful sample-complexity estimates, but do not yield a strong converse either.

This paper develops a variational route to finite-sample bounds and specialises it to R\'enyi divergences of order $\alpha$. Orders above one yield converse bounds that tensorise and directly capture the strong-converse regime; orders below one yield complementary achievability bounds. After establishing the result, we extend the same variational argument to the full class of $f$-divergences, obtaining a unified converse. Specialising this converse to standard divergences recovers familiar bounds within one derivation and makes their comparison with the main bound transparent. Both the converse and the achievability bounds via R\'enyi quantify the behaviour on either side of the phase transition at $D(P_1\|P_0)$ and can be inverted to characterise sample complexity.

\subsection{Contributions}\label{sec:our_contribution}

Our main contributions are summarised as follows.
\begin{itemize}
   
    \item First we derive a finite-sample converse based on $D_\lambda$ for $\lambda>1$. The bound contains terms in both directions, $D_\lambda(P_1\|P_0)$ and $D_\lambda(P_0\|P_1)$, tensorises exactly for independent observations, and includes the corresponding data-processing bounds as boundary cases of its optimisation parameters.

    \remove{
    \item We then develop a unified converse from the variational representation of a general $f$-divergence. Specialising this construction to total variation, $E_\gamma$, and Hellinger divergences recovers their corresponding converses through a common argument and provides a systematic comparison with the main R\'enyi result.
    }
    \item For $\lambda\in(0,1)$, we establish an achievability bound for log-likelihood-ratio tests. Under regularity conditions, the achievability and converse bounds show that if the Type I error is constrained to decay exponentially like $e^{-nr}$ the optimal Type~II error probability $\beta_n(\varepsilon_n)$ approaches one exponentially when $r>D(P_1\|P_0)$ and vanishes exponentially when $r<D(P_1\|P_0)$.
    \item We invert the finite-sample bounds to obtain lower and upper bounds on the sample complexity. Under the same regularity conditions, the lower bound improves the leading constant in a recent Hellinger-based estimate~\cite{pensia2024sample} in strongly asymmetric regimes. Moreover, we can show, in the  same strongly asymmetric regimes, that our lower and upper bounds are asymptotically matching.
    \item Finally, we extend the analysis to hypothesis testing from locally differentially private observations. We derive a privacy-dependent converse and sample-complexity lower bounds, complemented by achievability guarantees based on two staircase mechanisms. These results also yield explicit achievability and strong-converse regimes under exponentially decaying Type~I constraints.
\end{itemize}

The remainder of the paper is organised as follows. Section~\ref{sec:rel_work} reviews the most closely related finite-sample bounds. Section~\ref{sec:notation} introduces the testing model and the relevant information measures. Section~\ref{sec:main} presents the finite-sample converse bound obtained through a variational representation of the R\'enyi divergence and establishes a complementary achievability bound. Section~\ref{sec:phase_transition} combines these bounds under exponentially decaying Type~I error constraints to recover, as $n\to\infty$, the sharp phase transition established in~\cite{Blahut,Han_strongconverse}. Appendix~\ref{app:f_div_framework} extends the variational approach to the broader class of $f$-divergences. 

Section~\ref{sec:comparisons} presents several numerical comparisons of the finite-sample converse bounds, showcasing how the bound induced by R\'enyi divergence improves the existing bounds in a variety of discrete and continuous settings.

Section~\ref{sec:sample_complexity} shows how the bounds derived in the previous sections can be translated into lower and upper bounds on the sample complexity, and compares them with the recent results established in~\cite{pensia2024sample,kazemi2025sample}. 

Finally, Section~\ref{sec:LDP} considers the locally private setting and investigates the behaviour of the R\'enyi divergence-based bounds established in this work in such a setting.

\section{Related Work}\label{sec:rel_work}

The Neyman--Pearson lemma characterises the optimal finite-sample decision rule as a likelihood-ratio test \cite{neyman1933ix}. For fixed Type~I error, the Chernoff--Stein lemma gives the asymptotic Type~II exponent \cite{chernoff1956large,chernoff1952measure,cover1999elements}; the same exponent persists when the Type~I constraint vanishes subexponentially \cite{Espinosa}. While, when the Type~I error constraint decays exponentially, the results of Blahut~\cite{Blahut} and of Han and Kobayashi~\cite{Han_strongconverse} identify $D(P_1\|P_0)$ as the strong-converse threshold.

At finite-sample sizes, total variation gives the exact minimum sum of the two errors:
\begin{equation}\label{eq:1-tv}
    \min_{A\in\mathcal{B}^{\otimes n}}\bigl\{P_0^n(A)+P_1^n(A^c)\bigr\}
    =1-\mathrm{TV}(P_0^n,P_1^n).
\end{equation}
The $E_\gamma$ divergence provides a related characterisation through likelihood-ratio threshold events \cite{asoodeh2020contraction,mullhaupt2025bounding}. Neither quantity, however, has a simple additive tensorisation rule, and exact evaluation under product measures can become computationally difficult as $n$ grows \cite{bhattacharyya2022approximating,bhattacharyya2025total}.

KL-based bounds avoid this difficulty because $D(P_0^n\|P_1^n)=nD(P_0\|P_1)$. Classical Fano-type arguments give explicit finite-length converses \cite{polyanskiy2010channel}, with later refinements improving their constants \cite{lungu2024optimal}. Elementary KL bounds generally do not recover the strong converse on their own. Blowing-up arguments \cite{ahlswede1976bounds} and smoothing-out methods based on reverse hypercontractivity \cite{liu2020second} strengthen them and allows KL to provide a strong converse. However, these methods require advanced machinery and the resulting finite-sample estimates may still be loose at moderate $n$.

Hellinger-based methods provide another tensorising alternative. Bar-Yossef used inequalities relating total variation and squared Hellinger distance to control testing errors \cite{bar2002complexity,sason2016f}. Pensia \textit{et al.} characterised the sample complexity of Bayesian and asymmetric binary testing up to universal constants using Hellinger and Jensen--Shannon quantities \cite{pensia2024sample}. Kazemi \textit{et al.} subsequently expressed related bounds through a Hellinger-$\lambda$ divergence in distributed settings \cite{kazemi2025sample}. These results are especially natural for the inverse problem of determining the number of samples needed to attain prescribed errors.

R\'enyi divergence is closely connected to hypothesis testing and to strong-converse exponents \cite{vanerven_renyi,mosonyi2015quantum,polyanskiy2010arimoto}. Our main focus is its use as a tractable finite-sample bridge: exact tensorisation permits product-measure calculations, while optimising the order around one recovers the phase transition governed by KL divergence. The subsequent $f$-divergence analysis serves a complementary purpose, unifying well-known converses under the same variational principle so that their relation to the R\'enyi bound can be seen directly.

\section{Preliminaries and Notation}\label{sec:notation}
Let $P_0$ and $P_1$ be probability measures on a measurable space $(\mathcal{X},\mathcal{B})$. We observe $\mathbf{X}=(X_1,\ldots,X_n)$ and test
\begin{equation*}
    H_0:\mathbf{X}\sim P_0^n \qquad\text{against}\qquad H_1:\mathbf{X}\sim P_1^n,
\end{equation*}
where $P_j^n$ denotes the $n$-fold product of $P_j$.

A randomised test is a measurable function $\phi:\mathcal{X}^n\to[0,1]$, where $\phi(\mathbf{x})$ is the conditional probability of deciding $H_1$ after observing $\mathbf{x}$. Its Type~I and Type~II error probabilities are
\begin{equation}\label{eq:error_definitions}
    \alpha_n(\phi)=\E_{P_0^n}[\phi], \qquad \beta_n(\phi)=\E_{P_1^n}[1-\phi].
\end{equation}
When the test is clear from context, we simply write $\alpha_n$ and $\beta_n$.

A deterministic test takes values in $\{0,1\}$ and is identified with its rejection region $A=\{\mathbf{x}:\phi(\mathbf{x})=1\}$. In that case, $\alpha_n=P_0^n(A)$ and $\beta_n=P_1^n(A^c)$.

More generally, for probability measures $P$ and $Q$ on $(\mathcal{Z},\mathcal{F})$, define the optimal Type~II error under a Type~I constraint as
\begin{equation}\label{eq:binary_testing_functional}
    \beta_\varepsilon(P,Q):=\inf_{\psi:\mathcal{Z}\to[0,1]}\left\{\E_Q[1-\psi]:\E_P[\psi]\leq\varepsilon\right\},\qquad \varepsilon\in(0,1),
\end{equation}
where the first argument is the distribution under the null hypothesis and the second is the distribution under the alternative. For the i.i.d. testing problem considered here, we use the abbreviation
\begin{equation}\label{eq:Beta_eps_def}
    \beta_n(\varepsilon) :=\beta_\varepsilon(P_0^n,P_1^n)=\inf_{\phi:\mathcal{X}^n\to[0,1]}\left\{\E_{P_1^n}[1-\phi]:\E_{P_0^n}[\phi]\leq\varepsilon\right\}.
\end{equation}
The constraint may depend on $n$, i.e., $\varepsilon=\varepsilon_n$.

Unless stated otherwise, $P_0$ and $P_1$ are dominated by a common measure $\mu$, with one-sample densities $p_0$ and $p_1$. We write
\begin{equation*}
    \imath_{1\|0}^{(n)}(\mathbf{x})
    :=\log\frac{dP_1^n}{dP_0^n}(\mathbf{x})
    =\sum_{i=1}^n\log\frac{p_1(x_i)}{p_0(x_i)}
\end{equation*}
whenever the likelihood ratio is well defined. All logarithms are natural.

The Neyman--Pearson lemma states that the infimum in \eqref{eq:Beta_eps_def} is attained by a possibly randomised log-likelihood-ratio test \cite{neyman1933ix}. For a threshold $\tau\in\mathbb{R}$ and $\eta\in[0,1]$, write
\begin{equation}\label{eq:randomised_llrt}
    \phi_{\tau,\eta}(\mathbf{x})
    :=\mathbbm{1}\!\left\{\imath_{1\|0}^{(n)}(\mathbf{x})>\tau\right\}
    +\eta\,\mathbbm{1}\!\left\{\imath_{1\|0}^{(n)}(\mathbf{x})=\tau\right\}.
\end{equation}
Thus the test decides $H_1$ above the threshold $\tau$ and randomises on the threshold set. Deterministic log-likelihood-ratio tests correspond to $\eta\in\{0,1\}$.

\subsection{Divergences}
Let $P$ and $Q$ be probability measures on $(\mathcal{X},\mathcal{B})$, with densities $p$ and $q$ with respect to a common dominating measure $\mu$.

\begin{definition}
    For $\lambda\in(0,\infty)\setminus\{1\}$, the R\'enyi divergence of order $\lambda$ of $P$ relative to $Q$ is \cite{vanerven_renyi}
\begin{equation}
  D_{\lambda}(P \| Q) := \frac{1}{\lambda - 1} \log  \int_{\mathcal{X}} p({x})^{\lambda} q({x})^{1-\lambda} \, d\mu({x}).
\end{equation}
We set $D_1(P\|Q)=D(P\|Q)$. The left limit always satisfies
\begin{equation}
    \lim_{\lambda\uparrow1}D_\lambda(P\|Q)=D(P\|Q).
\end{equation}
If $D_{1+s}(P\|Q)<\infty$ for some $s>0$, then the right limit also satisfies
\begin{equation}\label{eq:renyi_right_continuity}
    \lim_{\lambda\downarrow1}D_\lambda(P\|Q)=D(P\|Q).
\end{equation}
\end{definition}
For $\lambda\in(0,\infty)\setminus\{1\}$, a useful variational representation is \cite{anantharam2018variational}
\begin{equation}\label{eq:Renyi_var_formula}
        \frac{D_\lambda(P\|Q)}{\lambda} = \sup_{g \in \Gamma} \left\{ \frac{1}{\lambda-1} \log \E_P[e^{(\lambda-1)g}] - \frac{1}{\lambda}\log \E_Q[e^{\lambda g}] \right\},
\end{equation}
where $\Gamma$ is an admissible class containing the bounded measurable functions. R\'enyi divergence also tensorises exactly under products \cite{vanerven_renyi}:
\begin{equation}\label{eq:renyi_tensorisation}
    D_\lambda(P^n\|Q^n)=nD_\lambda(P\|Q),\qquad n\geq1.
\end{equation}

Similarly, $f-$divergences can be defined as follows.

\begin{definition}
    Let $f:[0,\infty)\rightarrow(-\infty,+\infty]$ be convex and lower semicontinuous, with $f(1)=0$. The $f$-divergence of $P$ relative to $Q$ is \cite{csiszar1967information}
    \begin{equation}\label{eq:f_divergence_def}
        D_f(P\|Q):=\int_{\mathcal{X}} f\left(\frac{dP}{dQ}\right)dQ= \int_{\mathcal{X}}
    q(x)f\left(\frac{p(x)}{q(x)}\right)d\mu(x),
    \end{equation}
    with the conventions $0f(0/0)=0$ and $0f(a/0)=a\lim_{u\to\infty}f(u)/u$ for $a>0$.
\end{definition}

Several classical information measures arise from particular choices of $f$:
\begin{itemize}
    \item total variation, with
    \[
        f_{\mathrm{TV}}(t)=\frac{1}{2}|t-1|;
    \]
    
    \item the $E_\gamma$ divergence, with
    \[
        f_\gamma(t)=(t-\gamma)_+-(1-\gamma)_+,
        \qquad \gamma>0;
    \]
    
    \item the squared Hellinger distance, with
    \[
        f_{\mathrm{H}}(t)
        =\frac{1}{2}\bigl(\sqrt{t}-1\bigr)^2,
    \]
    or, equivalently, $f_{\mathrm{H}}(t)=1-\sqrt{t}$;
    
    \item the Kullback--Leibler divergence, with
    \[
        f_{\mathrm{KL}}(t)=t\log t.
    \]
\end{itemize}

For the variational representation used in Appendix~\ref{app:f_div_framework}, we extend $f$ by setting $f(u)=+\infty$ for $u<0$ and let $f^\star$ be its convex conjugate:
\begin{equation}
    f^\star(t) = \sup_{u\in\mathbb{R}}\bigl\{tu-f(u)\bigr\} = \sup_{u\geq 0}\bigl\{tu-f(u)\bigr\}.\label{eq:f_convex_conjugate}
\end{equation}
Then, the $f$-divergence admits the representation \cite{nguyen2010estimating}
\begin{equation}
    \label{eq:f_divergence_variational}
    D_f(P\|Q) = \sup_{g\in\mathcal{G}} \bigl\{\E _{P}[g(X)] - \E_{Q}[f^\star(g(X))]\bigr\},
\end{equation}
where $\mathcal{G}$ denotes the class of all measurable functions $g: \mathcal{X} \to \mathbb{R}$ for which the above expectations are well defined.

\section{Converse and Achievability Bounds via R\'enyi Divergences}\label{sec:main}

In this section, leveraging the variational representation of R\'enyi divergences in \eqref{eq:Renyi_var_formula}, we derive a finite-sample converse bounds for binary hypothesis testing in terms of R\'enyi divergences of order $\lambda>1$.  The resulting bound recovers the corresponding data-processing finite-sample bound, yielding a more detailed characterisation of the Type~II error probability in certain finite-sample regimes.

The following theorem, whose proof is deferred to Appendix~\ref{app:thm:renyi_converse_bound_via_vr}, presents the converse result.

\begin{theorem}\label{thm:renyi_converse_bound_via_vr}
 Let $P_1^n$ and $P_0^n$ be mutually absolutely continuous. Then, for any $\varepsilon\in(0,1)$,

 \begin{align}\label{eq:renyi_converse_bound_via_vr}
     \beta_n(\varepsilon)&\geq\max\Bigg\{\sup_{\lambda>1,c>0} \frac{e^{(\lambda-1)c}-e^{\frac{\lambda-1}{\lambda}nD_\lambda(P_1\|P_0)}\left(1-\varepsilon+\varepsilon e^{\lambda c}\right)^{\frac{\lambda-1}{\lambda}}}{e^{(\lambda-1)c}-1},\nonumber\\&\hspace{10em}\sup_{\lambda>1,c>0} \frac{e^{-nD_\lambda(P_0\|P_1)}\left(\varepsilon +(1-\varepsilon)e^{(\lambda-1)c}\right)^{\frac{\lambda}{\lambda-1}}-1}{e^{\lambda c}-1}\Bigg\}.
 \end{align}
where each supremum is restricted to orders for which the displayed R\'enyi divergence is finite.
\end{theorem}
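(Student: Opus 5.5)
The plan is to plug a scaled indicator of the test's decision region into the variational formula \eqref{eq:Renyi_var_formula}, once for each direction of the divergence, and then use tensorisation \eqref{eq:renyi_tensorisation}. The first step removes randomisation. Any randomised test $\phi$ can be realised as a deterministic test on the enlarged space $\mathcal{X}^n\times[0,1]$: draw $U\sim\mathrm{Unif}[0,1]$ independently and take the rejection region $A=\{(\mathbf{x},u):u<\phi(\mathbf{x})\}$. Under $P_j^n\otimes\mathrm{Unif}$ this has the same Type~I and Type~II errors as $\phi$. Since $D_\lambda(P\otimes R\|Q\otimes R)=D_\lambda(P\|Q)$, it suffices to prove both bounds for deterministic tests with $P_0(A)=\alpha\le\varepsilon$ and $P_1(A^c)=\beta$. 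Here $P_j$ stands for the (enlarged) $n$-fold measures, and \eqref{eq:renyi_tensorisation} replaces $D_\lambda(P_1^n\|P_0^n)$ by $nD_\lambda(P_1\|P_0)$. Taking the infimum over tests at the end gives the bound on $\beta_n(\varepsilon)$.

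\emph{First bound.} Apply \eqref{eq:Renyi_var_formula} with $P=P_1^n$, $Q=P_0^n$ and the bounded function $g=c\,\mathbbm{1}_A$, $c>0$. Then
\begin{equation*}
\E_{P_1}[e^{(\lambda-1)g}]=\beta+(1-\beta)e^{(\lambda-1)c},\qquad
\E_{P_0}[e^{\lambda g}]=1-\alpha+\alpha e^{\lambda c}\le 1-\varepsilon+\varepsilon e^{\lambda c}.
\end{equation*}
The inequality uses $e^{\lambda c}>1$ and $\alpha\le\varepsilon$; it enters with a minus sign, so the lower bound is preserved. Multiplying by $\lambda-1>0$ and exponentiating gives
\begin{equation*}
e^{(\lambda-1)c}-\beta\bigl(e^{(\lambda-1)c}-1\bigr)\le e^{\frac{\lambda-1}{\lambda}nD_\lambda(P_1\|P_0)}\bigl(1-\varepsilon+\varepsilon e^{\lambda c}\bigr)^{\frac{\lambda-1}{\lambda}}.
\end{equation*}
Dividing by $e^{(\lambda-1)c}-1>0$ yields the first expression.

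\emph{Second bound.} Swap the roles: $P=P_0^n$, $Q=P_1^n$ and $g=c\,\mathbbm{1}_{A^c}$. Now
\begin{equation*}
\E_{P_0}[e^{(\lambda-1)g}]=\alpha+(1-\alpha)e^{(\lambda-1)c}\ge\varepsilon+(1-\varepsilon)e^{(\lambda-1)c},
\end{equation*}
because this quantity decreases in $\alpha$. It enters with a plus sign, so the bound again goes the right way. Also $\E_{P_1}[e^{\lambda g}]=1+\beta(e^{\lambda c}-1)$. Rearranging gives
\begin{equation*}
1+\beta(e^{\lambda c}-1)\ge e^{-nD_\lambda(P_0\|P_1)}\bigl(\varepsilon+(1-\varepsilon)e^{(\lambda-1)c}\bigr)^{\frac{\lambda}{\lambda-1}},
\end{equation*}
which is the second expression. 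Taking suprema over $\lambda>1$ and $c>0$ for which the divergence is finite, and then the maximum of the two, finishes the argument.

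The main points to check are, first, the monotonicity directions when replacing $\alpha$ by $\varepsilon$ (these are why $g$ is supported on $A$ in one bound and on $A^c$ in the other), and second, that the randomisation reduction is legitimate. I would handle the randomisation via the product-with-uniform extension rather than plugging $g=c\phi$ directly. With $g=c\phi$, convexity of the exponential only gives upper bounds on $\E[e^{t\phi}]$, which is the wrong direction for the term entering with a plus sign. Mutual absolute continuity ensures that the variational formula and the divergences are well defined, and bounded indicators lie in the admissible class $\Gamma$.
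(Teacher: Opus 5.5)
Your proposal is correct and follows essentially the same route as the paper. Both use the same uniform-randomisation lift to a set $A$ and insert $g=c\mathbbm{1}_A$ into \eqref{eq:Renyi_var_formula} for $D_\lambda(P_1^n\|P_0^n)$ and $g=c\mathbbm{1}_{A^c}$ for $D_\lambda(P_0^n\|P_1^n)$; then tensorisation and the monotonicity-in-$\alpha_n$ step give both bounds. The differences are cosmetic: you substitute $\varepsilon$ for $\alpha_n$ before rearranging rather than after, and you bound every feasible test instead of fixing an optimal one, which avoids relying on the infimum being attained.
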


Optimising over $c>0$ allows us to derive a semi-closed form expression for the bound \eqref{eq:renyi_converse_bound_via_vr}, as shown in Corollary \ref{cor:semi-closed_form}. The proof is in Appendix \ref{app:semi-closed_form}.

\begin{corollary}\label{cor:semi-closed_form}
    Let $P_1^n$ and $P_0^n$ be mutually absolutely continuous, and define
    \begin{equation*}
        \Lambda_R:=\{\lambda>1:D_\lambda(P_1\|P_0)<\infty\},
        \qquad
        \Lambda_F:=\{\lambda>1:D_\lambda(P_0\|P_1)<\infty\}.
    \end{equation*}
      One has that for any $\varepsilon\in(0,1)$,
    \begin{equation}
        \beta_n(\varepsilon)\geq \max\Bigg\{\sup_{\lambda\in\Lambda_R}R(\lambda), \sup_{\lambda\in\Lambda_F}F(\lambda)\Bigg\}.
    \end{equation}
    Where, for any $\varepsilon\in(0,1)$, and $\lambda\in \Lambda_R$, we have
    \begin{align}
        R(\lambda)&:=\sup_{c>0}\frac{e^{(\lambda-1)c}-e^{\frac{\lambda-1}{\lambda}nD_\lambda(P_1\|P_0)}(1-\varepsilon+\varepsilon e^{\lambda c})^{\frac{\lambda-1}{\lambda}}}{e^{(\lambda-1)c}-1}\nonumber\\
        &= \begin{cases}
            1-\varepsilon\quad&\text{if } nD_\lambda(P_1\|P_0)=0,\\
            \frac{1-\varepsilon}{1-\varepsilon+\varepsilon e^{c_{\lambda, R}}}\quad&\text{if } 0<nD_{\lambda}(P_1\|P_0)<\log(1/\varepsilon),\\
            1-\left(\varepsilon e^{nD_\lambda(P_1\|P_0)}\right)^{\frac{\lambda-1}{\lambda}}\quad&\text{if } nD_\lambda(P_1\|P_0)\geq \log(1/\varepsilon),
        \end{cases}
    \end{align}
    where $c_{\lambda, R}>0$ is the unique positive solution to
    \begin{equation*}
        1-\varepsilon+\varepsilon e^{\lambda c_{\lambda, R}} = e^{(\lambda-1)nD_\lambda(P_1\|P_0)}\left(1-\varepsilon+\varepsilon e^{c_{\lambda, R}}\right)^\lambda.
    \end{equation*}
    Similarly, for any $\varepsilon\in(0,1)$ and $\lambda\in\Lambda_F$,  
    \begin{align}
        F(\lambda)&:=\sup_{c>0} \frac{e^{-nD_\lambda(P_0\|P_1)}\left(\varepsilon +(1-\varepsilon)e^{(\lambda-1)c}\right)^{\frac{\lambda}{\lambda-1}}-1}{e^{\lambda c}-1}=\begin{cases}
            1-\varepsilon\quad&\text{if } nD_\lambda(P_0\|P_1)=0,\\
            \frac{1-\varepsilon}{1-\varepsilon+\varepsilon e^{c_{\lambda,F}}}\quad&\text{if } nD_\lambda(P_0\|P_1)>0,
        \end{cases}
    \end{align}
    where $c_{\lambda,F}>0$ is the unique positive solution to
    \begin{equation*}
        \left(\varepsilon+(1-\varepsilon)e^{(\lambda-1)c_{\lambda, F}}\right)^{\frac{1}{\lambda-1}}\left(\varepsilon+(1-\varepsilon)e^{-c_{\lambda, F}}\right) = e^{nD_\lambda(P_0\|P_1)}.
    \end{equation*}
  
\end{corollary}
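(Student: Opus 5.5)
The plan is to treat the two suprema separately and, in each, substitute $u:=e^{c}\in(1,\infty)$. This turns the inner optimisation into a one-variable calculus problem. Write $D:=nD_\lambda(P_1\|P_0)$, $\theta:=\frac{\lambda-1}{\lambda}\in(0,1)$ and $S(u):=1-\varepsilon+\varepsilon u^{\lambda}$. The objective in $R(\lambda)$ is then
\[
1-g(u),\qquad g(u):=\frac{e^{\theta D}S(u)^{\theta}-1}{u^{\lambda-1}-1},
\]
so computing $R(\lambda)$ amounts to computing $\inf_{u>1}g(u)$.

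\textbf{Case $D=0$.} Concavity of $t\mapsto t^{\theta}$ gives $S(u)^\theta\ge (1-\varepsilon)+\varepsilon u^{\lambda\theta}=(1-\varepsilon)+\varepsilon u^{\lambda-1}$. Hence $g(u)\ge\varepsilon$. A first-order expansion as $u\downarrow1$ shows that $g(u)\to\varepsilon$, so $R(\lambda)=1-\varepsilon$.

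\textbf{Case $D>0$: first-order condition.} Set $g'(u)=0$ and use $\theta\lambda=\lambda-1$. I expect the stationarity condition to reduce to $g(u)=e^{\theta D}S^{\theta-1}\varepsilon u$. Combining this with the definition of $g$, stationarity should be equivalent to
\[
e^{\theta D}S(u)^{\theta-1}=\frac{1}{1-\varepsilon+\varepsilon u}.
\]
Since $(\theta-1)\lambda=-1$, raising this identity to the power $\lambda$ gives exactly the defining equation of $c_{\lambda,R}$. At such a point, $g=\varepsilon u/(1-\varepsilon+\varepsilon u)$, so $1-g=\frac{1-\varepsilon}{1-\varepsilon+\varepsilon e^{c_{\lambda,R}}}$.

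\textbf{Existence and uniqueness of the root.} Consider
\[
h(u):=\log S(u)-\lambda\log(1-\varepsilon+\varepsilon u),
\]
which we want to equal $(\lambda-1)D$. We have $h(1)=0$ and $h(u)\to(\lambda-1)\log(1/\varepsilon)$ as $u\to\infty$. Moreover, $h'(u)>0$ is equivalent to $u^{\lambda-1}(1-\varepsilon+\varepsilon u)>S(u)$, that is, to $(1-\varepsilon)(u^{\lambda-1}-1)>0$, which holds for $u>1$. So $h$ is strictly increasing, and a unique root in $(1,\infty)$ exists if and only if $0<D<\log(1/\varepsilon)$.

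\textbf{Identifying the infimum.} To confirm that the stationary point is the global infimum, and to handle the case $D\ge\log(1/\varepsilon)$, I will compare signs. The sign of $g'$ should match the sign of $(\lambda-1)D-h(u)$, up to a positive factor, after clearing denominators. Then:
\begin{itemize}
\item if $D\ge\log(1/\varepsilon)$, $g$ is nonincreasing on $(1,\infty)$, so its infimum is the limit as $u\to\infty$, namely $e^{\theta D}\varepsilon^{\theta}$; this gives the third line, $1-(\varepsilon e^{D})^{\theta}$;
\item if $0<D<\log(1/\varepsilon)$, $g$ decreases up to the root of $h$ and increases afterwards, so the root is the global minimiser.
\end{itemize}

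\textbf{The second supremum.} For $F(\lambda)$ I follow the same template. Put $D':=nD_\lambda(P_0\|P_1)$ and $T(u):=\varepsilon+(1-\varepsilon)u^{\lambda-1}$, and maximise $\bigl(e^{-D'}T(u)^{\lambda/(\lambda-1)}-1\bigr)/(u^{\lambda}-1)$.
\begin{itemize}
\item For $D'=0$, convexity of $t\mapsto t^{\lambda/(\lambda-1)}$ gives the bound $\le 1-\varepsilon$, and the limit as $u\downarrow1$ attains it.
\item For $D'>0$, stationarity should again collapse to the stated equation for $c_{\lambda,F}$, with value $\frac{1-\varepsilon}{1-\varepsilon+\varepsilon u}$.
\item The relevant monotone function is $k(u):=\frac{1}{\lambda-1}\log T(u)+\log\bigl(\varepsilon+(1-\varepsilon)u^{-1}\bigr)$. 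It satisfies $k(1)=0$ and, as $u\to\infty$, $k(u)\to\infty$, since it behaves like $\log u$. So a unique root exists for every $D'>0$, which explains why this expression has no third case.
\end{itemize}

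\textbf{Conclusion and main obstacle.} Taking suprema over $\Lambda_R$ and $\Lambda_F$ and invoking Theorem~\ref{thm:renyi_converse_bound_via_vr} finishes the proof. The main obstacle is the algebra showing that $g'=0$ is exactly equivalent to the clean relation between $e^{\theta D}S^{\theta-1}$ and $1-\varepsilon+\varepsilon u$. The same applies to proving that the sign of $g'$ is governed by the monotone function $h$ (respectively $k$), which is what gives global rather than merely local optimality. I would attempt this by writing $g'$ in terms of $S$ and $u$ and factoring out positive terms until the comparison $h(u)$ versus $(\lambda-1)D$ appears.
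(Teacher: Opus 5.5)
Your plan is essentially the paper's proof written in the variable $u=e^{c}$. Your $h$ is $\lambda\log\psi$ and your $k$ is $\log\Psi$ in the paper's notation, the case split and the evaluation at the stationary point agree up to bookkeeping, and your Jensen argument for $D=0$ (resp.\ $D'=0$) is only a variant of the paper's observation that the derivative has a fixed sign there.

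The one thing to fix is the sign in your ``identifying the infimum'' step. Clearing denominators gives
\[
g'(u)=\frac{(\lambda-1)u^{\lambda-2}}{(u^{\lambda-1}-1)^{2}}\Bigl[1-e^{\theta D}S(u)^{\theta-1}(1-\varepsilon+\varepsilon u)\Bigr]
=\frac{(\lambda-1)u^{\lambda-2}}{(u^{\lambda-1}-1)^{2}}\Bigl[1-e^{((\lambda-1)D-h(u))/\lambda}\Bigr],
\]
so $g'$ has the sign of $h(u)-(\lambda-1)D$. The sign you wrote, that of $(\lambda-1)D-h(u)$, belongs to the derivative of $1-g$. The corrected sign is exactly what produces the monotonicity conclusions you state. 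In the same way, the $F$-objective has derivative with the sign of $1-e^{k(u)-D'}$. The monotonicity of $k$ that you assert without proof follows from $k'(u)>0\iff\varepsilon(u^{\lambda}-1)>0$.
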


\begin{remark}\label{rem:DPI_recover}
    By considering the limit $c\to\infty$ in the two suprema in \eqref{eq:renyi_converse_bound_via_vr}, we obtain the following simpler bound:
    \begin{align}\label{eq:renyi_converse_bound_via_DPI}
      \beta_n(\varepsilon)\geq&  \max\Bigg\{1-\inf_{\lambda>1}\left(\varepsilon\, e^{nD_\lambda(P_1\| P_0)}\right)^{\frac{\lambda-1}{\lambda}}, \nonumber\sup_{\lambda>1}\left((1-\varepsilon)^{\frac{\lambda}{\lambda-1}}\,e^{-nD_\lambda(P_0\| P_1)}\right)\Bigg\}.
 \end{align}
 This recovers the converse induced by the data-processing inequality for R\'enyi divergence~\cite[Theorem 1]{bruno2026finite}.
The full converse in~\Cref{thm:renyi_converse_bound_via_vr} therefore includes the DPI bound and can be strictly tighter when its optimum occurs at finite $c$. The relevant calculation is given in Appendix~\ref{app:rem:DPI_recover}.
\end{remark}
We now shift our attention from converse bounds to achievability results. We first derive a finite-sample upper bound on the Type~II error probability of an LLRT with a prescribed threshold $\tau$, expressed in terms of the R\'enyi divergence of order $\lambda\in(0,1)$. The result is formalised in the following theorem, whose proof is deferred to Appendix~\ref{app:theorem:beta_upper_bound_renyi}.
\begin{theorem}\label{theorem:beta_upper_bound_renyi}
    Let $P_1^{n}$ be absolutely continuous with respect to $P_0^{n}$, and consider an LLRT $\phi:\mathcal{X}^n \to [0,1]$ with threshold $\tau\in \mathbb{R}$. Then, denoting by $\alpha_n(\phi)$ and $\beta_n(\phi)$ the Type~I and Type~II error probabilities of $\phi$, respectively, one has
    \begin{equation}\label{eq:beta_upper_bound_renyi}
        \beta_n(\phi) \leq \inf_{\lambda\in(0,1)} \frac{e^{(\lambda-1)nD_{\lambda}(P_1 \| P_0)} - \alpha_n(\phi) e^{\lambda \tau}}{e^{(\lambda-1)\tau}}.
    \end{equation}
\end{theorem}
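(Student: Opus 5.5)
The plan is to work directly with the likelihood ratio under $P_0^n$ and split the R\'enyi moment according to the decision regions of the LLRT. Write $L:=dP_1^n/dP_0^n$, which exists because $P_1^n\ll P_0^n$. Then $\imath^{(n)}_{1\|0}=\log L$ wherever $L>0$. For $\lambda\in(0,1)$, the definition of R\'enyi divergence and the tensorisation \eqref{eq:renyi_tensorisation} give
\begin{equation*}
e^{(\lambda-1)nD_\lambda(P_1\|P_0)}=e^{(\lambda-1)D_\lambda(P_1^n\|P_0^n)}=\E_{P_0^n}\bigl[L^\lambda\bigr].
\end{equation*}
This quantity lies in $[0,1]$ by Jensen's inequality, so no finiteness issue arises. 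I will prove the inequality for each fixed $\lambda\in(0,1)$ and then take the infimum.

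Since $\phi$ takes values in $[0,1]$, decompose
\begin{equation*}
\E_{P_0^n}[L^\lambda]=\E_{P_0^n}[L^\lambda\phi]+\E_{P_0^n}[L^\lambda(1-\phi)].
\end{equation*}
For the LLRT $\phi=\phi_{\tau,\eta}$ in \eqref{eq:randomised_llrt}, $\phi>0$ only where $\log L\ge\tau$, so $L^\lambda\ge e^{\lambda\tau}$ there. Hence the first term is at least $e^{\lambda\tau}\E_{P_0^n}[\phi]=e^{\lambda\tau}\alpha_n(\phi)$.

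For the second term, $1-\phi>0$ only where $\log L\le\tau$ (including points with $L=0$). At such points with $L>0$, write $L^\lambda=L\cdot L^{\lambda-1}$. Because $\lambda-1<0$ and $L\le e^\tau$, we get $L^{\lambda-1}\ge e^{(\lambda-1)\tau}$, so $L^\lambda\ge e^{(\lambda-1)\tau}L$. At points with $L=0$, both sides vanish, so the same inequality holds trivially. Therefore
\begin{equation*}
\E_{P_0^n}[L^\lambda(1-\phi)]\ge e^{(\lambda-1)\tau}\E_{P_0^n}[L(1-\phi)]=e^{(\lambda-1)\tau}\E_{P_1^n}[1-\phi]=e^{(\lambda-1)\tau}\beta_n(\phi).
\end{equation*}
The change of measure in the middle equality uses $P_1^n\ll P_0^n$.

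Combining the two bounds gives
\begin{equation*}
e^{(\lambda-1)nD_\lambda(P_1\|P_0)}\ge e^{\lambda\tau}\alpha_n(\phi)+e^{(\lambda-1)\tau}\beta_n(\phi).
\end{equation*}
Rearranging yields \eqref{eq:beta_upper_bound_renyi} for each $\lambda$, and taking the infimum over $\lambda\in(0,1)$ finishes the proof.

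The only delicate step is the second-term bound. It relies on the sign of $\lambda-1$, which is exactly why the achievability statement uses orders below one, in contrast with \Cref{thm:renyi_converse_bound_via_vr}. It also requires care on the set where $L=0$ and on the boundary set $\{\log L=\tau\}$. On that boundary set both bounds are used simultaneously, weighted by $\eta$ and $1-\eta$, and both hold with equality there, so randomisation causes no difficulty.
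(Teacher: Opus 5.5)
Your proof is correct and matches the paper's argument essentially line by line. Both write $e^{(\lambda-1)nD_\lambda(P_1\|P_0)}=\E_{P_0^n}[L^\lambda]$, split it along $\phi$ and $1-\phi$, use $L^\lambda\ge e^{\lambda\tau}$ where $\phi>0$ and $L^\lambda\ge e^{(\lambda-1)\tau}L$ where $1-\phi>0$, then change measure and rearrange. Your explicit handling of $\{L=0\}$ and of the threshold set is a welcome addition, and since it is pointwise it also covers the paper's slightly more general case where $\phi$ takes arbitrary values in $[0,1]$ on $\{\log L=\tau\}$, not just a constant $\eta$.
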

Although Theorem~\ref{theorem:beta_upper_bound_renyi} is stated for a fixed likelihood-ratio test, the Neyman--Pearson lemma allows it to be converted into a bound on $\beta_n(\varepsilon)$. Indeed, applying the theorem to a possibly randomised Neyman--Pearson test with Type~I error $\varepsilon$ gives the following result.
\begin{corollary}\label{cor:beta_optimal_upper_bound_renyi}
    Let $P_0$ and $P_1$ be mutually absolutely continuous. Then, for every $\varepsilon\in(0,1)$,
    \begin{equation}\label{eq:beta_optimal_upper_bound_renyi}
        \beta_n(\varepsilon)\leq\inf_{\lambda\in(0,1)}\lambda\left(\frac{1-\lambda}{\varepsilon}\right)^{\frac{1-\lambda}{\lambda}}\exp\left(-\frac{1-\lambda}{\lambda}nD_\lambda(P_1\|P_0)\right).
    \end{equation}
\end{corollary}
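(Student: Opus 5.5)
The approach is to apply Theorem~\ref{theorem:beta_upper_bound_renyi} to a Neyman--Pearson test whose Type~I error is exactly $\varepsilon$. The threshold $\tau$ of that test is determined implicitly by $\varepsilon$ and is not under our control. We therefore remove it by maximising the right-hand side of \eqref{eq:beta_upper_bound_renyi} over all $\tau\in\mathbb{R}$, which gives a bound depending only on $\varepsilon$, $\lambda$ and $nD_\lambda(P_1\|P_0)$.

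\textbf{Step 1 (a test with Type~I error exactly $\varepsilon$).} Since $P_0$ and $P_1$ are mutually absolutely continuous, $\imath_{1\|0}^{(n)}$ is finite $P_0^n$-a.s. Let $\tau$ be a suitable upper quantile of its law under $P_0^n$, so that
\[
P_0^n\bigl(\imath_{1\|0}^{(n)}>\tau\bigr)\le\varepsilon\le P_0^n\bigl(\imath_{1\|0}^{(n)}\ge\tau\bigr).
\]
Choose $\eta\in[0,1]$ so that the randomised test $\phi_{\tau,\eta}$ in \eqref{eq:randomised_llrt} satisfies $\alpha_n(\phi_{\tau,\eta})=\varepsilon$. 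By definition \eqref{eq:Beta_eps_def}, $\beta_n(\varepsilon)\le\beta_n(\phi_{\tau,\eta})$. Theorem~\ref{theorem:beta_upper_bound_renyi} is stated for LLRTs with values in $[0,1]$, so it covers this randomised test. Writing $A_\lambda:=e^{(\lambda-1)nD_\lambda(P_1\|P_0)}$, it gives, for every $\lambda\in(0,1)$,
\[
\beta_n(\varepsilon)\le A_\lambda e^{(1-\lambda)\tau}-\varepsilon e^{\tau}=:g_\lambda(\tau).
\]

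\textbf{Step 2 (eliminating $\tau$).} Because $\tau$ depends on the unknown law of the LLR, we bound $g_\lambda(\tau)\le\sup_{t\in\mathbb{R}}g_\lambda(t)$. For $\lambda\in(0,1)$ the function $g_\lambda$ is a difference of exponentials with $\varepsilon>0$. It tends to $0$ as $t\to-\infty$ and to $-\infty$ as $t\to+\infty$, and it has a single critical point. Setting $g_\lambda'(t)=0$ gives $(1-\lambda)A_\lambda e^{(1-\lambda)t}=\varepsilon e^{t}$, that is,
\[
e^{\lambda t^\ast}=\frac{(1-\lambda)A_\lambda}{\varepsilon}.
\]
Substituting back yields
\[
g_\lambda(t^\ast)=\lambda A_\lambda e^{(1-\lambda)t^\ast}=\lambda\left(\frac{1-\lambda}{\varepsilon}\right)^{\frac{1-\lambda}{\lambda}}A_\lambda^{1/\lambda}.
\]
Moreover,
\[
A_\lambda^{1/\lambda}=\exp\!\left(-\tfrac{1-\lambda}{\lambda}nD_\lambda(P_1\|P_0)\right),
\]
which is exactly the expression in \eqref{eq:beta_optimal_upper_bound_renyi}. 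Taking the infimum over $\lambda\in(0,1)$ concludes the proof. If $D_\lambda(P_1\|P_0)=\infty$ for some $\lambda$, the corresponding term is $0$, which is still a valid bound: in that case $A_\lambda=0$, and the right-hand side of Theorem~\ref{theorem:beta_upper_bound_renyi} is nonpositive for every $\tau$.

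\textbf{Main obstacle.} The only delicate point is Step~1. We must ensure that the Theorem applies with $\alpha_n(\phi)$ equal to $\varepsilon$ rather than merely at most $\varepsilon$. Note that the bound $g_\lambda$ is decreasing in $\alpha_n$, so exact equality is exactly what makes it as small as possible. This requires the randomisation on the atom $\{\imath_{1\|0}^{(n)}=\tau\}$. The remaining calculus in Step~2 is routine.
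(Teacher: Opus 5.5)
Your proposal is correct and is essentially the paper's proof: you apply Theorem~\ref{theorem:beta_upper_bound_renyi} to a randomised Neyman--Pearson test of size exactly $\varepsilon$ and remove the unknown threshold by maximising $A_\lambda e^{(1-\lambda)\tau}-\varepsilon e^{\tau}$ over it (the paper does this in the variable $s=e^{\tau}>0$ using strict concavity), which gives the same maximiser and the same value. Two small differences: your Step~1 uses only the feasibility of the constructed test, not its optimality, and the side case $D_\lambda(P_1\|P_0)=\infty$ never occurs for $\lambda\in(0,1)$ under mutual absolute continuity.
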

The proof is deferred to Appendix~\ref{app:cor:beta_optimal_upper_bound_renyi}. By the skew-symmetry of R\'enyi divergence, the bound admits an equivalent formulation in terms of $D_\lambda(P_0\|P_1)$. We retain the form involving $D_\lambda(P_1\|P_0)$, as it is the one naturally used to establish the achievability side of the phase transition in Theorem~\ref{th:phase_transition}.

In the next section, we combine the converse and achievability bounds to show how our finite-sample estimates recover the classical phase transition described by \cite{Blahut} and \cite{Han_strongconverse}. 
\section{Phase transition of the Optimal Error}\label{sec:phase_transition}
In this section, we investigate the strong converse regime, where the Type~I error probability is constrained to decay exponentially with the sample size at a rate $r>0$, i.e., $\alpha_n\leq e^{-nr}$. 

The asymptotic behaviour of the optimal Type~II error probability in this regime is well understood, with the sharp phase transition at $D(P_1\|P_0)$ established by Blahut~\cite{Blahut} and Han and Kobayashi~\cite{Han_strongconverse}. Here, leveraging the finite-sample converse and achievability
bounds derived above, we provide explicit exponential estimates for the
optimal Type~II error and recover the same phase transition as $n\to\infty$.
Thus, the threshold $D(P_1|P_0)$, separating the achievability and strong-converse regimes, emerges directly from our nonasymptotic bounds.
This result is formalised in the following theorem, whose proof is deferred to Appendix~\ref{app:th:phase_transition}.
\begin{theorem}\label{th:phase_transition}
    Let $P_1$ and $P_0$ be mutually absolutely continuous, and assume that $D_{1+s}(P_1\|P_0)<\infty$ for some $s>0$. Let $\varepsilon_n=e^{-nr}$ for $r>0$.
    Then the optimal Type~II error satisfies

    \begin{equation}
         \lim_{n\to \infty} \beta_n\left(\varepsilon_n\right) = \begin{cases}
             1 &\text{if } r>D(P_1\| P_0),\\
             0 &\text{if } r<D(P_1\| P_0).
         \end{cases}
    \end{equation}
    In particular, for every finite $n$, if $r>D(P_1\|P_0)$, then
        \begin{equation}\label{eq:r>}
        \beta_n(\varepsilon_n)\geq 1-\inf_{\lambda>1} e^{-\frac{\lambda-1}{\lambda}n\big(r-D_{\lambda}(P_1\|P_0)\big)},
        \end{equation}
        whereas, if $r<D(P_1\|P_0)$, then
        \begin{equation}\label{eq:r<}
            \beta_n(\varepsilon_n)\leq\inf_{\lambda\in(0,1)}e^{-\frac{1-\lambda}{\lambda}n\big(D_{\lambda}(P_1\|P_0)-r\big)}.
        \end{equation}
\end{theorem}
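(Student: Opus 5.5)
The plan is to derive the two finite-$n$ estimates \eqref{eq:r>} and \eqref{eq:r<} directly from the earlier results, and then pass to the limit using continuity of $D_\lambda(P_1\|P_0)$ at $\lambda=1$.

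\textbf{Strong-converse side.} Take the DPI form of Remark~\ref{rem:DPI_recover}, namely the first term of \eqref{eq:renyi_converse_bound_via_DPI}, which is the $c\to\infty$ limit of the first supremum in Theorem~\ref{thm:renyi_converse_bound_via_vr}. Substituting $\varepsilon_n=e^{-nr}$ gives
\begin{equation*}
\bigl(\varepsilon_n e^{nD_\lambda(P_1\|P_0)}\bigr)^{\frac{\lambda-1}{\lambda}}=e^{-\frac{\lambda-1}{\lambda}n(r-D_\lambda(P_1\|P_0))},
\end{equation*}
which yields \eqref{eq:r>}. Orders with $D_\lambda=\infty$ contribute the trivial value $-\infty$ and can be ignored. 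For the limit, assume $r>D(P_1\|P_0)$. Since $D_{1+s}(P_1\|P_0)<\infty$, the right continuity \eqref{eq:renyi_right_continuity} holds, and monotonicity of $D_\lambda$ in $\lambda$ makes $D_\lambda$ finite on $(1,1+s]$. So there is $\lambda_0\in(1,1+s]$ with $D_{\lambda_0}(P_1\|P_0)<r$. Fixing this $\lambda_0$ gives $\beta_n(\varepsilon_n)\ge 1-e^{-n\delta}$ with $\delta=\frac{\lambda_0-1}{\lambda_0}(r-D_{\lambda_0})>0$, and since $\beta_n\le1$ we get $\beta_n(\varepsilon_n)\to1$.

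\textbf{Achievability side.} Corollary~\ref{cor:beta_optimal_upper_bound_renyi} with $\varepsilon=e^{-nr}$ gives
\begin{equation*}
\beta_n(\varepsilon_n)\le\inf_{\lambda\in(0,1)}\lambda(1-\lambda)^{\frac{1-\lambda}{\lambda}}\,e^{-\frac{1-\lambda}{\lambda}n(D_\lambda(P_1\|P_0)-r)}.
\end{equation*}
The prefactor satisfies $\lambda(1-\lambda)^{(1-\lambda)/\lambda}\le1$ on $(0,1)$, because both factors lie in $[0,1]$. Dropping it yields \eqref{eq:r<}. For the limit, assume $r<D(P_1\|P_0)$. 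The left limit $D_\lambda\uparrow D$ as $\lambda\uparrow1$, which holds without assumptions, provides $\lambda_1<1$ with $D_{\lambda_1}(P_1\|P_0)>r$, and then the bound decays exponentially to zero. If $D(P_1\|P_0)=\infty$ the same argument works, choosing $\lambda_1$ with $D_{\lambda_1}>r$.

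\textbf{Main obstacle.} The delicate point is justifying the use of Corollary~\ref{cor:beta_optimal_upper_bound_renyi} and the Neyman--Pearson randomisation at exactly level $\varepsilon_n$. This is already handled by that corollary under mutual absolute continuity, so the remaining work is only the continuity argument in $\lambda$. The one-sided continuity at $\lambda\downarrow1$ is where the hypothesis $D_{1+s}<\infty$ is genuinely needed.
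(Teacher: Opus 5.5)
Your proposal is correct and follows the paper's proof essentially step for step. You use the $c\to\infty$ (data-processing) form of the R\'enyi converse with $\varepsilon_n=e^{-nr}$ for \eqref{eq:r>}, and Corollary~\ref{cor:beta_optimal_upper_bound_renyi} with the prefactor $\lambda(1-\lambda)^{(1-\lambda)/\lambda}\le 1$ dropped for \eqref{eq:r<}; then, as in the paper, right-continuity (from $D_{1+s}(P_1\|P_0)<\infty$) and unconditional left-continuity of $D_\lambda$ at $\lambda=1$ let you fix an order with a strictly positive exponent on each side of the threshold.
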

We observe that the unconditional left-continuity of $D_\lambda$ at one, together with the assumed right-continuity, makes the relevant exponent strictly positive on the respective sides of the threshold. More precisely, for $r>D(P_1\|P_0)$,
\begin{equation}\label{eq:strong_converse_exponent}
1-\beta_n(\varepsilon_n)
\leq
\exp\left(-n\sup_{\lambda>1}\left\{
\frac{\lambda-1}{\lambda}
\bigl(r-D_\lambda(P_1\|P_0)\bigr)\right\}\right),
\end{equation}
whereas for $r<D(P_1\|P_0)$,
\begin{equation}\label{eq:achievability_exponent}
\beta_n(\varepsilon_n)
\leq
\exp\left(-n\sup_{\lambda\in(0,1)}\left\{
\frac{1-\lambda}{\lambda}
\bigl(D_\lambda(P_1\|P_0)-r\bigr)\right\}\right).
\end{equation}
Thus, the above bounds provide a finite-sample counterpart to the asymptotic results of Blahut \cite{Blahut} and Han and Kobayashi \cite{Han_strongconverse}, quantifying the rate at which the Type~II error probability converges to 1 or vanishes. We note that the boundary case $r=D(P_1\|P_0)$ is not resolved by this argument. Moreover, exchanging the hypotheses gives the corresponding statement for the swapped testing problem with threshold $D(P_0\|P_1)$, provided the moment assumption is exchanged as well: $D_{1+s}(P_0\|P_1)<\infty$ for some $s>0$.

\section{Comparison with Other Divergences}\label{sec:comparisons}
In this section, we provide a numerical comparison across various regimes of the finite-sample R\'enyi converse established in Theorem \ref{thm:renyi_converse_bound_via_vr}.
The relevant converse families we consider differ along three axes: whether they describe the finite-sample trade-off exactly, whether they remain tractable under product measures, and whether they recover the strong-converse limit directly. Table~\ref{tab:bound_comparison} summarises these distinctions. Total variation and $E_\gamma$ are sharp for a fixed pair of product measures and perfectly characterise the Type~II error probabilities, but their lack of additive tensorisation makes their computation intractable as $n$ grows. On the other hand, elementary KL converses tensorise and thereby are easily computable but generally yield only a weak converse. Blowing-up and smoothing-out arguments recover the strong converse. Nevertheless, these approaches yield bounds that may be loose at moderate sample sizes. The Hellinger measures tensorise and are effective for sample-complexity control, but do not force $\beta_n(\varepsilon_n)\to1$ above the critical rate. By contrast, the converse induced by R\'enyi in Theorem~\ref{thm:renyi_converse_bound_via_vr} tensorises and directly establishes \(\beta_n(\varepsilon_n)\to1\) whenever the Type I exponent exceeds \(D(P_1\|P_0)\).

\begin{table}[t]
    \centering
    \caption{Qualitative properties of finite-sample converse methods.}
    \label{tab:bound_comparison}
    \renewcommand{\arraystretch}{1.15}
    \begin{tabular}{lccc}
        \hline
        Method & Tensorisation for product measures & Strong converse \\
        \hline
        $\mathrm{TV}$ / $E_\gamma$  & no & yes \\
        Elementary KL  & yes & no \\
        KL with Smoothing Out & -- & yes \\
        Hellinger  & yes & no \\
        R\'enyi (this work) & yes & yes \\
        \hline
    \end{tabular}
    \par\vspace{2pt}
    \begin{minipage}{0.82\linewidth}
    \end{minipage}
\end{table}

For comparison, we evaluate the R\'enyi converse alongside the Hellinger-$\lambda$ and KL-based converses summarised in Table~\ref{tab:converse_summary}. As shown in Appendix~\ref{app:f_div_framework}, these bounds can be derived within a common framework based on a restricted variational representation of $f$-divergences. In particular, we include strengthened KL bounds obtained via the smoothing-out method, a reverse-hypercontractive alternative to the blowing-up method that extends beyond finite alphabets and yields sharper second-order estimates under bounded likelihood-ratio conditions; their explicit form is given in Appendix~\ref{app:KL_based_bound}.

In the numerical comparison, we do not include the $E_\gamma$ converse, although it is also summarised in Table~\ref{tab:converse_summary}. This omission is due to the lack of the tensorisation property, which makes the $E_\gamma$ divergence computationally intractable for large $n$~\cite{bhattacharyya2025total}.

We consider two main settings. The first is a Bernoulli testing problem, with
$$
P_0=\mathrm{Bern}(1/2)
\quad\text{and}\quad
P_1=\mathrm{Bern}(1/2+\Delta),
$$
where $\Delta>0$ is varied. The second is a Gaussian testing problem, with
$$
P_0=\mathcal{N}(\mu,1)
\quad\text{and}\quad
P_1=\mathcal{N}(\mu+\Delta,1),
$$
where $\mu=2$ and $\Delta>0$ is again varied.

For both settings, we evaluate the converse bounds across three different regimes for the Type~I error constraint $\varepsilon_n$: a constant regime ($\varepsilon_n=c$), a linear-decay regime ($\varepsilon_n=1/n$), and an exponential-decay regime ($\varepsilon_n=e^{-nr}$). The corresponding numerical results for the Bernoulli and Gaussian settings are reported in Figures~\ref{fig:bernoulli_converse} and~\ref{fig:gaussian_converse}, respectively.

As illustrated by the figures, the R\'enyi converse accurately captures the strong converse behaviour in the exponential regime, even for moderate sample sizes. Moreover, in the constant and linear-decay regimes, the R\'enyi converse generally provides a tighter characterisation of the Type~II error probability than the KL- and Hellinger-$\lambda$-based converses. These numerical results highlight the advantage of the R\'enyi formulation in capturing the finite-sample behaviour of hypothesis testing in asymmetric regimes.

\begin{table}[t]
\centering
\caption{Converse bounds for $\beta_n(\varepsilon)$ with $\varepsilon\in(0,1)$.}
\label{tab:converse_summary}
\renewcommand{\arraystretch}{1.6}
\setlength{\tabcolsep}{10pt}
\small
\begin{tabular}{@{}l l@{}}
\toprule
\textbf{Measure} & \textbf{Converse} \\
\midrule

$E_\gamma$ divergence
&
$\displaystyle
\beta_n(\varepsilon)\geq\max\left\{
\begin{array}{l}
\displaystyle\sup_{\gamma>0}
\frac{1-\varepsilon-E_\gamma(P_0^n\|P_1^n)-(1-\gamma)_+}{\gamma},\\[2ex]
\displaystyle\sup_{\gamma>0}
\Bigl(1-\gamma\varepsilon-E_\gamma(P_1^n\|P_0^n)-(1-\gamma)_+\Bigr)
\end{array}
\right\}$
\\
\midrule

Total variation
&
$\displaystyle
\beta_n(\varepsilon)\geq 1-\varepsilon-\mathrm{TV}(P_0^n,P_1^n)$
\\
\midrule

Hellinger divergence
&
$\displaystyle
\varepsilon^\lambda\bigl(1-\beta_n(\varepsilon)\bigr)^{1-\lambda}
+(1-\varepsilon)^\lambda\beta_n(\varepsilon)^{1-\lambda}\geq h_\lambda(P_0^n,P_1^n),
\quad \lambda\in(0,1)$
\\
\midrule

KL divergence
&
$\displaystyle
\beta_n(\varepsilon)\geq\max\left\{
\begin{array}{l}
\displaystyle\exp\left\{-\frac{nD(P_0\|P_1)+\log 2}{1-\varepsilon}\right\},\\[2ex]
\displaystyle 1-\frac{nD(P_1\|P_0)+\log 2}{\log(1/\varepsilon)}
\end{array}
\right\}$
\\

\midrule
KL divergence with smoothing-out
&
$\displaystyle
\beta_n(\varepsilon)\geq\max\left\{
\begin{array}{l}
\displaystyle
\exp\Bigg\{-nD(P_0\|P_1)-2\sqrt{\log\frac{1}{1-\varepsilon}}\sqrt{n\left(\left\|\frac{dP_0}{dP_1}\right\|_{\infty}-1\right)}-\log\frac{1}{1-\varepsilon}\Bigg\},\\[2ex]
\displaystyle
1- \exp\left\{-\left(\left\|\frac{dP_1}{dP_0}\right\|_{\infty}-1\right)n\left(\sqrt{1-\frac{nD(P_1\|P_0)+\log \varepsilon}{\left(\left\|\frac{dP_1}{dP_0}\right\|_{\infty}-1\right)n}}-1\right)^2\right\}
\end{array}
\right\}
$
\\
\bottomrule
\end{tabular}
\end{table}

\begin{figure}[t] 
    \centering
    \begin{subfigure}{\textwidth}
        \centering
        \includegraphics[width=\textwidth]{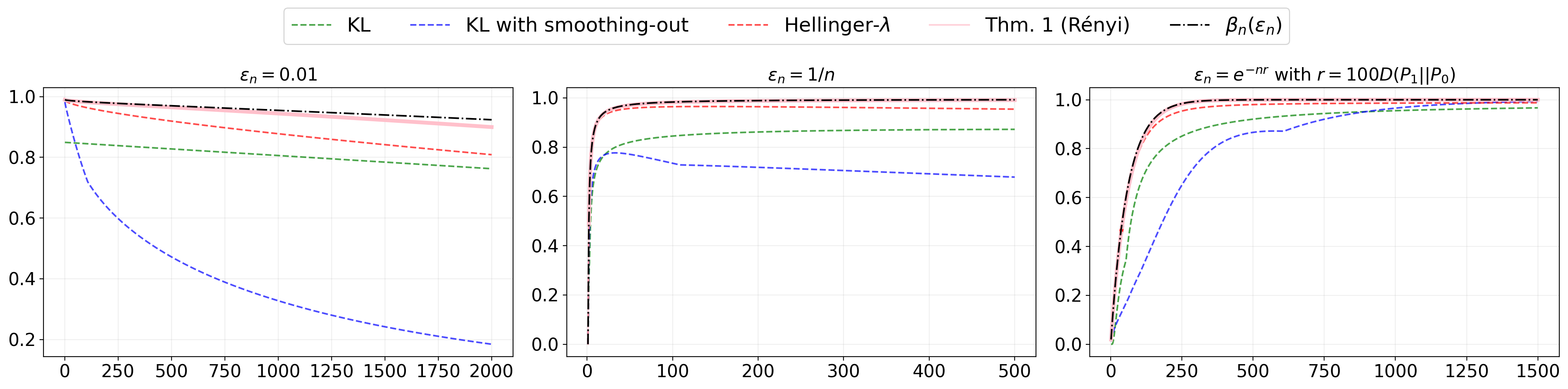}
    \end{subfigure}
    
    \par
    \par
    
    \begin{subfigure}{\textwidth}
        \centering
        \includegraphics[width=\textwidth]{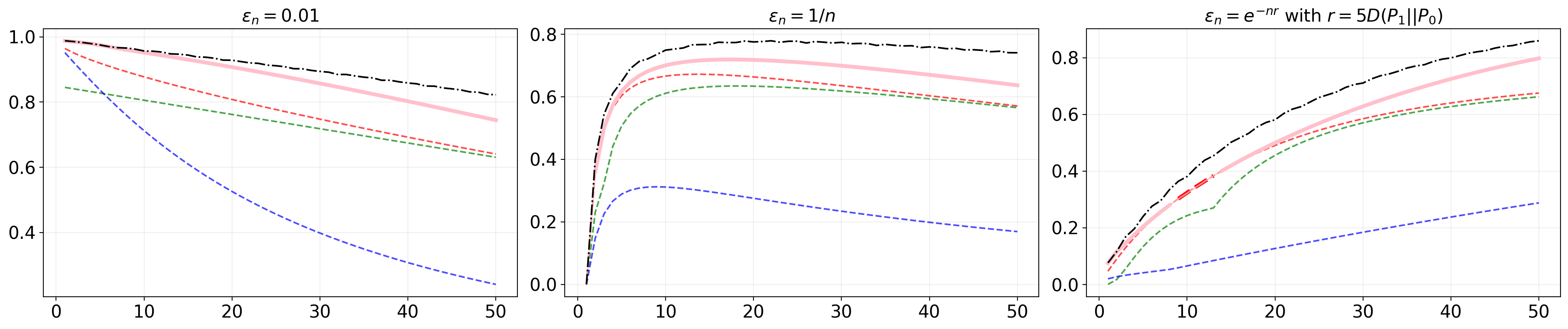}
    \end{subfigure}
    
    \caption{Comparison of the finite-sample 
    converse bounds for Bernoulli testing ($P_0=\text{Bern}(1/2)$ vs $P_1=\text{Bern}(1/2+\Delta)$) for  $\Delta\in\{0.01,0.1\}$ (top and bottom rows, respectively), and under three Type I error ($\varepsilon_n$) regimes: constant (left column), linear (middle column), and exponential (right column). The $x$-axis denotes the sample size $n$, and the $y$-axis the Type II error probability. Solid lines indicate that the corresponding lower-bound is the largest. All plotted bounds have been optimised whenever possible.}
    \label{fig:bernoulli_converse}
\end{figure}

\begin{figure}[t]
    \centering
    
    \begin{subfigure}{\textwidth}
        \centering
        \includegraphics[width=\textwidth]{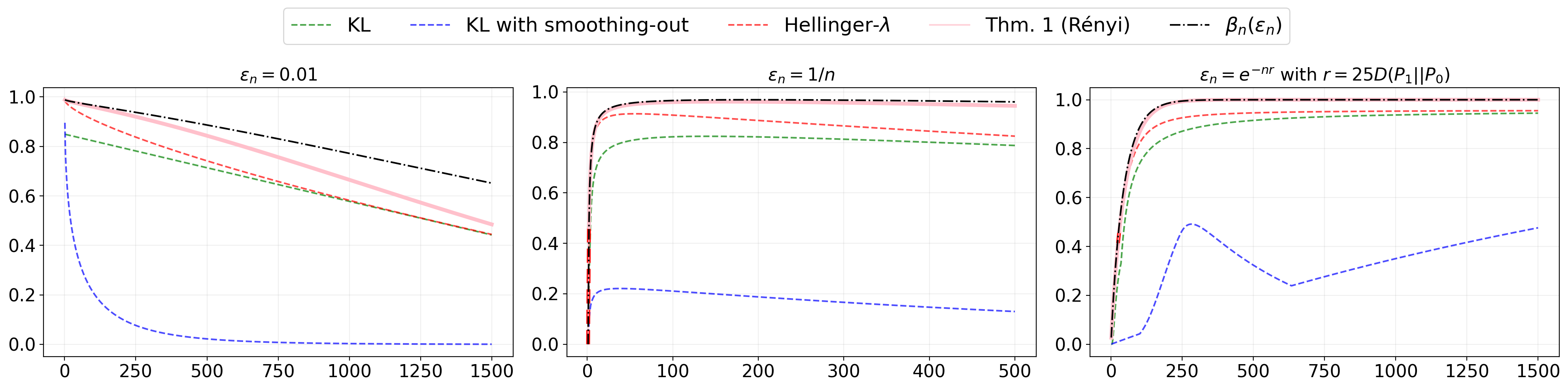}
    \end{subfigure}

     \par
    \par
    
    \begin{subfigure}{\textwidth}
        \centering
        \includegraphics[width=\textwidth]{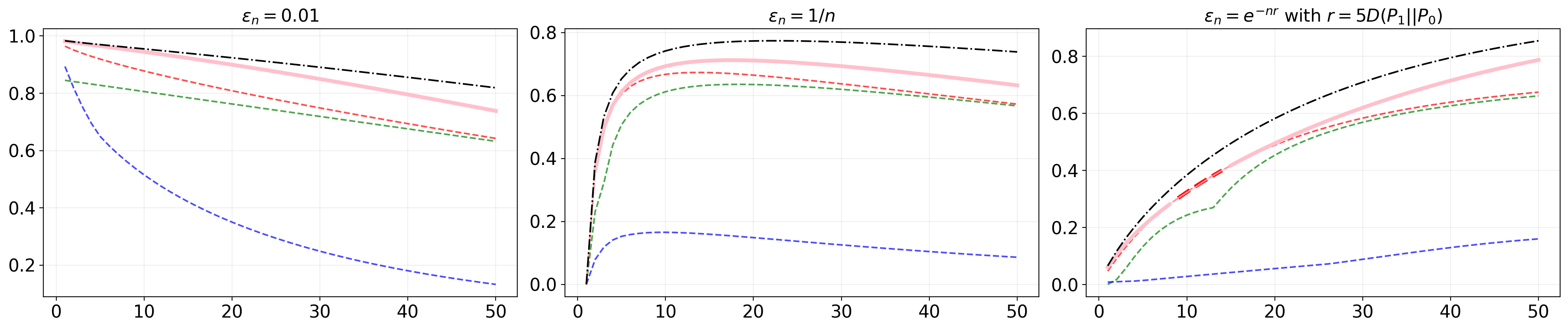}
    \end{subfigure}

    \caption{Comparison of the finite-sample 
    converse bounds for Gaussian testing ($P_0=\mathcal{N}(2,1)$ vs $P_1=\mathcal{N}(2+\Delta, 1)$) for  $\Delta\in\{0.05,0.2\}$ (top and bottom rows, respectively), and under three Type I error ($\varepsilon_n$) regimes: constant (left column), linear (middle column), and exponential (right column). The $x$-axis denotes the sample size $n$, and the $y$-axis the Type II error probability. Solid lines indicate that the corresponding lower-bound is the largest. All plotted bounds have been optimised whenever possible.}
    \label{fig:gaussian_converse}
\end{figure}

\remove{
Figure~\ref{fig:strong_converse} illustrates the finite-$n$ behaviour for Bernoulli testing. We compare the exact Neyman--Pearson tradeoff with the R\'enyi converse and representative forward- and reverse-KL baselines, including the smoothing-out estimates stated in Appendix~\ref{app:KL_based_bound}. In this example, the elementary KL bounds are nontrivial but remain far below the exact Type~II error at moderate sample sizes, whereas the optimised R\'enyi bound follows the strong-converse transition more closely.
}
\remove{
\begin{figure}[t]

    \centering
    \IfFileExists{plots/strong_converse_comparison.png}{%
        \includegraphics[width=0.75\linewidth]{plots/strong_converse_comparison.png}%
    }{%
        \fbox{\parbox[c][1.4in][c]{0.7\linewidth}{\centering Missing figure asset: \texttt{plots/strong\_converse\_comparison.png}}}%
    }
    \caption{Converse bounds for $P_0=\mathrm{Bern}(1/2)$ and $P_1=\mathrm{Bern}(1/2+\Delta)$ with $\Delta=0.05$, under $\varepsilon_n=e^{-nr}$ and $r=5D(P_1\|P_0)$.}
    \label{fig:strong_converse}
\end{figure}
}

\section{Sample Complexity}\label{sec:sample_complexity}

The sample complexity is the minimum number of i.i.d. observations required to meet prescribed Type~I and Type~II error constraints. We obtain lower and upper bounds by inverting the finite-sample converse and achievability estimates, and then compare them with recent Hellinger-based results \cite{pensia2024sample,kazemi2025sample}.

For $\varepsilon,\delta\in(0,1)$, define
\begin{equation*}
n(\varepsilon,\delta)
=
\min\left\{
n\in\mathbb{N}:
\exists\,\phi:\mathcal{X}^n\rightarrow[0,1]
\text{ such that }
\E_{P_0^n}[\phi]\leq\varepsilon,\
\E_{P_1^n}[1-\phi]\leq\delta
\right\}.
\end{equation*}

The finite-sample converse in Theorem~\ref{thm:renyi_converse_bound_via_vr} can be inverted to lower-bound $n(\varepsilon,\delta)$. The proof is in Appendix~\ref{app:cor:lower_bound_sample_complexity}.

\begin{corollary}\label{cor:lower_bound_sample_complexity}
     Let $P_1$ and $P_0$ be distinct and mutually absolutely continuous. Then, for any $\varepsilon,\delta\in(0,1)$,
     \begin{align}\label{eq:lb_sample_complexity}
         n(\varepsilon,\delta)\geq \max \Bigg\{&\sup_{\lambda>1,c>0}\left\{\frac{1}{D_{\lambda}(P_1\|P_0)}\left[\frac{\lambda}{\lambda-1}\log((1-\delta)e^{(\lambda-1)c}+\delta)-\log(\varepsilon e^{\lambda c}+1-\varepsilon)\right]\right\},\nonumber\\
         &\sup_{\lambda>1,c>0}\left\{\frac{1}{D_{\lambda}(P_0\|P_1)}\left[\frac{\lambda}{\lambda-1}\log((1-\varepsilon)e^{(\lambda-1)c}+\varepsilon)-\log(\delta e^{\lambda c}+1-\delta)\right]\right\}\Bigg\}.
     \end{align}
     where each supremum is restricted to orders for which the divergence in the denominator is finite and strictly positive.
     Furthermore, when $\varepsilon+\delta<1$, optimising over $c$ yields
     \begin{align}\label{eq:lb_sample_complexity_opt}
          &n(\varepsilon,\delta)\geq\\
          &\max\Bigg\{\sup_{\lambda>1}\Bigg\{\frac{1}{D_{\lambda}(P_1\|P_0)}\Bigg[\frac{\lambda}{\lambda-1}\log\left((1-\delta)^\lambda\left(\frac{1-\varepsilon}{\varepsilon\delta}\right)^{\lambda-1}\!\!\!+\delta\right)\nonumber-\log\left(\varepsilon^{1-\lambda} \left(\frac{(1-\varepsilon)(1-\delta)}{\delta}\right)^\lambda\!\!+1-\varepsilon\right)\Bigg]\Bigg\},\nonumber\\
          &\hspace{2em}\sup_{\lambda>1}\Bigg\{\frac{1}{D_{\lambda}(P_0\|P_1)}\Bigg[\frac{\lambda}{\lambda-1}\log\left((1-\varepsilon)^\lambda\left(\frac{1-\delta}{\varepsilon\delta}\right)^{\lambda-1}\!\!\!+\varepsilon\right)\nonumber-\log\left(\delta^{1-\lambda} \left(\frac{(1-\varepsilon)(1-\delta)}{\varepsilon}\right)^\lambda\!\!+1-\delta\right)\Bigg]\Bigg\}\Bigg\}.
     \end{align}
     In \eqref{eq:lb_sample_complexity_opt}, each supremum is restricted to orders $\lambda$ for which its corresponding R\'enyi divergence $D_{\lambda}(P_1\|P_0)$ or $D_{\lambda}(P_0\|P_1)$ is finite and strictly positive.
\end{corollary}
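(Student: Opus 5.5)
The plan is to turn the converse of Theorem~\ref{thm:renyi_converse_bound_via_vr} into a constraint on $n$ by rearranging it algebraically. Suppose $n$ is feasible. Then some test $\phi$ has $\E_{P_0^n}[\phi]\le\varepsilon$ and $\E_{P_1^n}[1-\phi]\le\delta$, so by definition \eqref{eq:Beta_eps_def} we get $\beta_n(\varepsilon)\le\delta$. Combining this with \eqref{eq:renyi_converse_bound_via_vr}, for every admissible $\lambda>1$ and every $c>0$ we obtain
\begin{equation*}
\delta\ \ge\ \frac{e^{(\lambda-1)c}-e^{\frac{\lambda-1}{\lambda}nD_\lambda(P_1\|P_0)}\left(1-\varepsilon+\varepsilon e^{\lambda c}\right)^{\frac{\lambda-1}{\lambda}}}{e^{(\lambda-1)c}-1}
\end{equation*}
and the analogous inequality from the second supremum. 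Since $D_\lambda(P_1^n\|P_0^n)=nD_\lambda(P_1\|P_0)$ by \eqref{eq:renyi_tensorisation}, these inequalities are already written in terms of the one-sample divergences.

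\textbf{First family.} We have $e^{(\lambda-1)c}-1>0$ because $\lambda>1$ and $c>0$. Multiplying through and regrouping gives
\begin{equation*}
e^{\frac{\lambda-1}{\lambda}nD_\lambda(P_1\|P_0)}\left(1-\varepsilon+\varepsilon e^{\lambda c}\right)^{\frac{\lambda-1}{\lambda}}\ \ge\ (1-\delta)e^{(\lambda-1)c}+\delta .
\end{equation*}
Both sides are positive. Taking logarithms and multiplying by $\lambda/(\lambda-1)>0$ yields
\begin{equation*}
nD_\lambda(P_1\|P_0)\ \ge\ \frac{\lambda}{\lambda-1}\log\bigl((1-\delta)e^{(\lambda-1)c}+\delta\bigr)-\log\bigl(\varepsilon e^{\lambda c}+1-\varepsilon\bigr).
\end{equation*}
Dividing by $D_\lambda(P_1\|P_0)$ requires it to be finite and strictly positive, which is why the supremum is restricted to such orders. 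Distinctness of $P_0$ and $P_1$ guarantees that these orders exist. Taking the supremum over $\lambda$ and $c$ gives the first term of \eqref{eq:lb_sample_complexity}.

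\textbf{Second family.} The second bound gives $\delta(e^{\lambda c}-1)\ge e^{-nD_\lambda(P_0\|P_1)}\bigl(\varepsilon+(1-\varepsilon)e^{(\lambda-1)c}\bigr)^{\lambda/(\lambda-1)}-1$. Rearranging,
\begin{equation*}
e^{nD_\lambda(P_0\|P_1)}\bigl(\delta e^{\lambda c}+1-\delta\bigr)\ \ge\ \bigl(\varepsilon+(1-\varepsilon)e^{(\lambda-1)c}\bigr)^{\frac{\lambda}{\lambda-1}} .
\end{equation*}
Taking logarithms and dividing by $D_\lambda(P_0\|P_1)$ gives the second term.

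\textbf{Optimised form \eqref{eq:lb_sample_complexity_opt}.} Each supremum over $c$ is at least the value at any particular $c>0$, so it suffices to exhibit a good admissible $c$; showing it is the exact maximiser is not needed. Write $a=e^c$ and differentiate the first bracket in $c$. The stationarity condition
\begin{equation*}
\frac{(1-\delta)a^{\lambda-1}}{(1-\delta)a^{\lambda-1}+\delta}=\frac{\varepsilon a^{\lambda}}{\varepsilon a^\lambda+1-\varepsilon}
\end{equation*}
simplifies to $(1-\delta)(1-\varepsilon)a^{\lambda-1}=\varepsilon\delta a^{\lambda}$. Hence
\begin{equation*}
a=\frac{(1-\varepsilon)(1-\delta)}{\varepsilon\delta}.
\end{equation*}
This $a$ exceeds $1$ (i.e.\ $c>0$) exactly when $\varepsilon+\delta<1$, which is where that hypothesis enters. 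Substituting gives
\begin{equation*}
(1-\delta)a^{\lambda-1}=(1-\delta)^\lambda\left(\frac{1-\varepsilon}{\varepsilon\delta}\right)^{\lambda-1},\qquad \varepsilon a^{\lambda}=\varepsilon^{1-\lambda}\left(\frac{(1-\varepsilon)(1-\delta)}{\delta}\right)^{\lambda},
\end{equation*}
which are exactly the expressions in the first line of \eqref{eq:lb_sample_complexity_opt}. The second bracket is the first with $\varepsilon$ and $\delta$ interchanged. The same value of $a$, which is symmetric in $(\varepsilon,\delta)$, therefore produces the second line.

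The only delicate points are bookkeeping. We must track the sign of every factor we multiply or divide by, namely $e^{(\lambda-1)c}-1$, $\lambda/(\lambda-1)$ and the divergences. We must also check that the chosen $c$ is admissible, which is precisely what needs $\varepsilon+\delta<1$. I do not expect a genuine obstacle beyond these.
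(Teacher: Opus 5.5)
Your proposal is correct and follows essentially the same route as the paper. The only differences are that you substitute $\delta$ for $\beta_n(\varepsilon)$ before rearranging, rather than rearranging in terms of $\beta_\star$ and then invoking monotonicity, and that for \eqref{eq:lb_sample_complexity_opt} you just evaluate at the stationary point $c^*=\log\bigl(\frac{(1-\varepsilon)(1-\delta)}{\varepsilon\delta}\bigr)$, which suffices for the inequality without the paper's sign analysis showing it is the global maximiser. One small inaccuracy: distinctness only guarantees $D_\lambda>0$, not finiteness for some $\lambda>1$, so a restricted supremum may be empty and that bound vacuous, but this does not affect your argument.
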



The upper bound on $\beta_n(\varepsilon)$ in Corollary~\ref{cor:beta_optimal_upper_bound_renyi} can be inverted to obtain a corresponding upper bound on the sample complexity. The proof is deferred to Appendix~\ref{app:cor:upper_bound_sample_complexity}.
\begin{corollary}\label{cor:upper_bound_sample_complexity}
    Let $P_0$ and $P_1$ be distinct and mutually absolutely continuous. Then, for any $\varepsilon,\delta\in(0,1)$ such that $\varepsilon+\delta<1$,
    \begin{equation}\label{eq:ub_sample_complexity}
        n(\varepsilon,\delta)\leq\inf_{\lambda\in(0,1)}\left\lceil\frac{\log\left(\frac{1-\lambda}{\varepsilon}\right)+\frac{\lambda}{1-\lambda}\log\left(\frac{\lambda}{\delta}\right)}{D_\lambda(P_1\|P_0)}\right\rceil.
    \end{equation}
\end{corollary}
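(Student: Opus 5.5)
We invert Corollary~\ref{cor:beta_optimal_upper_bound_renyi}. By definition, $n(\varepsilon,\delta)\le n$ as soon as $\beta_n(\varepsilon)\le\delta$. The Neyman--Pearson test attaining $\beta_n(\varepsilon)$ has Type~I error at most $\varepsilon$, so it witnesses the existence of the required $\phi$. It therefore suffices to find, for each fixed $\lambda\in(0,1)$, an explicit $n$ such that the right-hand side of \eqref{eq:beta_optimal_upper_bound_renyi} evaluated at that $\lambda$ is at most $\delta$. Taking the best $\lambda$ then gives the infimum in \eqref{eq:ub_sample_complexity}, because each $\lambda$ gives a valid upper bound on the integer $n(\varepsilon,\delta)$.

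Fix $\lambda\in(0,1)$. We need
\begin{equation*}
\lambda\left(\frac{1-\lambda}{\varepsilon}\right)^{\frac{1-\lambda}{\lambda}}\exp\left(-\frac{1-\lambda}{\lambda}nD_\lambda(P_1\|P_0)\right)\le\delta .
\end{equation*}
Taking logarithms and multiplying by $\lambda/(1-\lambda)>0$, this is equivalent to
\begin{equation*}
nD_\lambda(P_1\|P_0)\ge \log\frac{1-\lambda}{\varepsilon}+\frac{\lambda}{1-\lambda}\log\frac{\lambda}{\delta}.
\end{equation*}
Since $P_0\neq P_1$ and they are mutually absolutely continuous, we have $0<D_\lambda(P_1\|P_0)<\infty$ for $\lambda\in(0,1)$. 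Positivity holds because Rényi divergence of positive order vanishes only for equal measures. Finiteness holds because for $\lambda<1$ the integral $\int p_1^\lambda p_0^{1-\lambda}\,d\mu$ is at most $1$ by Hölder's inequality, and it is positive by mutual absolute continuity. We may therefore divide by $D_\lambda(P_1\|P_0)$. The smallest admissible integer is the ceiling in \eqref{eq:ub_sample_complexity}, and the bound follows.

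The main subtlety is that $n(\varepsilon,\delta)$ is a minimum over $n\in\mathbb{N}$, so the ceiling must be at least $1$. If the numerator is nonpositive, the ceiling could be $\le 0$, and the bound as written would be vacuous or false. I would handle this by noting that for $n=1$ the inequality already holds whenever the numerator is $\le D_\lambda(P_1\|P_0)$. The hypothesis $\varepsilon+\delta<1$ is used to check that the expression is meaningful and that the trivial test is excluded. If needed, I would interpret the ceiling as $\max\{1,\lceil\cdot\rceil\}$, or simply observe that $n(\varepsilon,\delta)\ge 1$ and that the bound is applied only where the ceiling is positive.

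A second minor check is that Corollary~\ref{cor:beta_optimal_upper_bound_renyi} applies for every $n$ without further hypotheses. It does, since mutual absolute continuity tensorises. Finally, the infimum over $\lambda$ commutes with the ceiling, since we are bounding the fixed integer $n(\varepsilon,\delta)$ by each member of the family.
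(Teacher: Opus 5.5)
Your proposal is correct and follows the same route as the paper: fix $\lambda\in(0,1)$, invert the bound of Corollary~\ref{cor:beta_optimal_upper_bound_renyi} to get the ceiling threshold, and take the infimum over $\lambda$. Your worry about a nonpositive ceiling is vacuous, and no reinterpretation of the ceiling is needed. The numerator equals $\frac{\lambda}{1-\lambda}\log\bigl(\lambda(\frac{1-\lambda}{\varepsilon})^{\frac{1-\lambda}{\lambda}}/\delta\bigr)$, and $\lambda(\frac{1-\lambda}{\varepsilon})^{\frac{1-\lambda}{\lambda}}=\sup_{s>0}\{s^{1-\lambda}-\varepsilon s\}\ge 1-\varepsilon>\delta$ (take $s=1$), so the numerator is strictly positive; this is exactly where $\varepsilon+\delta<1$ enters.
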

By the skew-symmetry of R\'enyi divergence and the change of variables $\lambda\mapsto1-\lambda$, the bound admits an equivalent expression in terms of $D_\lambda(P_0\|P_1)$.

\subsection{Comparisons with existing results}

In this section, we compare the bounds on the sample complexity established in the previous section with the Hellinger-based bounds of Pensia \textit{et al.}~\cite{pensia2024sample}, using the closed-form reformulation of Kazemi \textit{et al.}~\cite{kazemi2025sample}. Under the condition $D_{1+\eta}(P_1\|P_0)<\infty$ for some $\eta>0$, our lower bound improves the leading constant of~\cite{pensia2024sample,kazemi2025sample} in a strongly asymmetric regime, where one error constraint is fixed and the other vanishes. Under the same condition, our upper and lower bounds match asymptotically in this regime, thereby fully characterising the sample complexity to first order.

We begin by recalling the closed-form reformulation of~\cite[Theorem 2]{kazemi2025sample}. Their bounds are formulated in the Bayesian setting and are expressed in terms of the Hellinger affinity of order $\lambda\in(0,1)$. Using the equivalence between the Bayesian and asymmetric settings established in~\cite[Claim 4.6]{pensia2024sample}, together with the relation
\begin{equation*}
    D_\lambda(P\|Q)=\frac{1}{\lambda-1}\log h_{\lambda}(P,Q),\quad\text{for }\lambda\in(0,1)
\end{equation*}
where $h_{\lambda}(P,Q)=\E_Q\left[\left(\frac{dP}{dQ}\right)^\lambda\right]$ is the Hellinger affinity of order $\lambda$, we can rewrite their bounds as follows.

For $\varepsilon\leq \delta\leq 1/32$, the resulting bounds are
\begin{equation}\label{eq:pensia_sc_eps<delta}
    \Bigg\lceil\frac{1}{2}\frac{\lambda_*}{1-\lambda_*}\frac{\log(1/(2\delta))}{D_{\lambda_*}(P_1\|P_0)}\Bigg\rceil\leq n(\varepsilon,\delta)\leq \Bigg\lceil2\frac{\lambda_*}{1-\lambda_*}\frac{\log(1/(2\delta))}{D_{\lambda_*}(P_1\|P_0)}\Bigg\rceil,
\end{equation}
where
\begin{equation}
    \lambda_*= \frac{\log(1/(2\varepsilon))}{\log(1/(2\delta))+\log(1/(2\varepsilon))}\in[0.5,1).
\end{equation}
Similarly, for $\delta\leq \varepsilon\leq 1/32$, they become
\begin{equation}\label{eq:pensia_sc_eps>delta}
    \Bigg\lceil\frac{1}{2}\frac{\lambda_*}{1-\lambda_*}\frac{\log(1/(2\varepsilon))}{D_{\lambda_*}(P_0\|P_1)}\Bigg\rceil\leq n(\varepsilon,\delta)\leq \Bigg\lceil2\frac{\lambda_*}{1-\lambda_*}\frac{\log(1/(2\varepsilon))}{D_{\lambda_*}(P_0\|P_1)}\Bigg\rceil,
\end{equation}
where 
\begin{equation}
    \lambda_*= \frac{\log(1/(2\delta))}{\log(1/(2\delta))+\log(1/(2\varepsilon))}\in[0.5,1).
\end{equation}

We now compare the lower bound in \eqref{eq:lb_sample_complexity} with the one in \eqref{eq:pensia_sc_eps<delta}. In particular, we focus on the asymmetric regime in which the Type~II error constraint $\delta$ is fixed while the Type~I error constraint $\varepsilon\to0$. Our lower bound improves the leading constant by a factor of two.
The argument is formalised in the following remark, whose proof is in Appendix \ref{app:remark:sc_bounds_comparison}.

\begin{remark}\label{remark:sc_bounds_comparison}
    Consider the strongly asymmetric regime where $\delta\in(0,1/32]$ is fixed and $\varepsilon\to0$. Assume $0<D(P_1\|P_0)<\infty$ and $D_{1+\eta}(P_1\|P_0)<\infty$ for some $\eta>0$. In this regime, the asymptotic behaviour of our lower bound \eqref{eq:lb_sample_complexity} of Corollary~\ref{cor:lower_bound_sample_complexity} reduces to
    \begin{equation}
        n(\varepsilon,\delta)\gtrsim \frac{\log(1/\varepsilon)}{D(P_1\|P_0)}.
    \end{equation}
    By contrast, the bound \eqref{eq:pensia_sc_eps<delta} retrieved from~\cite{kazemi2025sample} gives
    \begin{equation}
         n(\varepsilon,\delta)\gtrsim \frac{1}{2}\frac{\log(1/\varepsilon)}{D(P_1\|P_0)}.
    \end{equation}
    
    Consequently, the bound \eqref{eq:lb_sample_complexity} of Corollary~\ref{cor:lower_bound_sample_complexity} is asymptotically tighter than the result provided by~\cite{kazemi2025sample} by a factor of 2.
\end{remark}
By symmetry, an analogous conclusion holds when $\varepsilon$ is fixed and $\delta\to0$, provided $D_{1+\eta}(P_0\|P_1)<\infty$ for some $\eta>0$. In this case, the relevant divergence is $D(P_0\|P_1)$.

We next compare the upper bounds. The proof is in Appendix~\ref{app:remark:sc_upper_comparison}. 

\begin{remark}\label{remark:sc_upper_comparison}
   In the symmetric regime $\varepsilon=\delta$, the upper bound in \eqref{eq:pensia_sc_eps<delta} reduces to 
    \begin{equation}\label{eq:symmetric_case_sc}n(\varepsilon,\delta)\leq \Bigg\lceil\frac{2}{D_{1/2}(P_1\|P_0)}\log\left(\frac{1}{2\delta}\right)\Bigg\rceil.
    \end{equation}
    Similarly, evaluating \eqref{eq:ub_sample_complexity} at $\lambda=1/2$ also yields \Cref{eq:symmetric_case_sc}. Hence,  the two bounds have the same behaviour in this regime.

    In the asymmetric regime $\varepsilon<\delta$, the bound in \eqref{eq:pensia_sc_eps<delta} reduces to
    \begin{equation}\label{eq:bound_case_2}
        n(\varepsilon,\delta)\leq \Bigg\lceil\frac{2}{D_{\lambda_*}(P_1\|P_0)}\log\left(\frac{1}{2\varepsilon}\right)\Bigg\rceil,
    \end{equation}
    where
    $$
    \lambda_*= \frac{\log(1/(2\varepsilon))}{\log(1/(2\delta))+\log(1/(2\varepsilon))}\in[0.5,1).$$
    By contrast, whenever
    $$\varepsilon\leq \frac{1}{2}\left(\frac{\delta}{\lambda_*}\right)^{\frac{\lambda_*}{1-\lambda_*}},$$
    evaluating \eqref{eq:ub_sample_complexity} at $\lambda=\lambda_*$ gives exactly \eqref{eq:bound_case_2}. Consequently, in this specific asymmetric regime, our upper bound is at
least as tight as the bound in~\eqref{eq:pensia_sc_eps<delta}.
\end{remark}

\remove{
\begin{remark}\label{remark:sc_upper_comparison}
    In the symmetric regime $\varepsilon=\delta$, evaluating \eqref{eq:ub_sample_complexity} at $\lambda=1/2$ gives
    \begin{equation}\label{eq:symmetric_case_sc}
        n(\varepsilon,\varepsilon)
        \leq
        \left\lceil\frac{2\log(1/\varepsilon)}{D_{1/2}(P_1\|P_0)}\right\rceil.
    \end{equation}
    By contrast, the upper bound in \eqref{eq:pensia_sc_eps<delta} contains $2\log(1/(2\varepsilon))$ in the numerator. The two bounds therefore have the same leading-order behaviour as $\varepsilon\to0$, while the latter has a better additive constant at finite error levels. In asymmetric regimes, no distribution-independent ordering follows directly from the closed forms; optimised bounds can instead be compared for specific pairs $(P_0,P_1)$.
\end{remark}
}

We now demonstrate that our lower and upper bounds on the sample complexity established in Corollaries~\ref{cor:lower_bound_sample_complexity} and~\ref{cor:upper_bound_sample_complexity} are asymptotically matching in strongly asymmetric regimes.  In particular, they yield the exact first-order scaling of the sample complexity when the Type~I error is required to decay exponentially and the Type~II error is fixed. This result is shown in the corollary below, whose proof is in Appendix~\ref{app:cor:sc_gap}.

\begin{corollary}
    \label{cor:sc_gap}
    Let $\varepsilon_m=e^{-mR}$ for some $R>0$, and fix $\delta\in(0,1)$. Let $L_m$ and $U_m$ denote the lower and upper bounds in Corollaries~\ref{cor:lower_bound_sample_complexity} and~\ref{cor:upper_bound_sample_complexity}, respectively, so that
    \begin{equation*}
        L_m \leq n(\varepsilon_m, \delta) \leq U_m
    \end{equation*}
    Assume that $0<D(P_1\|P_0)<\infty$ and $D_{1+\eta}(P_1\|P_0)<\infty$ for some $\eta>0$. Then
    \begin{equation}
        \lim_{m\to\infty}\frac{L_m}{m}
        =\lim_{m\to\infty}\frac{U_m}{m}
        =\frac{R}{D(P_1\|P_0)},
    \end{equation}
    and consequently $\lim_{m\to\infty}U_m/L_m=1$.
\end{corollary}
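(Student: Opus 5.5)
We prove the two limits separately. The relations $L_m\le n(\varepsilon_m,\delta)\le U_m$ are given by Corollaries~\ref{cor:lower_bound_sample_complexity} and~\ref{cor:upper_bound_sample_complexity}. So it suffices to show
\[
\limsup_m U_m/m\le R/D(P_1\|P_0)\quad\text{and}\quad\liminf_m L_m/m\ge R/D(P_1\|P_0).
\]
The matching bound $L_m\le U_m$ then gives equality of both limits, and $U_m/L_m\to1$ follows.

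\emph{Upper side.} Corollary~\ref{cor:upper_bound_sample_complexity} applies once $\varepsilon_m+\delta<1$, which holds for $m$ large. Fix $\lambda\in(0,1)$ and evaluate \eqref{eq:ub_sample_complexity} at $\varepsilon_m=e^{-mR}$:
\[
U_m\le \frac{mR+\log(1-\lambda)+\frac{\lambda}{1-\lambda}\log(\lambda/\delta)}{D_\lambda(P_1\|P_0)}+1.
\]
Dividing by $m$ and letting $m\to\infty$ gives
\[
\limsup_m U_m/m\le R/D_\lambda(P_1\|P_0).
\]
Now let $\lambda\uparrow1$. Left-continuity of $D_\lambda$ at one holds unconditionally, and $D(P_1\|P_0)>0$, so $R/D_\lambda(P_1\|P_0)\to R/D(P_1\|P_0)$. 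Two details need care here. First, $D_\lambda>0$ for $\lambda$ near one, since $P_0\neq P_1$. Second, the additive constant $\frac{\lambda}{1-\lambda}\log(\lambda/\delta)$ blows up as $\lambda\to1$, but this is harmless because $\lambda$ is held fixed before $m\to\infty$.

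\emph{Lower side.} We use the first supremum in \eqref{eq:lb_sample_complexity}, which is valid for every $\varepsilon,\delta\in(0,1)$, at a fixed $\lambda\in(1,1+\eta]$. Here $D_\lambda(P_1\|P_0)\le D_{1+\eta}(P_1\|P_0)<\infty$ by monotonicity in the order, and $D_\lambda(P_1\|P_0)>0$. Choose $c=c_m=mR$, so that $\varepsilon_m e^{\lambda c_m}=e^{(\lambda-1)mR}$. The bracket becomes
\[
\frac{\lambda}{\lambda-1}\log\bigl((1-\delta)e^{(\lambda-1)mR}+\delta\bigr)-\log\bigl(e^{(\lambda-1)mR}+1-\varepsilon_m\bigr).
\]
Asymptotically this equals
\[
\lambda mR+\tfrac{\lambda}{\lambda-1}\log(1-\delta)-(\lambda-1)mR+o(1)=mR+\tfrac{\lambda}{\lambda-1}\log(1-\delta)+o(1).
\]
Dividing by $mD_\lambda(P_1\|P_0)$ yields
\[
\liminf_m L_m/m\ge R/D_\lambda(P_1\|P_0).
\]
We then let $\lambda\downarrow1$. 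Right-continuity \eqref{eq:renyi_right_continuity} holds under the moment assumption, giving the bound $R/D(P_1\|P_0)$. The $\log(1-\delta)$ term, which diverges as $\lambda\downarrow1$, is again harmless because $\lambda$ is fixed first.

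The main obstacle is choosing $c$ in the lower bound correctly. Taking $c\to\infty$ naively, i.e.\ the DPI form, gives a bracket of order
\[
\tfrac{\lambda}{\lambda-1}\log(1-\delta)+\log(1/\varepsilon_m),
\]
which also works. However, the choice must keep both logarithms linear in $m$ with the right slopes; $c=mR$ achieves this cleanly.

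A further point is that the $\lambda$-supremum and the $m$-limit must be interchanged only in the safe direction. Fixing $\lambda$ and then optimising is enough, because we only need a $\liminf$ lower bound and a $\limsup$ upper bound. Finally, the ceiling and the condition $\varepsilon_m+\delta<1$ only affect the bounds by $O(1)$ and are irrelevant after dividing by $m$.
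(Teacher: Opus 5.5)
Your proof is correct and follows the paper's argument essentially step for step. For the lower side, you fix $\lambda\in(1,1+\eta]$ and take $c=mR$ in the first branch of \eqref{eq:lb_sample_complexity} to get $\liminf_m L_m/m\ge R/D_\lambda(P_1\|P_0)$. For the upper side, you fix $\lambda\in(0,1)$ in \eqref{eq:ub_sample_complexity} to get $\limsup_m U_m/m\le R/D_\lambda(P_1\|P_0)$. You then send $\lambda\to1$ from each side using right- and left-continuity, and sandwich. Your explicit check that $\varepsilon_m+\delta<1$ for large $m$, which Corollary~\ref{cor:upper_bound_sample_complexity} requires, is a detail the paper leaves implicit.
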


\remove{
\begin{figure}[t] 
    \centering
    \begin{subfigure}{\textwidth}
        \centering
        \includegraphics[width=\textwidth]{plots/sample_complexity/Bernoulli_sc_fixed_delta_0.png}
    \end{subfigure}
    
    \par
    \par
    
    \begin{subfigure}{\textwidth}
        \centering
        \includegraphics[width=\textwidth]{plots/sample_complexity/Bernoulli_sc_fixed_epsilon_0.png}
    \end{subfigure}
    
    \caption{Comparison of the sample complexity lower bounds for Bernoulli testing ($P_0=\text{Bern}(1/2)$ vs $P_1=\text{Bern}(1/2+\Delta)$) for $\Delta=0.01$.
   The top row corresponds to the asymmetric regime in which the Type~II error bound $\delta$ is fixed and the Type~I error bound $\varepsilon$ decays. Instead, in the bottom row, the left and middle columns correspond to the asymmetric regime in which $\varepsilon$ is fixed and $\delta$ decays, and the right column to the symmetric regime $\varepsilon=\delta$. Both axes are shown on a logarithmic scale. Solid lines indicate that the corresponding lower-bound is the largest.}
    \label{fig:bernoulli_sc_converse_1}
\end{figure}
}

\begin{figure}[t] 
    \centering
    \begin{subfigure}{\textwidth}
        \centering
        \includegraphics[width=\textwidth]{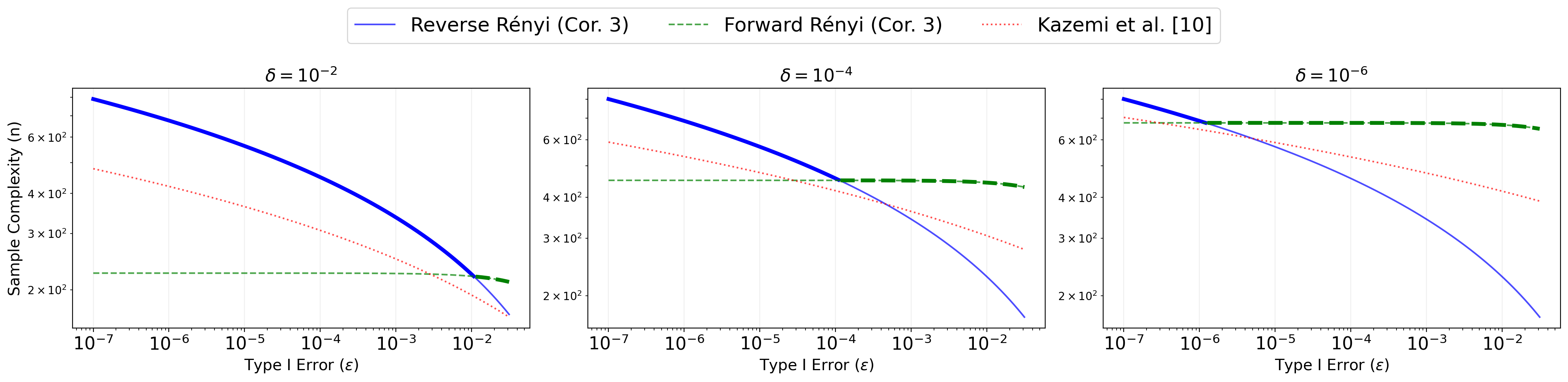}
    \end{subfigure}
    
    \par
    \par
    
    \begin{subfigure}{\textwidth}
        \centering
        \includegraphics[width=\textwidth]{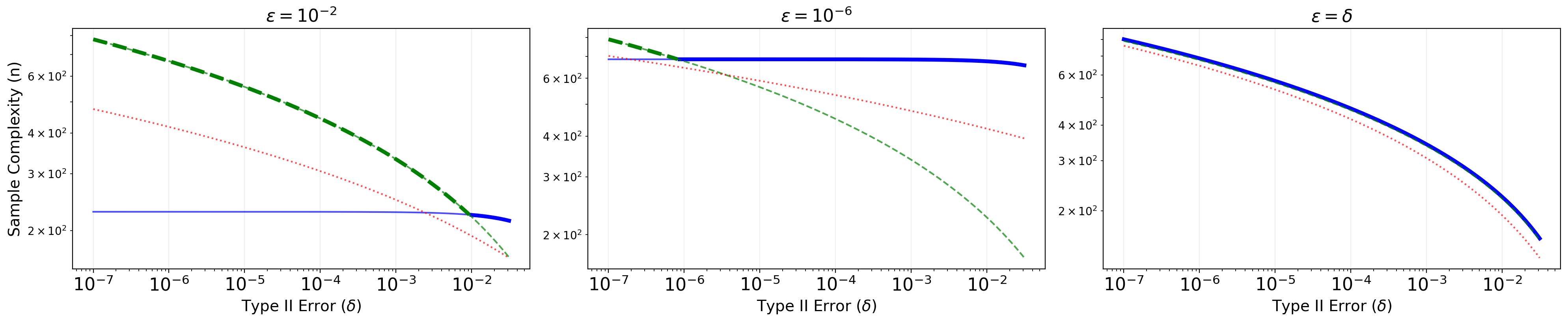}
    \end{subfigure}
    
    \caption{Comparison of the sample complexity lower bounds for Bernoulli testing ($P_0=\text{Bern}(1/2)$ vs $P_1=\text{Bern}(1/2+\Delta)$) for $\Delta=0.1$.
   The top row corresponds to the asymmetric regime in which the Type~II error bound $\delta$ is fixed and the Type~I error bound $\varepsilon$ decays. Instead, in the bottom row, the left and middle columns correspond to the asymmetric regime in which $\varepsilon$ is fixed and $\delta$ decays, and the right column to the symmetric regime $\varepsilon=\delta$. Both axes are shown on a logarithmic scale. Solid lines indicate that the corresponding lower-bound is the largest. The plotted bounds of Corollary~\ref{cor:lower_bound_sample_complexity} in~\eqref{eq:lb_sample_complexity_opt} have been optimised over $\lambda>1$.}
    \label{fig:bernoulli_sc_converse_2}
\end{figure}

\remove{
\begin{figure}[t] 
    \centering
    \begin{subfigure}{\textwidth}
        \centering
        \includegraphics[width=\textwidth]{UB_Bernoulli_sc_fixed_delta_0.png}
    \end{subfigure}
    
    \par
    \par
    
    \begin{subfigure}{\textwidth}
        \centering
        \includegraphics[width=\textwidth]{UB_Bernoulli_sc_fixed_epsilon_0.png}
    \end{subfigure}
    
    \caption{Comparison of the sample complexity upper bounds for Bernoulli testing ($P_0=\text{Bern}(1/2)$ vs $P_1=\text{Bern}(1/2+\Delta)$) for $\Delta=0.01$.
   The top row corresponds to the asymmetric regime in which the Type~II error bound $\delta$ is fixed and the Type~I error bound $\varepsilon$ decays. Instead, in the bottom row, the left and middle columns correspond to the asymmetric regime in which $\varepsilon$ is fixed and $\delta$ decays, and the right column to the symmetric regime $\varepsilon=\delta$. Both axes are shown on a logarithmic scale. Solid lines indicate that the corresponding upper bound is the smallest.}
    \label{fig:bernoulli_sc_upper_1}
\end{figure}
}

\begin{figure}[t] 
    \centering
    \begin{subfigure}{\textwidth}
        \centering
        \includegraphics[width=\textwidth]{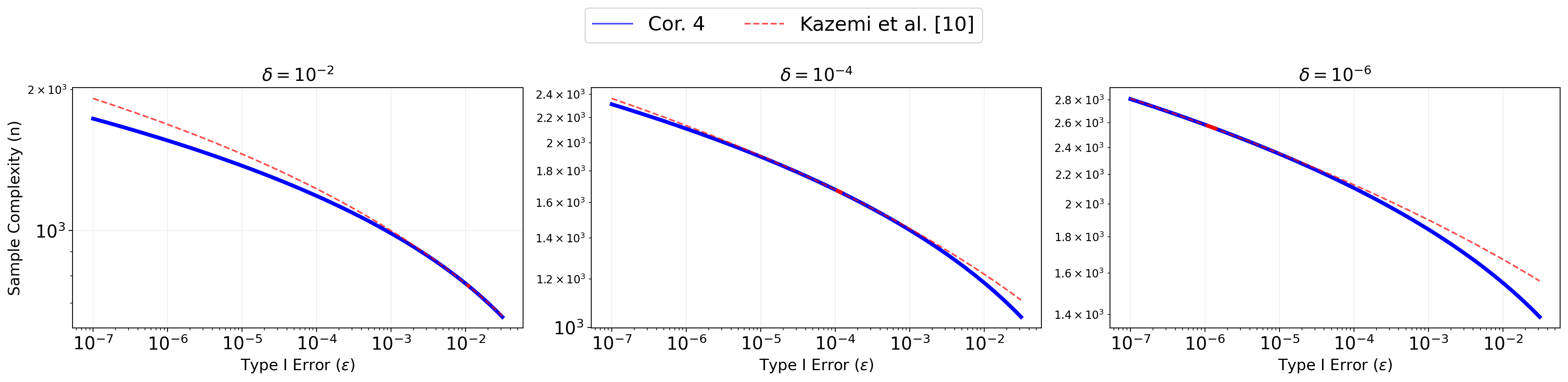}
    \end{subfigure}
    
    \par
    \par
    
    \begin{subfigure}{\textwidth}
        \centering
        \includegraphics[width=\textwidth]{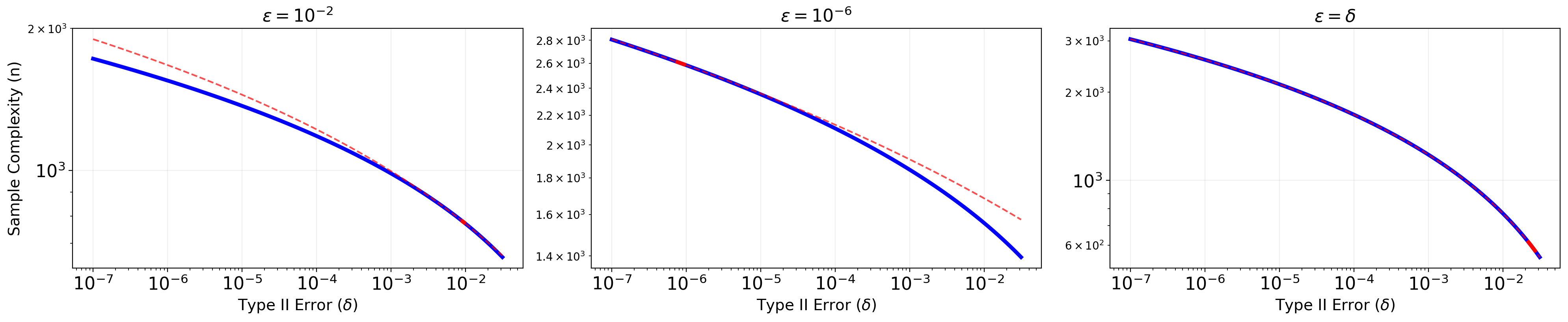}
    \end{subfigure}
    
    \caption{Comparison of the sample complexity upper bounds for Bernoulli testing ($P_0=\text{Bern}(1/2)$ vs $P_1=\text{Bern}(1/2+\Delta)$) for $\Delta=0.1$.
   The top row corresponds to the asymmetric regime in which the Type~II error bound $\delta$ is fixed and the Type~I error bound $\varepsilon$ decays. Instead, in the bottom row, the left and middle columns correspond to the asymmetric regime in which $\varepsilon$ is fixed and $\delta$ decays, and the right column to the symmetric regime $\varepsilon=\delta$. Both axes are shown on a logarithmic scale. Solid lines indicate that the corresponding upper bound is the smallest. The bound of Corollary~\ref{cor:upper_bound_sample_complexity} has been optimised over $\lambda\in(0,1)$.}
    \label{fig:bernoulli_sc_upper_2}
\end{figure}

We conclude the section with a numerical comparison of the sample complexity bounds derived in this work. Specifically, we compare them with the lower and upper bounds in  \eqref{eq:pensia_sc_eps<delta} and \eqref{eq:pensia_sc_eps>delta} of \cite{kazemi2025sample}.
For the comparison, we consider the same Bernoulli testing problem as in Section~\ref{sec:comparisons}. We omit the Gaussian setting, as it exhibits a similar behaviour. 

Figure~\ref{fig:bernoulli_sc_converse_2} compares the sample complexity lower bounds both in asymmetric and symmetric regimes. The numerical results highlight the strict dependence that the two bounds, established in Corollary~\ref{cor:lower_bound_sample_complexity}, have on the Type~I and Type~II error constraints. 
This behaviour is clearly reflected in the asymmetric regime. 
When the Type~II constraint is substantially more stringent than the Type~I constraint, i.e., when $\delta\ll\varepsilon$, the lower bound based on the forward divergence $D_\lambda(P_0\|P_1)$ provides the tighter bound. Conversely, when the Type~I constraint is substantially more stringent, i.e., $\varepsilon\ll\delta$, the bound based on the reverse divergence $D_\lambda(P_1\|P_0)$ becomes tighter.
Thus, the proposed bounds, especially in strongly asymmetric regimes, provide a tighter estimate of the sample complexity than the corresponding bounds of \cite{kazemi2025sample}. This behaviour is consistent with the asymptotic analysis in Remark~\ref{remark:sc_bounds_comparison}. Moreover, in this example, our bounds are substantially tighter than those in~\cite{kazemi2025sample} across the entire range of sample sizes.

Figure~\ref{fig:bernoulli_sc_upper_2} presents the corresponding comparison for the sample complexity upper bounds both in the asymmetric and symmetric regimes. Here too, the numerical results show that the proposed upper bound is particularly effective in strongly asymmetric regimes. In fact, in these regimes, the bound of Corollary \ref{cor:upper_bound_sample_complexity} provides
a tighter estimate of the sample complexity. This behaviour is consistent with the analytical comparison discussed in Remark~\ref{remark:sc_upper_comparison}.
On the other hand, in the symmetric regime, the two bounds exhibit similar behaviour, in agreement with the analysis in Remark~\ref{remark:sc_upper_comparison}. Overall, these numerical comparisons support the findings  that the R\'enyi-based bounds provide a particularly accurate characterisation of the sample complexity in asymmetric regimes.

\section{Local Differential Privacy}\label{sec:LDP}

In this section, we extend our results to the local differential privacy (LDP)
setting. Throughout, $\mathcal{X}$ and $\mathcal{Y}$ are finite.

In the classical binary hypothesis testing problem considered so far, we
observe a vector $\mathbf{X}=(X_1,\dots,X_n)$ of $n$ i.i.d. random variables,
each distributed according to $P$ on $(\mathcal{X}, \mathcal{B})$, and we wish to distinguish
the null hypothesis $H_0: P=P_0$ from the alternative $H_1: P=P_1$. Instead, in the local
privacy setting, $\mathbf{X}$ is no longer directly observable. This happens because the samples are
typically sensitive data, for instance, medical records, browsing histories, or device telemetry, and the individuals generating them may be unwilling to disclose them to any
central party. Local differential privacy \cite{dwork2014algorithmic} addresses this problem by requiring each observation to be
randomised on the user's own device before release.  Such a
constraint restricts the information available to the test and can therefore only
degrade the achievable error trade-off. This behaviour has already been investigated in the
literature \cite{duchi2018minimax, asoodeh2024contraction, pensia2024simple}.

In what follows, we show that the results established in the previous sections can be
generalized to the LDP setting. Among the various models of privacy, we focus in this work on pure LDP, which we now define.
\begin{definition}[\cite{pensia2024simple}]
A channel $T:\mathcal{X}\to\mathcal{Y}$ is
$\epsilon_{\rm dp}$-locally differentially private if, for every
$x,x'\in\mathcal{X}$ and every measurable $S\subseteq\mathcal{Y}$,
\begin{equation}\label{eq:defLDP}
    T(S\mid x)
    \leq
    e^{\epsilon_{\rm dp}}T(S\mid x').
\end{equation}
We denote the class of such channels by
$\mathcal{C}_{\epsilon_{\rm dp}}$.
\end{definition}
Each observation $X_i$ is passed independently through a channel $T_i\in\mathcal{C}_{\epsilon_{\rm dp}}$, and the analyst observes only the privatised output $Y_i$. For a probability measure $P$ on $\mathcal{X}$, we write $T_iP$ for the distribution of the output of $T_i$ when its input has distribution $P$. Hence, for fixed channels $T_1,\ldots,T_n$, the privatised observations are independent and, under hypothesis $H_j$, have joint distribution
\begin{equation}\label{eq:private_product_law}
    \bigotimes_{i=1}^nT_iP_j,
    \qquad j\in\{0,1\}.
\end{equation}
For this fixed choice of channels, the optimal Type~II error under the Type~I constraint $\varepsilon$ is therefore
\begin{equation*}
    \beta_\varepsilon\left(\bigotimes_{i=1}^nT_iP_0,\bigotimes_{i=1}^nT_iP_1\right).
\end{equation*}
Optimising over all admissible choices of the privacy channels that satisfy $\epsilon_{\rm dp}$-LDP, we define
\begin{equation}\label{eq:private_beta_definition}
    \beta_n^{\rm LDP}(\varepsilon, \epsilon_{\rm dp})=\inf_{\substack{T_i\in\mathcal{C}_{\epsilon_{\rm dp}}\\i=1,\ldots,n}}\beta_\varepsilon\left(\bigotimes_{i=1}^nT_iP_0,\bigotimes_{i=1}^nT_iP_1\right).
\end{equation}
This formulation keeps the testing problem unchanged: the original product measures $P_0^n$ and $P_1^n$ are simply replaced by their corresponding privatised product measures. Consequently, the finite-sample bounds developed earlier apply directly, with the Rényi divergences between the original distributions replaced by the Rényi divergences between their privatised versions.

We start with the converse. For any fixed choice of channels $T_1,\ldots,T_n\in\mathcal{C}_{\epsilon_{\rm dp}}$, the R\'enyi divergence between the corresponding product output distributions tensorises:
\begin{equation}\label{eq:tensorisation_ldp}
    D_\lambda\left(\bigotimes_{i=1}^nT_iP_1\,\middle\|\,\bigotimes_{i=1}^nT_iP_0\right)=\sum_{i=1}^nD_\lambda(T_iP_1\|T_iP_0).
\end{equation}

As such, using Theorem~\ref{thm:renyi_converse_bound_via_vr} gives, for every $\lambda>1$ and $c>0$,
\begin{equation}\label{eq:private_converse_before_infimum}
\beta_\varepsilon\left(
\bigotimes_{i=1}^nT_iP_0,
\bigotimes_{i=1}^nT_iP_1
\right)
\geq
\frac{
e^{(\lambda-1)c}
-
e^{\frac{\lambda-1}{\lambda}\sum_{i=1}^nD_\lambda(T_iP_1\|T_iP_0)}
\left(1-\varepsilon+\varepsilon e^{\lambda c}\right)^{\frac{\lambda-1}{\lambda}}
}{
e^{(\lambda-1)c}-1
}.
\end{equation}
Taking the infimum over the privacy channels in~\eqref{eq:private_converse_before_infimum} and using
\begin{equation}\label{eq:supremum}
    \sup_{T_1, \dots, T_n\in\mathcal C_{\epsilon_{\rm dp}}}\sum_{i=1}^nD_\lambda(T_iP_1\|T_iP_0)= n\sup_{T\in\mathcal C_{\epsilon_{\rm dp}}}D_\lambda(TP_1\|TP_0)
\end{equation}
 therefore yields
\begin{equation}\label{eq:private_converse_after_infimum}
\beta_n^{\rm LDP}(\varepsilon,\epsilon_{\rm dp})
\geq
\frac{
e^{(\lambda-1)c}
-
e^{\frac{\lambda-1}{\lambda}n
\sup_{T\in\mathcal C_{\epsilon_{\rm dp}}}
D_\lambda(TP_1\|TP_0)}
\left(1-\varepsilon+\varepsilon e^{\lambda c}\right)^{\frac{\lambda-1}{\lambda}}
}{
e^{(\lambda-1)c}-1
}.
\end{equation}
Thus, minimising over the privacy channels amounts to maximising the corresponding R\'enyi divergence. Interchanging $P_0$ and $P_1$ gives the analogous expression involving $\sup_{T\in\mathcal C_{\epsilon_{\rm dp}}}D_\lambda(TP_0\|TP_1)$. Since these two suprema are generally not available in closed form, we upper-bound them using the distribution-independent contraction coefficient associated with R\'enyi divergence, defined as
\begin{equation}
    \eta_\lambda(T) := \sup_{\substack{P, Q:\\ 0 < D_\lambda(P \| Q) < \infty}} \frac{D_\lambda(T P \| T Q)}{D_\lambda(P \| Q)}.
\end{equation}
By~\cite[Corollary~2 and Theorem~8]{vandenbroucque2026journal}, one has that $\eta_\lambda(T) \leq \eta_\infty(T)$ for every $\lambda\in[0,\infty]$ and moreover, on finite alphabets, \cite[Theorem~13]{vandenbroucque2026journal} shows that $\epsilon_{\rm dp}$-LDP is equivalent to
\begin{equation}\label{eq:ldp_contraction_assumption}
    \eta_\infty(T)\leq1-e^{-\epsilon_{\rm dp}}.
\end{equation}
Under \eqref{eq:ldp_contraction_assumption}, we thus obtain
\begin{equation}\label{eq:bound_supremum}
    \sup_{T \in \mathcal{C}_{\epsilon_{\rm dp}}} D_\lambda(T P_1 \| T P_0) \leq \sup_{T \in \mathcal{C}_{\epsilon_{\rm dp}}} \eta_{\infty}(T) D_\lambda(P_1\|P_0) \leq (1-e^{-\epsilon_{\rm dp}})D_\lambda(P_1\|P_0).
\end{equation}
The same bound holds after interchanging $P_0$ and $P_1$. Substituting into~\eqref{eq:private_converse_after_infimum} and optimising over $\lambda>1$ and $c>0$ gives the following converse.
\begin{corollary}\label{cor:LDP_lower_bound}
    Let $P_0\neq P_1$ be mutually absolutely continuous. Then, for any $\varepsilon\in(0,1)$ and $\epsilon_{\rm dp}\geq0$,
    \begin{align}
        \beta_n^{\rm LDP}(\varepsilon, \epsilon_{\rm dp}) \geq \max\Bigg\{  &\sup_{\lambda>1,c>0} \frac{e^{(\lambda-1)c}-e^{\frac{\lambda-1}{\lambda}n (1-e^{-\epsilon_{\rm dp}})D_\lambda(P_1 \|P_0)}\left(1-\varepsilon+\varepsilon e^{\lambda c}\right)^{\frac{\lambda-1}{\lambda}}}{e^{(\lambda-1)c}-1},\nonumber\\
        &\sup_{\lambda>1,c>0} \frac{e^{-n(1-e^{-\epsilon_{\rm dp}})D_\lambda(P_0 \| P_1)}\left(\varepsilon +(1-\varepsilon)e^{(\lambda-1)c}\right)^{\frac{\lambda}{\lambda-1}}-1}{e^{\lambda c}-1}\Bigg\}.
     \end{align}
\end{corollary}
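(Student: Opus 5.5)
The plan is to fix arbitrary channels $T_1,\dots,T_n\in\mathcal C_{\epsilon_{\rm dp}}$, apply Theorem~\ref{thm:renyi_converse_bound_via_vr} to the privatised pair $\bigotimes_i T_iP_0$ versus $\bigotimes_i T_iP_1$, and then make the resulting bound uniform over the channels before taking the infimum in \eqref{eq:private_beta_definition}.

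\textbf{Step 1 (applying the theorem).} First I check the hypothesis of Theorem~\ref{thm:renyi_converse_bound_via_vr}. Since $P_0$ and $P_1$ are mutually absolutely continuous on a finite alphabet, they have the same support. Then $T_iP_0(y)=0$ iff $T(y|x)=0$ for all $x$ in that common support, iff $T_iP_1(y)=0$. So the output laws are mutually absolutely continuous, and hence so are their products. The proof of the theorem only uses the variational formula \eqref{eq:Renyi_var_formula} applied to the pair of product laws, together with the value of their R\'enyi divergence. It therefore holds verbatim for non-identical product measures, with $nD_\lambda(P_1\|P_0)$ replaced by $D_\lambda(\bigotimes_iT_iP_1\|\bigotimes_iT_iP_0)$. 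By \eqref{eq:tensorisation_ldp}, that divergence equals $\sum_i D_\lambda(T_iP_1\|T_iP_0)$. This gives \eqref{eq:private_converse_before_infimum} and its analogue with $D_\lambda(T_iP_0\|T_iP_1)$.

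\textbf{Step 2 (monotonicity in the divergence).} For fixed $\lambda>1$ and $c>0$, the first expression is decreasing in the divergence term, because it enters through $-e^{\frac{\lambda-1}{\lambda}(\cdot)}$ divided by the positive quantity $e^{(\lambda-1)c}-1$. The second expression is decreasing in $D_\lambda(P_0\|P_1)$ through the factor $e^{-(\cdot)}$ divided by $e^{\lambda c}-1>0$. Consequently, replacing each summed divergence by any upper bound yields a valid lower bound on $\beta_\varepsilon$.

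\textbf{Step 3 (contraction).} For each $i$, I use $D_\lambda(T_iP_1\|T_iP_0)\le\eta_\lambda(T_i)D_\lambda(P_1\|P_0)\le\eta_\infty(T_i)D_\lambda(P_1\|P_0)\le(1-e^{-\epsilon_{\rm dp}})D_\lambda(P_1\|P_0)$, combining the definition of $\eta_\lambda$ with the cited results \cite[Corollary~2, Theorem~8, Theorem~13]{vandenbroucque2026journal}.

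The main technical care is the degenerate case where $D_\lambda(P_1\|P_0)$ is $0$ or $\infty$. Here $D_\lambda>0$ because $P_0\neq P_1$. If $D_\lambda=\infty$, the bound is trivial: such $\lambda$ are excluded from the supremum, or equivalently the expression is $-\infty$. If $D_\lambda<\infty$, the ratio bound applies directly. Summing over $i$ gives the bound $n(1-e^{-\epsilon_{\rm dp}})D_\lambda(P_1\|P_0)$, and the same argument works with $P_0$ and $P_1$ interchanged.

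\textbf{Step 4 (uniformity and optimisation).} The right-hand side now depends on neither the channels nor the individual $T_i$. Taking the infimum over $T_1,\dots,T_n$ therefore preserves it as a lower bound on $\beta_n^{\rm LDP}(\varepsilon,\epsilon_{\rm dp})$. Finally, I take the supremum over $\lambda>1$ and $c>0$ in each family, and then the maximum of the two families, which gives the statement.

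I expect Step~1's justification, that the theorem extends to non-identical product measures, to be the only point needing comment. It follows because the proof is stated for general mutually absolutely continuous pairs whose divergence is then evaluated by tensorisation.
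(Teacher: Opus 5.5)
Your proposal is correct and follows the paper's own argument: apply Theorem~\ref{thm:renyi_converse_bound_via_vr} to the privatised product laws, tensorise via \eqref{eq:tensorisation_ldp}, bound the divergences through $\eta_\lambda(T)\le\eta_\infty(T)\le 1-e^{-\epsilon_{\rm dp}}$, and then take the infimum over channels. The differences are only cosmetic. You bound each $D_\lambda(T_iP_1\|T_iP_0)$ directly instead of passing through the supremum identity \eqref{eq:supremum}, and you make explicit two facts the paper leaves implicit: that the bound is monotone in the divergence, and that the privatised laws are mutually absolutely continuous.
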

As in Remark~\ref{rem:DPI_recover}, taking $c\to\infty$ gives
\begin{equation*}
      \beta_n^{\rm LDP}(\varepsilon, \epsilon_{\rm dp}) \geq  \max\bigg\{1-\inf_{\lambda>1}\left(\varepsilon\, e^{n(1-e^{-\epsilon_{\rm dp}})D_\lambda(P_1\| P_0)}\right)^{\frac{\lambda-1}{\lambda}}, \sup_{\lambda>1}\left\{(1-\varepsilon)^{\frac{\lambda}{\lambda-1}}\,e^{-n(1-e^{-\epsilon_{\rm dp}})D_\lambda(P_0\| P_1)}\right\}\bigg\}.
 \end{equation*}

 Figure~\ref{fig:bernoulli_privacy} provides a numerical comparison of the private lower bound established in Corollary~\ref{cor:LDP_lower_bound} with the corresponding non-private converse in Theorem~\ref{thm:renyi_converse_bound_via_vr}. The comparison illustrates the impact of the privacy constraint $\epsilon_{\rm dp}$ on the difficulty of the testing problem. As $\epsilon_{\rm dp}$ decreases towards zero, the privacy requirement becomes more stringent, resulting in much larger lower bounds.

Let $n^{\rm LDP}(\varepsilon,\delta,\epsilon_{\rm dp})$ be the smallest $n$ for which there exist mechanisms $T_1, \dots, T_n\in\mathcal{C}_{\epsilon_{\rm dp}}^n$ and a randomised test whose errors do not exceed $\varepsilon$ and $\delta$. More formally, define
\begin{equation}\label{eq:ldp_sample_complexity_definition}
n^{\rm LDP}(\varepsilon,\delta,\epsilon_{\rm dp}):=\min\left\{n\in\mathbb{N}:\beta_n^{\rm LDP}(\varepsilon, \epsilon_{\rm dp})\leq\delta\right\}.
\end{equation}
Substituting~\eqref{eq:bound_supremum} in the sample complexity bounds from Corollary~\ref{cor:lower_bound_sample_complexity} gives the following lower bound on the private sample complexity.
\begin{corollary}\label{cor:LDP_sc_lb}
 Suppose \eqref{eq:ldp_contraction_assumption} holds. Let $P_0\ne P_1$ be mutually absolutely continuous. For $\varepsilon,\delta\in(0,1)$ with $\varepsilon+\delta<1$ and $\epsilon_{\rm dp}>0$,
    \begin{align*}
        &n^{\rm LDP}(\varepsilon,\delta,\epsilon_{\rm dp})\geq\\
        &\left(\frac{1}{1-e^{-\epsilon_{\rm dp}}}\right)\cdot \max\Bigg\{\sup_{\lambda>1,c>0}\Bigg\{\frac{1}{D_{\lambda}(P_1\|P_0)}\Bigg[\frac{\lambda}{\lambda-1}\log\left((1-\delta)e^{(\lambda-1)c}+\delta\right)-\log\left(\varepsilon e^{\lambda c}+1-\varepsilon\right)\Bigg]\Bigg\},\nonumber\\
        &\hspace{9.5em}\sup_{\lambda>1,c>0}\Bigg\{\frac{1}{D_{\lambda}(P_0\|P_1)}\Bigg[\frac{\lambda}{\lambda-1}\log\left((1-\varepsilon)e^{(\lambda-1)c}+\varepsilon\right)-\log\left(\delta e^{\lambda c}+1-\delta\right)\Bigg]\Bigg\}\Bigg\}.
    \end{align*}
    Each supremum is restricted to orders for which the divergence in the denominator is finite and strictly positive.
\end{corollary}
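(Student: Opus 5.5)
We prove Corollary~\ref{cor:LDP_sc_lb} by rerunning the inversion argument of Corollary~\ref{cor:lower_bound_sample_complexity}, with the privatised product laws in place of $P_0^n,P_1^n$, and then applying the contraction bound~\eqref{eq:bound_supremum}.

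\textbf{Step 1: reduce to a fixed choice of channels.} Let $n=n^{\rm LDP}(\varepsilon,\delta,\epsilon_{\rm dp})$. By definition there exist channels $T_1,\dots,T_n\in\mathcal{C}_{\epsilon_{\rm dp}}$ and a test $\phi$ such that, under $Q_j:=\bigotimes_{i}T_iP_j$,
\[
\E_{Q_0}[\phi]\le\varepsilon,\qquad \E_{Q_1}[1-\phi]\le\delta .
\]
If $\beta_n^{\rm LDP}\le\delta$ holds only as an infimum, we instead fix $\delta'>\delta$ and let $\delta'\downarrow\delta$ at the end, using continuity of the right-hand side in $\delta$.

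The laws $Q_0,Q_1$ are mutually absolutely continuous. This is because pure LDP forces every output probability to be either $0$ for all inputs or positive for all inputs. Also, for each $\lambda>1$, $D_\lambda(Q_1\|Q_0)=\sum_i D_\lambda(T_iP_1\|T_iP_0)$ by~\eqref{eq:tensorisation_ldp}.

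\textbf{Step 2: invert the converse.} The proof of Theorem~\ref{thm:renyi_converse_bound_via_vr} never uses the i.i.d.\ structure. It only uses the variational formula~\eqref{eq:Renyi_var_formula} applied to the two joint laws, with $nD_\lambda$ appearing only through tensorisation. So for any test with Type~I error $\le\varepsilon$ and Type~II error $\le\delta$, it gives an inequality in which $nD_\lambda(P_1\|P_0)$ is replaced by $D_\lambda(Q_1\|Q_0)$. Rearranging exactly as in the proof of Corollary~\ref{cor:lower_bound_sample_complexity} yields
\[
D_\lambda(Q_1\|Q_0)\ \ge\ \frac{\lambda}{\lambda-1}\log\bigl((1-\delta)e^{(\lambda-1)c}+\delta\bigr)-\log\bigl(\varepsilon e^{\lambda c}+1-\varepsilon\bigr).
\]
The symmetric inequality, with $\varepsilon$ and $\delta$ exchanged, holds for $D_\lambda(Q_0\|Q_1)$.

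The main care point is this rearrangement. The converse bounds $\beta_n(\varepsilon)$, but the test used here has Type~II error at most $\delta$, not exactly $\beta_n(\varepsilon)$. We handle this by working directly with the inequality for the given test, and by noting that the bound is monotone in $\beta$ and in $\alpha$. This makes replacing $\beta$ by $\delta$ and $\alpha$ by $\varepsilon$ legitimate, exactly as in the non-private corollary.

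\textbf{Step 3: apply the contraction bound.} By~\eqref{eq:bound_supremum}, valid under assumption~\eqref{eq:ldp_contraction_assumption},
\[
D_\lambda(Q_1\|Q_0)\le n(1-e^{-\epsilon_{\rm dp}})D_\lambda(P_1\|P_0),
\]
and likewise for $D_\lambda(Q_0\|Q_1)$. When $D_\lambda(P_1\|P_0)\in(0,\infty)$ and $\epsilon_{\rm dp}>0$, dividing by $(1-e^{-\epsilon_{\rm dp}})D_\lambda(P_1\|P_0)$ gives
\[
n\ \ge\ \frac{1}{1-e^{-\epsilon_{\rm dp}}}\cdot\frac{1}{D_\lambda(P_1\|P_0)}\Bigl[\tfrac{\lambda}{\lambda-1}\log\bigl((1-\delta)e^{(\lambda-1)c}+\delta\bigr)-\log\bigl(\varepsilon e^{\lambda c}+1-\varepsilon\bigr)\Bigr],
\]
and similarly for the second term. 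These are exactly the two bracketed expressions in the statement.

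\textbf{Step 4: optimise.} Taking the supremum over $\lambda>1$ and $c>0$ on each side, and then the maximum of the two, gives the claimed bound. Orders with infinite divergence are excluded, since the corresponding inequality is then trivially true. The hypothesis $\varepsilon+\delta<1$ guarantees that the bracket is positive for suitable $c$, so the bound is nontrivial.
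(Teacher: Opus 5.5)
Your proposal is correct and follows essentially the paper's route. You apply the variational R\'enyi converse of Theorem~\ref{thm:renyi_converse_bound_via_vr} to the privatised product laws, then use tensorisation together with the contraction bound \eqref{eq:bound_supremum} to replace $nD_\lambda(P_1\|P_0)$ by $n(1-e^{-\epsilon_{\rm dp}})D_\lambda(P_1\|P_0)$, and invert exactly as in Corollary~\ref{cor:lower_bound_sample_complexity}. The only difference is the order of operations, which is equivalent: you invert for a fixed choice of channels before applying the contraction bound (and handle the infimum in $\beta_n^{\rm LDP}$ via $\delta'\downarrow\delta$), whereas the paper first takes the infimum over channels in the $\beta$-bound and then inverts.
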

Hence, the bound in Corollary~\ref{cor:LDP_sc_lb} is simply the nonprivate lower-bound from Corollary~\ref{cor:lower_bound_sample_complexity} multiplied by
\begin{equation*}
    \frac{1}{1-e^{-\epsilon_{\rm dp}}}>1.
\end{equation*}

\begin{figure}[t] 
    \centering 
    \begin{subfigure}{\textwidth}
        \centering
        \includegraphics[width=\textwidth]{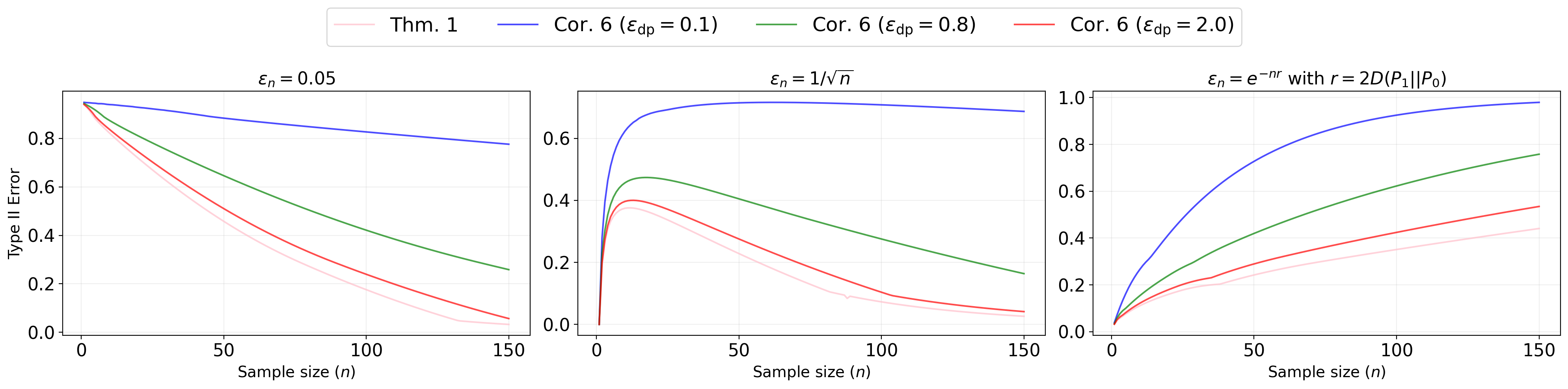}
    \end{subfigure}
    
    \caption{Comparison of the finite-sample converse bounds for Bernoulli testing ($P_0=\text{Bern}(1/2)$ vs $P_1=\text{Bern}(1/2+\Delta)$) for $\Delta=0.1$ under local differential privacy (LDP) constraints. The figure compares the non-private bound (Theorem~\ref{thm:renyi_converse_bound_via_vr}) against the private bound (Corollary~\ref{cor:LDP_lower_bound}) for $\epsilon_{\rm dp}\in\{0.1,0.8, 2.0\}$ across three Type~I error regimes: $\varepsilon_n=0.05$ (left column), $ \varepsilon_n = 1/\sqrt{n}$ (middle column), and $\varepsilon_n=e^{-nr}$ with $r=2D(P_1\|P_0)$ (right column). All plotted bounds have been optimised over $\lambda>0$ and $c>0$.}
    \label{fig:bernoulli_privacy}
\end{figure}

We now turn to achievability. Fix $\lambda\in(0,1)$ and let the same channel $T\in\mathcal C_{\epsilon_{\rm dp}}$ be applied to every observation. Applying Corollary~\ref{cor:beta_optimal_upper_bound_renyi} to $TP_0$ and $TP_1$ gives
\begin{equation}\label{eq:private_converse_before_infimum_1}
    \beta_\varepsilon\big((TP_0)^n,(TP_1)^n\big)
    \leq
    \lambda\left(\frac{1-\lambda}{\varepsilon}\right)^{\frac{1-\lambda}{\lambda}}
    \exp\left(-\frac{1-\lambda}{\lambda}nD_\lambda(TP_1\|TP_0)\right).
\end{equation}
The definition of $\beta_n^{\rm LDP}(\varepsilon,\epsilon_{\rm dp})$ allows the channel to vary across observations. Restricting the infimum to a common channel and then using~\eqref{eq:private_converse_before_infimum_1} yields
\begin{equation}
    \beta_n^{\rm LDP}(\varepsilon,\epsilon_{\rm dp})
    \leq
    \inf_{T\in\mathcal C_{\epsilon_{\rm dp}}}
    \beta_\varepsilon\big((TP_0)^n,(TP_1)^n\big)\leq
    \lambda\left(\frac{1-\lambda}{\varepsilon}\right)^{\frac{1-\lambda}{\lambda}}
    \exp\left(
        -\frac{1-\lambda}{\lambda}n
        \sup_{T\in\mathcal C_{\epsilon_{\rm dp}}}
        D_\lambda(TP_1\|TP_0)
    \right).
\end{equation}
Thus, the achievability bound is governed by $\sup_{T\in\mathcal{C}_{\epsilon_{\rm dp}}}D_\lambda(TP_1\|TP_0)$,
the same optimisation encountered in~\eqref{eq:supremum}.

For the converse, this quantity was upper-bounded uniformly over all admissible channels via contraction coefficients. For achievability, we will instead evaluate $D_\lambda(TP_1\|TP_0)$ for a particular admissible channel, thereby obtaining a lower bound on $\sup_{T\in\mathcal{C}_{\epsilon_{\rm dp}}}D_\lambda(TP_1\|TP_0)$ and hence an explicit upper bound on $\beta_n^{\rm LDP}(\varepsilon, \epsilon_{\rm dp})$. Our next aim is thus to identify channels which approach the supremum.

For $\lambda\in(0,1)$, maximising the R\'enyi divergence is equivalent to maximising the $f$-divergence
\begin{equation}
1-\exp\big(-(1-\lambda)D_\lambda(TP_1\|TP_0)\big) = 1-h_\lambda(TP_1,TP_0).
\end{equation}
Therefore, one can apply the extremal-mechanism result~\cite[Theorem 2]{kairouz2016extremal} which states that an optimal channel may be chosen from the class of staircase mechanisms. This structural result considerably restricts the class of channels that must be considered, but does not in general provide a closed-form optimiser. We consequently examine two explicit staircase mechanisms.

The first one is the binary mechanism. Its relevance is not limited to its analytical simplicity: by~\cite[Theorem 5]{kairouz2016extremal}, for every pair $P_0,P_1$, it maximises every $f$-divergence between the privatised distributions whenever $\epsilon_{\rm dp}$ is sufficiently small. It therefore also maximises $D_\lambda(TP_1\|TP_0)$ in this high-privacy regime. We recall that, letting $A=\big\{x\in\mathcal X:p_1(x)>p_0(x)\big\}$, the binary mechanism $T_{\rm B}:\mathcal X\to\{0,1\}$ is defined by
\begin{equation}\label{eq:T_A_def}
    T_{\rm B}(1 |x) := \begin{cases} \dfrac{e^{\epsilon_{\rm dp}}}{1+e^{\epsilon_{\rm dp}}}, & \text{if } x \in A\\[10pt] \dfrac{1}{1+e^{\epsilon_{\rm dp}}}, & \text{if } x \notin A \end{cases}, \qquad
T_{\rm B}(0\mid x):=1-T_{\rm B}(1\mid x).
\end{equation}
The channel $T_{\rm B}$ maps the original probability measures $P_1$ and $P_0$ into Bernoulli distributions over the privatised space $\mathcal{Y} = \{0, 1\}$:
\begin{equation}
    T_{\rm B} P_1 = \mathrm{Bern}(p_{\epsilon_{\rm dp}}) \quad \text{and}\quad T_{\rm B} P_0 = \mathrm{Bern}(q_{\epsilon_{\rm dp}}),
\end{equation}
where
\begin{equation}
    p_{\epsilon_{\rm dp}}
    :=\frac{e^{\epsilon_{\rm dp}}P_1(A)+P_1(A^{\mathsf c})}
    {1+e^{\epsilon_{\rm dp}}},
    \qquad
    q_{\epsilon_{\rm dp}}
    :=\frac{e^{\epsilon_{\rm dp}}P_0(A)+P_0(A^{\mathsf c})}
    {1+e^{\epsilon_{\rm dp}}}.
\end{equation}
Although the binary mechanism is well suited to the high-privacy regime, it first reduces each observation $x$ to the binary statistic $\mathbbm{1}_A(x)$. Consequently, the privatised distributions retain only the probabilities $P_0(A)$ and $P_1(A)$, while all information within $A$ and $A^{\mathsf c}$ is discarded. In particular, in the limit $\epsilon_{\rm dp}\to\infty$, we get
\begin{equation}
    \lim_{\epsilon_{\rm dp}\to\infty} D_\lambda(T_{\rm B}P_1\|T_{\rm B}P_0) = d_\lambda\bigl(P_1(A)\|P_0(A)\bigr)\leq D_\lambda(P_1\|P_0),
\end{equation}
where $d_\lambda(\cdot\|\cdot)$ denotes the binary R\'enyi divergence of order $\lambda$, and the inequality follows from data processing and is generally strict. Thus, the binary mechanism does not in general recover the non-private bound in the low-privacy limit.

To complement the binary mechanism, we consider the $k$-ary randomised response, where $k=\lvert\mathcal X\rvert$, which is the channel \(T_{\rm RR}:\mathcal X\to\mathcal X\) defined by
\begin{equation}\label{eq:kary_rr}
    T_{\rm RR}(y| x) := 
    \begin{cases}
        \dfrac{e^{\epsilon_{\rm dp}}}{e^{\epsilon_{\rm dp}}+k-1}
        & \text{if }y=x,\\[10pt]
        \dfrac{1}{e^{\epsilon_{\rm dp}}+k-1}
        & \text{if }y\neq x.
    \end{cases}
\end{equation}
This mechanism is particularly relevant in the low-privacy regime, as proven in~\cite[Theorem 8]{kairouz2016extremal}. Specifically, it maximises the KL divergence between the privatised distributions when $\epsilon_{\rm dp}$ is sufficiently large. Although this does not imply optimality for R\'enyi divergence of every order, the mechanism also has the important property that it converges to the identity channel as $\epsilon_{\rm dp}\to\infty$. Thus, the achievability bound obtained from $T_{\rm RR}$ recovers the corresponding non-private bound as the privacy constraint vanishes.

Because $T_{\rm B}$ and $T_{\rm RR}$ belong to the feasible set of $\epsilon_{\rm dp}$-LDP channels, they provide a lower bound to the supremum:
\begin{equation}\label{eq:lower_bound_supremum}
    D_\lambda^{\rm ach}(\epsilon_{\rm dp}):=\max\big\{D_\lambda(T_{\rm B}P_1\|T_{\rm B}P_0), D_\lambda(T_{\rm RR} P_1 \| T_{\rm RR} P_0)\big\}\leq\sup_{T \in \mathcal{C}_{\epsilon_{\rm dp}}} D_\lambda(TP_1 \| TP_0), 
\end{equation}
Applying Corollary~\ref{cor:beta_optimal_upper_bound_renyi} to the mechanism attaining the maximum defining $D_\lambda^{\rm ach}(\epsilon_{\rm dp})$ gives the following result, whose proof is in Appendix~\ref{app:cor:ldp_upper_bound}.
\begin{corollary}\label{cor:ldp_upper_bound}
    Let $P_0\neq P_1$ and let $\epsilon_{\rm dp}>0$. Then, for every $\varepsilon\in(0,1)$,
    \begin{equation}\label{eq:ldp_beta_upper_bound}
        \beta_n^{\rm LDP}(\varepsilon, \epsilon_{\rm dp})\leq\inf_{\lambda\in(0,1)}\lambda\left(\frac{1-\lambda}{\varepsilon}\right)^{\frac{1-\lambda}{\lambda}}\exp\left(-\frac{1-\lambda}{\lambda}nD_\lambda^{\rm ach}(\epsilon_{\rm dp})\right).
    \end{equation}
\end{corollary}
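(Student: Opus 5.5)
The plan is to reduce the statement to Corollary~\ref{cor:beta_optimal_upper_bound_renyi} applied to privatised i.i.d. data, with the channel chosen between $T_{\rm B}$ and $T_{\rm RR}$. Fix $\lambda\in(0,1)$, and let $T^\star\in\{T_{\rm B},T_{\rm RR}\}$ be a channel attaining the maximum in the definition \eqref{eq:lower_bound_supremum} of $D_\lambda^{\rm ach}(\epsilon_{\rm dp})$. The choice of $T^\star$ may depend on $\lambda$; this is harmless, since the infimum over $\lambda$ is taken only at the end.

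\textbf{Admissibility.} First check that both $T_{\rm B}$ and $T_{\rm RR}$ lie in $\mathcal C_{\epsilon_{\rm dp}}$. On finite output alphabets it suffices to compare singletons. For $T_{\rm B}$, the ratio of the two possible values of $T_{\rm B}(y\mid x)$ is at most $e^{\epsilon_{\rm dp}}$. For $T_{\rm RR}$, the ratio $T_{\rm RR}(y\mid x)/T_{\rm RR}(y\mid x')$ is one of $1$, $e^{\epsilon_{\rm dp}}$, $e^{-\epsilon_{\rm dp}}$. For a general set $S$, the inequality follows by summing the singleton inequalities.

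\textbf{Reduction to a common channel.} Restricting the infimum in \eqref{eq:private_beta_definition} to $T_1=\cdots=T_n=T^\star$ gives
\[
\beta_n^{\rm LDP}(\varepsilon,\epsilon_{\rm dp})\leq\beta_\varepsilon\bigl((T^\star P_0)^n,(T^\star P_1)^n\bigr).
\]
The right-hand side is an i.i.d. testing problem between $T^\star P_0$ and $T^\star P_1$. We then apply Corollary~\ref{cor:beta_optimal_upper_bound_renyi} to this pair. Since $D_\lambda(T^\star P_1\|T^\star P_0)=D_\lambda^{\rm ach}(\epsilon_{\rm dp})$, the exponent is exactly the one in \eqref{eq:ldp_beta_upper_bound}. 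Taking the infimum over $\lambda\in(0,1)$ then finishes the proof.

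\textbf{Main obstacle: hypotheses of Corollary~\ref{cor:beta_optimal_upper_bound_renyi}.} That corollary assumes mutual absolute continuity, and the privatised laws satisfy it automatically. For $\epsilon_{\rm dp}<\infty$, every entry of $T_{\rm B}$ and $T_{\rm RR}$ is strictly positive. Hence $T^\star P_j(y)>0$ for every output $y$ and both $j$, so $T^\star P_0$ and $T^\star P_1$ are equivalent and have full support. This is why the statement does not need the absolute continuity assumption on $P_0,P_1$ themselves.

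If $D_\lambda^{\rm ach}(\epsilon_{\rm dp})=0$, the bound should remain valid. In that case $T^\star P_0=T^\star P_1$, and the right-hand side reduces to $\lambda\bigl((1-\lambda)/\varepsilon\bigr)^{(1-\lambda)/\lambda}$. One should confirm that Corollary~\ref{cor:beta_optimal_upper_bound_renyi} (or a direct check) covers identical distributions; otherwise, the bound is trivial whenever its value is at least $1$.

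Finally, note that any $\lambda$-dependent choice of mechanism is legitimate. For each fixed $\lambda$, the inequality holds for $\beta_n^{\rm LDP}$ itself, which does not depend on $\lambda$.
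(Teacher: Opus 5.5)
Your proposal is correct and follows the paper's proof essentially step for step. You fix $\lambda$, pick the maximising mechanism in $\{T_{\rm B},T_{\rm RR}\}$, restrict the infimum defining $\beta_n^{\rm LDP}$ to that common channel, invoke Corollary~\ref{cor:beta_optimal_upper_bound_renyi} (mutual absolute continuity coming from the strictly positive channel entries), and take the infimum over $\lambda$ at the end. Your explicit admissibility check is a welcome addition. The degenerate case you worry about cannot actually occur: $T_{\rm RR}P_1-T_{\rm RR}P_0=\frac{e^{\epsilon_{\rm dp}}-1}{e^{\epsilon_{\rm dp}}+k-1}(P_1-P_0)\neq0$, so $D_\lambda^{\rm ach}(\epsilon_{\rm dp})>0$ whenever $P_0\neq P_1$ and $\epsilon_{\rm dp}>0$.
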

Inverting~\eqref{eq:ldp_beta_upper_bound} gives the corresponding sample-complexity bound.
\begin{corollary}\label{cor:ldp_sample_complexity_upper_bound}
    Let $P_0\neq P_1$ be mutually absolutely continuous. For every $\varepsilon,\delta\in(0,1)$ such that $\varepsilon+\delta<1$, and every $\epsilon_{\rm dp}>0$,
    \begin{equation}
        n^{\rm LDP}(\varepsilon,\delta,\epsilon_{\rm dp})\leq\inf_{\lambda\in(0,1)}\left\lceil\frac{\log\bigl(\frac{1-\lambda}\varepsilon\bigr)+\dfrac{\lambda}{1-\lambda}\log\bigl(\frac\lambda\delta\bigr)}{D_\lambda^{\rm ach}(\epsilon_{\rm dp})}\right\rceil.
    \end{equation}
\end{corollary}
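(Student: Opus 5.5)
The bound follows from Corollary~\ref{cor:ldp_upper_bound}, inverted in the same way as Corollary~\ref{cor:upper_bound_sample_complexity} was obtained from Corollary~\ref{cor:beta_optimal_upper_bound_renyi}. By definition~\eqref{eq:ldp_sample_complexity_definition}, it suffices to show that for every $\lambda\in(0,1)$ and every integer $n$ at least the bracketed quantity we have $\beta_n^{\rm LDP}(\varepsilon,\epsilon_{\rm dp})\le\delta$. Taking the minimum over such $n$ and then the infimum over $\lambda$ gives the claim.

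Fix $\lambda\in(0,1)$. We first check that $D_\lambda^{\rm ach}(\epsilon_{\rm dp})>0$ and finite. Since $P_0\neq P_1$ are mutually absolutely continuous on a finite alphabet and $\epsilon_{\rm dp}>0$, the $k$-ary randomised response $T_{\rm RR}$ is an invertible stochastic matrix. Hence $T_{\rm RR}P_1\neq T_{\rm RR}P_0$, and both have full support, so $0<D_\lambda(T_{\rm RR}P_1\|T_{\rm RR}P_0)<\infty$. The binary mechanism also has full-support outputs, so the maximum defining $D_\lambda^{\rm ach}(\epsilon_{\rm dp})$ is finite and strictly positive. This step matters because otherwise the ratio could be undefined or infinite, and I expect it to be the only point needing care.

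Next, Corollary~\ref{cor:ldp_upper_bound} gives
\[
\beta_n^{\rm LDP}(\varepsilon,\epsilon_{\rm dp})\le \lambda\left(\frac{1-\lambda}{\varepsilon}\right)^{\frac{1-\lambda}{\lambda}}\exp\left(-\frac{1-\lambda}{\lambda}nD_\lambda^{\rm ach}(\epsilon_{\rm dp})\right).
\]
Requiring the right-hand side to be at most $\delta$ and taking logarithms gives
\[
\log\lambda+\frac{1-\lambda}{\lambda}\log\frac{1-\lambda}{\varepsilon}-\frac{1-\lambda}{\lambda}nD_\lambda^{\rm ach}(\epsilon_{\rm dp})\le\log\delta.
\]
Multiplying by $\lambda/(1-\lambda)>0$ and rearranging, this is equivalent to
\[
n\ge \frac{\log\frac{1-\lambda}{\varepsilon}+\frac{\lambda}{1-\lambda}\log\frac{\lambda}{\delta}}{D_\lambda^{\rm ach}(\epsilon_{\rm dp})}.
\]
The smallest integer $n\ge1$ satisfying this is therefore admissible. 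If the numerator is nonpositive, the bound holds for every $n$, and the ceiling should be read as at least $1$, matching the convention of Corollary~\ref{cor:upper_bound_sample_complexity}.

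Since $\lambda$ was arbitrary, $n^{\rm LDP}(\varepsilon,\delta,\epsilon_{\rm dp})$ is at most the ceiling for every $\lambda\in(0,1)$, and taking the infimum completes the proof. The assumption $\varepsilon+\delta<1$ is not used in the derivation. It only ensures the problem is nontrivial, consistent with the non-private corollary. I expect the argument to be routine once positivity of $D_\lambda^{\rm ach}$ is checked.
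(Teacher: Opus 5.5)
Your proposal is correct and follows the paper's route: the paper gets this corollary by inverting Corollary~\ref{cor:ldp_upper_bound} for each fixed $\lambda$, exactly as Corollary~\ref{cor:upper_bound_sample_complexity} was obtained from Corollary~\ref{cor:beta_optimal_upper_bound_renyi}, and your check that $0<D_\lambda^{\rm ach}(\epsilon_{\rm dp})<\infty$ (via invertibility and full support of $T_{\rm RR}$) fills in a detail the paper leaves implicit. One correction to your closing remark: the hypothesis $\varepsilon+\delta<1$ is what rules out your nonpositive-numerator case, since the log-sum inequality gives $(1-\lambda)\log\frac{1-\lambda}{\varepsilon}+\lambda\log\frac{\lambda}{\delta}\geq\log\frac{1}{\varepsilon+\delta}>0$, so the ceiling is automatically a positive integer and needs no reinterpretation.
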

Since the randomised response mechanism converges to the identity channel,
\begin{equation}\label{eq:dach_limit}
    \lim_{\epsilon_{\rm dp}\to\infty} D_\lambda^{\rm ach}(\epsilon_{\rm dp})= D_\lambda(P_1\|P_0).
\end{equation}
Consequently, the bounds above recover their non-private counterparts as $\epsilon_{\rm dp}\to\infty$.
\begin{figure}[t] 
    \centering 
    \begin{subfigure}{\textwidth}
        \centering
        \includegraphics[width=\textwidth]{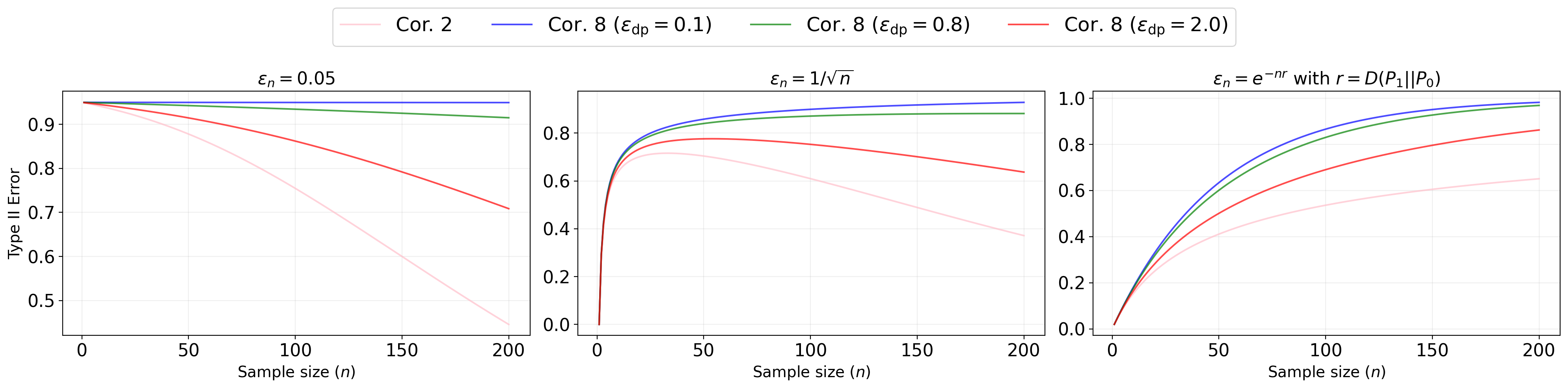}
    \end{subfigure}
    
    \caption{Comparison of the finite-sample upper bounds for Bernoulli testing ($P_0=\text{Bern}(1/2)$ vs $P_1=\text{Bern}(1/2+\Delta)$) for $\Delta=0.1$ under local differential privacy (LDP) constraints. The figure compares the non-private bound (Corollary~\ref{cor:beta_optimal_upper_bound_renyi}) against the private bound (Corollary~\ref{cor:ldp_upper_bound}) for $\epsilon_{\rm dp}\in\{0.1,0.8, 2.0\}$ across three Type~I error regimes: $\varepsilon_n=0.05$ (left column), $ \varepsilon_n = 1/\sqrt{n}$ (middle column), and $\varepsilon_n=e^{-nr}$ with $r=D(P_1\|P_0)$ (right column). All plotted bounds have been optimised over $\lambda\in(0,1)$.}
    \label{fig:ub_bernoulli_privacy}
\end{figure}

Figure~\ref{fig:ub_bernoulli_privacy} compares the private achievability bound of Corollary~\ref{cor:ldp_upper_bound} with its non-private counterpart in Corollary~\ref{cor:beta_optimal_upper_bound_renyi}. As in the converse comparison of Figure~\ref{fig:bernoulli_privacy}, the privacy constraint degrades the achievable error trade-off, yielding larger upper bounds as $\epsilon_{\rm dp}$ decreases towards zero.

Figure~\ref{fig:bernoulli_privacy_eps_func} examines the private achievability bound of Corollary~\ref{cor:ldp_upper_bound} from a complementary perspective by fixing the sample size and varying the privacy level. One can observe that each private upper bound is nonincreasing in $\epsilon_{\rm dp}$ and approaches the corresponding non-private bound as $\epsilon_{\rm dp}\to\infty$, consistently with~\eqref{eq:dach_limit}. Conversely, as the privacy constraint becomes more stringent ($\epsilon_{\rm dp}\to 0$), the privatised observations become uninformative and the bounds approach $1-\varepsilon_n$.
 

Finally, Figure~\ref{fig:bernoulli_privacy_opt_comp} compares the achievability bound of Corollary~\ref{cor:ldp_upper_bound} with the exact optimal private Type~II error. In this Bernoulli example, $\mathcal X=\{0,1\}$ and $A=\{1\}$, so $T_{\rm B}$ coincides with the binary randomised response. By~\cite[Theorem~18, specialised to pure LDP]{kairouz2016extremal}, this mechanism is optimal among all admissible privacy channels, and the exact private error can be evaluated using the Neyman--Pearson lemma. Under the fixed Type~I constraint, the upper bound remains close to $1-\varepsilon_n$ and becomes increasingly conservative as $n$ grows. For the two vanishing constraints, it captures the finite-sample behaviour of the optimal error more closely over the range shown. The agreement is particularly close in the exponential regime, where both the bound and the exact error increase towards one for $r=D(P_1\|P_0)$. We next establish a private phase-transition result that explains this behaviour.

 \begin{figure}[t] 
    \centering 
    \begin{subfigure}{\textwidth}
        \centering
        \includegraphics[width=\textwidth]{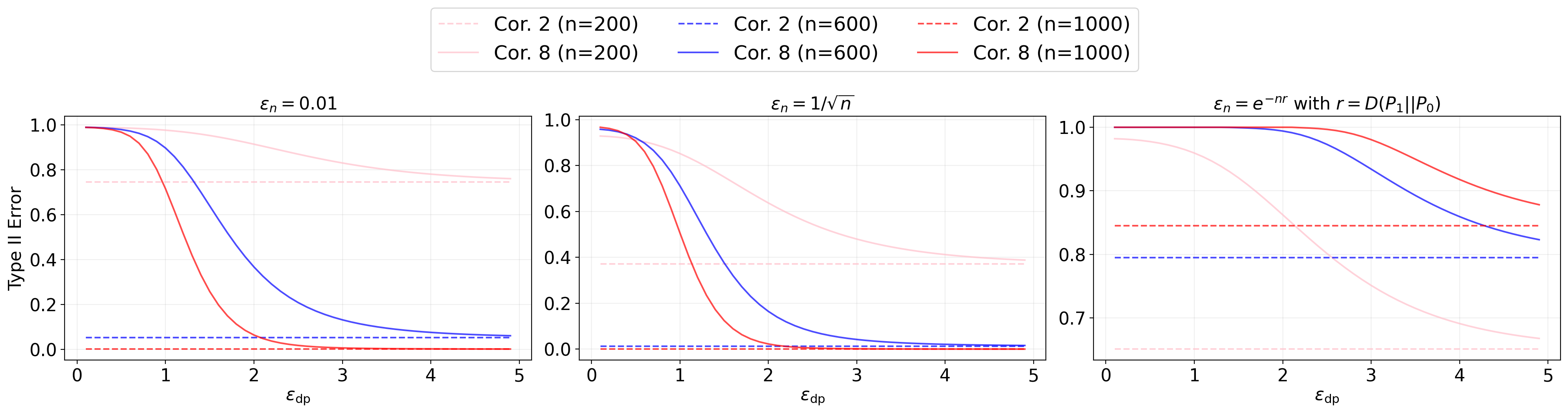}
    \end{subfigure}
    
    \caption{Finite-sample private achievability bounds for Bernoulli testing ($P_0=\text{Bern}(1/2)$ vs $P_1=\text{Bern}(1/2+\Delta)$) with $\Delta=0.1$, plotted as a function of the privacy constraint $\epsilon_{\rm dp}$. Specifically, we compare the private bounds (Corollary~\ref{cor:ldp_upper_bound}), in solid lines, against their corresponding non-private bound (Corollary~\ref{cor:beta_optimal_upper_bound_renyi}), in dashed lines, across three Type~I error regimes: $\varepsilon_n=0.01$ (left column), $\varepsilon_n = 1/\sqrt{n}$ (middle column), and $\varepsilon_n=e^{-nr}$ with $r=D(P_1\|P_0)$ (right column). Three fixed sample sizes $n\in\{200,600,1000\}$ are considered, distinguished by different colours. All plotted bounds have been optimised over $\lambda\in(0,1)$.}
    \label{fig:bernoulli_privacy_eps_func}
\end{figure}

 \begin{figure}[t] 
    \centering 
    \begin{subfigure}{\textwidth}
        \centering
        \includegraphics[width=\textwidth]{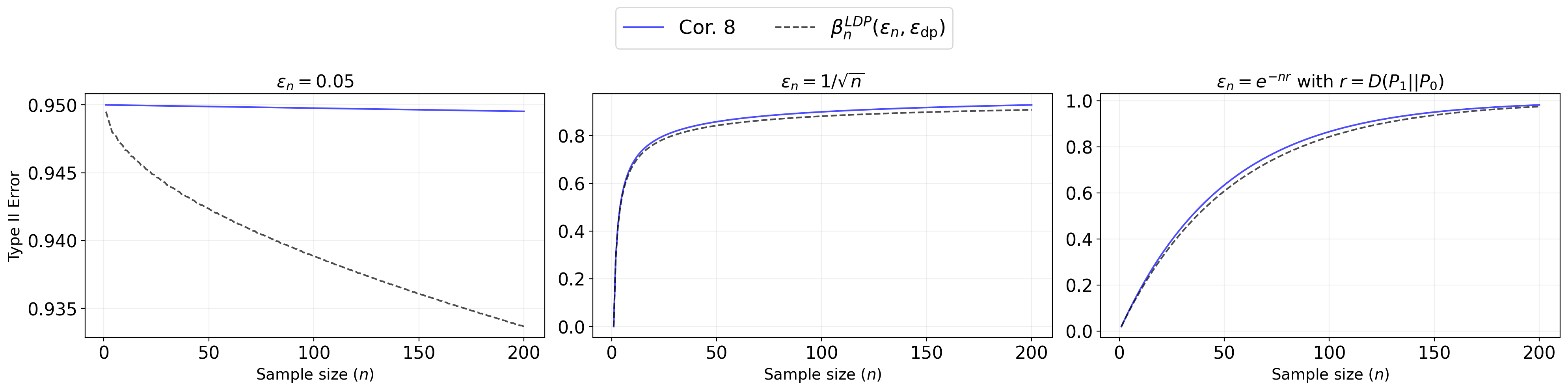}
    \end{subfigure}
    
    \caption{Comparison of the finite-sample private achievability bound of Corollary~\ref{cor:ldp_upper_bound} with the exact optimal private Type~II error $\beta_n^{\rm LDP}(\varepsilon_n,\epsilon_{\rm dp})$ for Bernoulli testing ($P_0=\text{Bern}(1/2)$ vs $P_1=\text{Bern}(1/2+\Delta)$) with $\Delta=0.1$ and $\epsilon_{\rm dp}=0.1$. Three Type~I error regimes are considered: $\varepsilon_n=0.05$ (left column), $\varepsilon_n = 1/\sqrt{n}$ (middle column), and $\varepsilon_n=e^{-nr}$ with $r=D(P_1\|P_0)$ (right column). The bound of Corollary~\ref{cor:ldp_upper_bound} has been optimised over $\lambda\in(0,1)$.}
    \label{fig:bernoulli_privacy_opt_comp}
\end{figure}

 To formulate the corresponding phase-transition threshold, define the achievable KL divergence
\begin{equation}\label{eq:ldp_achievable_kl}
    D^{\rm ach}(\epsilon_{\rm dp}):=\lim_{\lambda\uparrow1}D_\lambda^{\rm ach}(\epsilon_{\rm dp})=\max\big\{D(T_{\rm B}P_1\|T_{\rm B}P_0),D(T_{\rm RR}P_1\|T_{\rm RR}P_0)\big\}.
\end{equation}
We now combine \Cref{cor:LDP_lower_bound} with \Cref{cor:ldp_upper_bound} to examine exponentially decaying Type~I constraints. The following theorem identifies regimes in which the optimal private Type~II error converges exponentially to zero or one, with a possible gap between the corresponding thresholds. Its proof is deferred to Appendix~\ref{app:th:ldp_phase_transition}.
\begin{theorem}
\label{th:ldp_phase_transition}
Let $P_0\neq P_1$ be mutually absolutely continuous probability
measures on a finite alphabet. Fix $0<\epsilon_{\mathrm{dp}}<\infty$,
and consider the independent-channel LDP model defined above.
For $\varepsilon_n=e^{-nr}$ with $r>0$, the optimal private
Type~II error satisfies
\begin{equation}\label{eq:ldp_phase_transition}
\lim_{n\to\infty}\beta_n^{\rm LDP}(\varepsilon_n, \epsilon_{\rm dp})
=
\begin{cases}
1, & \text{if } r>r_{\mathrm{con}},\\
0, & \text{if } 0<r<r_{\mathrm{ach}},
\end{cases}
\end{equation}
where $r_{\mathrm{ach}}:=D^{\rm ach}(\epsilon_{\rm dp})$ and $r_\mathrm{con}:=(1-e^{-\epsilon_{\rm dp}})D(P_1\|P_0).$
More precisely, for every $n\ge1$,
\begin{equation}\label{eq:ldp_phase_converse}
\beta_n^{\rm LDP}(\varepsilon_n, \epsilon_{\rm dp})
\ge
1-\inf_{\lambda>1}
\exp\!\left(
-\frac{\lambda-1}{\lambda}n
\left(
r-(1-e^{-\epsilon_{\mathrm{dp}}})D_\lambda(P_1\|P_0)
\right)
\right),
\end{equation}
and
\begin{equation}\label{eq:ldp_phase_achievability}
\beta_n^{\rm LDP}(\varepsilon_n, \epsilon_{\rm dp})
\le
\inf_{\lambda\in(0,1)}
\exp\!\left(
-\frac{1-\lambda}{\lambda}n
\big(D_\lambda^{\rm ach}(\epsilon_{\rm dp})-r\big)
\right).
\end{equation}
\end{theorem}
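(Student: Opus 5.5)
Both finite-$n$ inequalities follow from the private bounds already established, by specialising to $\varepsilon_n=e^{-nr}$ and simplifying. The two limits then follow from the continuity of $\lambda\mapsto D_\lambda$ at $\lambda=1$, exactly as in Theorem~\ref{th:phase_transition}.

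\emph{Converse \eqref{eq:ldp_phase_converse}.} Let $c\to\infty$ in the first supremum of Corollary~\ref{cor:LDP_lower_bound}, which gives the simplified form displayed after that corollary:
\[
\beta_n^{\rm LDP}(\varepsilon,\epsilon_{\rm dp})\ge 1-\inf_{\lambda>1}\bigl(\varepsilon\,e^{n(1-e^{-\epsilon_{\rm dp}})D_\lambda(P_1\|P_0)}\bigr)^{\frac{\lambda-1}{\lambda}} .
\]
Substituting $\varepsilon=e^{-nr}$ yields \eqref{eq:ldp_phase_converse} directly. On a finite alphabet with mutually absolutely continuous $P_0,P_1$, the function $D_\lambda(P_1\|P_0)$ is finite for all $\lambda>0$ and continuous in $\lambda$. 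So the restriction on orders in Theorem~\ref{thm:renyi_converse_bound_via_vr} is vacuous, and right-continuity at $1$ holds with no extra moment assumption. If $r>r_{\rm con}$, I pick $\lambda>1$ close enough to $1$ that $(1-e^{-\epsilon_{\rm dp}})D_\lambda(P_1\|P_0)<r$. The exponent $\frac{\lambda-1}{\lambda}\bigl(r-(1-e^{-\epsilon_{\rm dp}})D_\lambda\bigr)$ is then strictly positive, so the bound tends to $1$. Since $\beta\le 1$, the limit equals $1$.

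\emph{Achievability \eqref{eq:ldp_phase_achievability}.} I start from Corollary~\ref{cor:ldp_upper_bound} with $\varepsilon=e^{-nr}$:
\[
\lambda(1-\lambda)^{\frac{1-\lambda}{\lambda}}\,e^{\frac{1-\lambda}{\lambda}nr}\,e^{-\frac{1-\lambda}{\lambda}nD^{\rm ach}_\lambda(\epsilon_{\rm dp})}.
\]
The prefactor satisfies $\lambda(1-\lambda)^{(1-\lambda)/\lambda}\le 1$ for $\lambda\in(0,1)$, since both factors lie in $(0,1]$. Dropping it gives \eqref{eq:ldp_phase_achievability}, mirroring the derivation of \eqref{eq:r<} from Corollary~\ref{cor:beta_optimal_upper_bound_renyi}. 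For the limit, take $r<r_{\rm ach}$. On finite alphabets each of $D_\lambda(T_{\rm B}P_1\|T_{\rm B}P_0)$ and $D_\lambda(T_{\rm RR}P_1\|T_{\rm RR}P_0)$ is left-continuous at $\lambda=1$. Hence their maximum $D^{\rm ach}_\lambda(\epsilon_{\rm dp})$ converges to $D^{\rm ach}(\epsilon_{\rm dp})$ as $\lambda\uparrow1$, which justifies the identification in \eqref{eq:ldp_achievable_kl}. I then choose $\lambda<1$ with $D^{\rm ach}_\lambda(\epsilon_{\rm dp})>r$, so the exponent is negative and linear in $n$, and the bound tends to $0$.

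\emph{Main obstacle.} The main delicate step is legitimising the converse input. Corollary~\ref{cor:LDP_lower_bound} relies on \eqref{eq:bound_supremum}, that is, on $\eta_\lambda(T)\le\eta_\infty(T)\le 1-e^{-\epsilon_{\rm dp}}$ for all $\lambda>1$. I would check that the cited contraction results apply on the finite output alphabets considered here. I would also handle the degenerate case where $D_\lambda(TP_1\|TP_0)=0$ for some channels: there the inequality holds trivially, so the supremum bound \eqref{eq:supremum} is unaffected. Apart from this, the argument consists only of the limit $c\to\infty$ (Remark~\ref{rem:DPI_recover}) and the continuity arguments above. A gap between $r_{\rm ach}$ and $r_{\rm con}$ may remain, and the statement does not claim otherwise.
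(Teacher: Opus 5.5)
Your proposal is correct and follows the paper's proof essentially step for step. For the converse you let $c\to\infty$ in Corollary~\ref{cor:LDP_lower_bound} and use right-continuity of $D_\lambda$ at $\lambda=1$; for achievability you take Corollary~\ref{cor:ldp_upper_bound}, drop the prefactor $\lambda(1-\lambda)^{(1-\lambda)/\lambda}\le 1$, and use left-continuity of $D_\lambda^{\rm ach}(\epsilon_{\rm dp})$. The paper additionally verifies $r_{\rm ach}\le r_{\rm con}$ via the contraction bound, which the stated limits do not require.
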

 \begin{remark}
     Since $r_{\mathrm{ach}}\le r_{\mathrm{con}}$, these bounds identify achievability and strong-converse regimes, while leaving the interval $[r_{\mathrm{ach}},r_{\mathrm{con}}]$ unresolved. Thus, the possible gap between the thresholds does not prevent establishing vanishing error below $r_{\mathrm{ach}}$, but it precludes identifying a sharp transition from these explicit bounds alone. We conjecture that the exact threshold occurs at $r_\star := \sup_{T\in \mathcal{C}_{\epsilon_{\rm dp}}} D(TP_1 \|TP_0).$
 \end{remark}

The bounds established in this section provide valuable insights into the local differential privacy setting, quantifying how the privacy constraint strictly increases the difficulty of the hypothesis testing problem.

The effect of privacy enters the converse through the distribution-independent contraction bound, whereas the achievability result uses the explicit lower bound $D_\lambda^{\rm ach}(\epsilon_{\rm dp})$ obtained from the binary mechanism and $k$-ary randomised response. The binary mechanism is optimal in the high-privacy regime, while randomised response attains $\sup_{T\in \mathcal{C}_{\epsilon_{\rm dp}}} D(TP_1 \|TP_0)$ in the low-privacy regime and recovers the non-private bound as $\epsilon_{\rm dp}\to\infty$. Outside these settings, however, the two mechanisms need not attain $\sup_{T\in \mathcal{C}_{\epsilon_{\rm dp}}} D_\lambda(TP_1 \|TP_0)$. The possible gap between the achievability and converse bounds may therefore arise from both the contraction estimate used for the converse and the restriction to these two mechanisms for achievability. Obtaining matching bounds throughout the full range of privacy levels remains an open question.


\section{Conclusion}\label{sec:conclusion}

The primary contribution of this paper is the derivation of a finite-sample R\'enyi converse bound  for asymmetric binary hypothesis testing. Established via a variational formulation approach, the bound leverages both divergence directions, tensorises under independent observations, and recovers the standard data-processing converses as a limiting case of its optimisation parameters ($c\to \infty$).  Furthermore, by applying a similar variational approach, we also recover total-variation, $E_\gamma$, and Hellinger-based converses within a unified comparison framework for general $f$-divergences.

In terms of R\'enyi divergences of order $\lambda\in(0,1)$, we also provide a complementary achievability bound.

Under an exponentially decaying Type I error constraint $\varepsilon_n = e^{-nr}$, we demonstrate that the two R\'enyi bounds yield explicit exponential control of $\beta_n(\varepsilon_n)$ and $1-\beta_n(\varepsilon_n)$ on the respective sides of the known phase-transition threshold $D(P_1\|P_0)$.

Subsequently, by inverting these new finite-sample bounds, we obtain lower and upper bounds on sample complexity. Moreover, under regularity conditions, we demonstrate that these bounds are asymptotically matching in strongly asymmetric regimes.



Finally, we extend our results to the local differential privacy setting. A distribution-independent contraction coefficient yields finite-sample converse and sample-complexity lower bounds, while the binary mechanism and $k$-ary randomised response provide explicit achievability and sample-complexity upper bounds. These mechanisms are suited to complementary privacy regimes, and moreover the $k$-ary randomised response ensures that the non-private achievability bound is recovered as the privacy constraint vanishes. Combining the private converse and achievability bounds also identifies regimes in which the optimal Type~II error converges exponentially to zero or one, with a possible gap between the corresponding thresholds.

Several questions remain open for future research. In the non-private setting, the finite-sample behaviour at the critical rate $r=D(P_1\|P_0)$ is not resolved by our bounds. Furthermore, in the locally private setting, the present results do not determine the exact phase-transition threshold throughout the full range of privacy levels. Closing this gap would require improving one or both sides of the analysis: finding an admissible channel that strengthens the achievability bound beyond the two mechanisms considered here, or tightening the contraction-based converse.

\newpage
\appendices

\section{Proofs for the Converse and Achievability via R\'enyi Divergence}
\subsection{Proof of Theorem \ref{thm:renyi_converse_bound_via_vr}}\label{app:thm:renyi_converse_bound_via_vr}
    Let $\phi$ be a possibly randomised optimal rule achieving $\beta_n(\varepsilon)$, with Type~I error $\alpha_n\leq\varepsilon$. Adjoin an independent $U\sim\mathrm{Unif}[0,1]$ under both hypotheses, let $\mathsf U$ denote its law, and define
    \begin{equation*}
        A=\{(\mathbf{x},u):u\leq\phi(\mathbf{x})\}.
    \end{equation*}
    Then $P_0^n(A)=\alpha_n$ and $P_1^n(A^c)=\beta_n(\varepsilon)$ under the augmented laws. Moreover,
    \begin{equation*}
        D_\lambda(P_1^n\otimes\mathsf U\|P_0^n\otimes\mathsf U)
        =D_\lambda(P_1^n\|P_0^n).
    \end{equation*}
    In what follows, probabilities and expectations involving $A$ are taken under the augmented laws, with the common auxiliary factor suppressed from the notation.

    Setting $g=c\mathbbm{1}_{A}$ for an arbitrary constant $c>0$ in \eqref{eq:Renyi_var_formula}, we obtain the following lower bound for $D_{\lambda}(P_1^n\|P_0^n)$:
    \begin{align}
        D_\lambda(P_1^n\|P_0^n)&\geq \frac{\lambda}{\lambda-1} \log \mathbb{E}_{P_1^n}[e^{(\lambda-1)c\mathbbm{1}_{A}}] - \log \mathbb{E}_{P_0^n}[e^{\lambda c\mathbbm{1}_{A}}]\nonumber\\
        &=\frac{\lambda}{\lambda-1}\log\left(\int_A e^{(\lambda-1)c}dP_1^n +\int_{A^c} dP_1^n\right)-\log\left(\int_A e^{\lambda c}dP_0^n +\int_{A^c} dP_0^n\right)\nonumber\\
        &=\frac{\lambda}{\lambda-1}\log\left(e^{(\lambda-1)c}P_1^n(A)+P_1^n(A^c)\right)-\log\left(e^{\lambda c}P_0^n(A)+P_0^n(A^c)\right).\label{eq:first_step_P_1_P_0}
    \end{align}
    Recalling that $P_0^n(A)=\alpha_n$ and $P_1^n(A^c)=\beta_n(\varepsilon)$, and applying the tensorisation property for i.i.d. observations ($D_\lambda(P_1^n\|P_0^n)=nD_\lambda(P_1\|P_0)$),  \eqref{eq:first_step_P_1_P_0} becomes
    \begin{equation}\label{eq:second_step_P_1_P_0}
        nD_\lambda(P_1\|P_0)\geq \frac{\lambda}{\lambda-1}\log(\beta_n(\varepsilon)+(1-\beta_n(\varepsilon))e^{(\lambda-1)c})-\log(1-\alpha_n+\alpha_ne^{\lambda c}).
    \end{equation}
    Rearranging \eqref{eq:second_step_P_1_P_0} to isolate the term containing $\beta_n(\varepsilon)$ yields
    \begin{align}
        \frac{\lambda}{\lambda-1}\log(\beta_n(\varepsilon)+(1-\beta_n(\varepsilon))e^{(\lambda-1)c})&\leq nD_\lambda(P_1\|P_0)+\log(1-\alpha_n+\alpha_ne^{\lambda c}).\label{eq:third_step_P_1_P_0}
    \end{align}
    Assuming $\lambda>1$, we multiply by $\frac{\lambda-1}{\lambda}$ and exponentiate  both sides of the inequality \eqref{eq:third_step_P_1_P_0}, obtaining
    \begin{equation}
        \beta_n(\varepsilon)+(1-\beta_n(\varepsilon))e^{(\lambda-1)c}\leq e^{\frac{\lambda-1}{\lambda}nD_\lambda(P_1\|P_0)}(1-\alpha_n+\alpha_ne^{\lambda c})^{\frac{\lambda-1}{\lambda}}.
    \end{equation}
    Grouping the $\beta_n(\varepsilon)$ terms on the left side, we get
    \begin{equation}\label{eq:fourth_step_P_1_P_0}
        \beta_n(\varepsilon)(1-e^{(\lambda-1)c})\leq e^{\frac{\lambda-1}{\lambda}nD_\lambda(P_1\|P_0)}(1-\alpha_n+\alpha_ne^{\lambda c})^{\frac{\lambda-1}{\lambda}}-e^{(\lambda-1)c}.
    \end{equation}
    For $c>0$ and $\lambda>1$, the term $1-e^{(\lambda-1)c}<0$. Consequently, dividing both sides of \eqref{eq:fourth_step_P_1_P_0} by $1-e^{(\lambda-1)c}$, we get
    \begin{align}
        \beta_n&\geq \frac{e^{\frac{\lambda-1}{\lambda}nD_\lambda(P_1\|P_0)}(1-\alpha_n+\alpha_ne^{\lambda c})^{\frac{\lambda-1}{\lambda}}-e^{(\lambda-1)c}}{1-e^{(\lambda-1)c}}\nonumber\\
        &=\frac{e^{(\lambda-1)c}-e^{\frac{\lambda-1}{\lambda}nD_\lambda(P_1\|P_0)}(1-\alpha_n+\alpha_ne^{\lambda c})^{\frac{\lambda-1}{\lambda}}}{e^{(\lambda-1)c}-1}.\label{eq:last_step_P_1_P_0}
    \end{align}
    Recalling that $\alpha_n\leq\varepsilon$, and since for $c>0$ and $\lambda>1$, the term $e^{\lambda c}>1$, it follows that the right-hand side of \eqref{eq:last_step_P_1_P_0} decreases as $\alpha_n$ increases. This implies
    \begin{equation}\label{eq:final_formula_P_1_P_0}
        \beta_n(\varepsilon)\geq \frac{e^{(\lambda-1)c}-e^{\frac{\lambda-1}{\lambda}nD_\lambda(P_1\|P_0)}(1-\varepsilon+\varepsilon e^{\lambda c})^{\frac{\lambda-1}{\lambda}}}{e^{(\lambda-1)c}-1}.
    \end{equation}
    Because the inequality \eqref{eq:final_formula_P_1_P_0} holds for any arbitrary $c>0$ and $\lambda>1$, taking the supremum over these two parameters gives

    \begin{equation}\label{eq:converse_renyi_P_1_P_0}
        \beta_n(\varepsilon)\geq\sup_{\lambda>1,c>0} \Bigg\{\frac{e^{(\lambda-1)c}-e^{\frac{\lambda-1}{\lambda}nD_\lambda(P_1\|P_0)}(1-\varepsilon+\varepsilon e^{\lambda c})^{\frac{\lambda-1}{\lambda}}}{e^{(\lambda-1)c}-1}\Bigg\}.
    \end{equation}

    To obtain the corresponding bound in terms of $D_\lambda(P_0\|P_1)$, we proceed in a similar manner. By setting $g=c(1-\mathbbm{1}_{A})$ for $c>0$ in \eqref{eq:Renyi_var_formula}, we obtain the following lower bound for $D_{\lambda}(P_0^n\|P_1^n)$:
    \begin{align}
        D_\lambda(P_0^n\|P_1^n)&\geq \frac{\lambda}{\lambda-1} \log \mathbb{E}_{P_0^n}[e^{(\lambda-1)c(1-\mathbbm{1}_{A})}] - \log \mathbb{E}_{P_1^n}[e^{\lambda c(1-\mathbbm{1}_{A})}]\nonumber\\
        &=\frac{\lambda}{\lambda-1}\log\left(\int_{A^c} e^{(\lambda-1)c}dP_0^n +\int_{A} dP_0^n\right)-\log\left(\int_{A^c} e^{\lambda c}dP_1^n +\int_{A} dP_1^n\right)\nonumber\\
        &=\frac{\lambda}{\lambda-1}\log\left(e^{(\lambda-1)c}P_0^n(A^c)+P_0^n(A)\right)-\log\left(e^{\lambda c}P_1^n(A^c)+P_1^n(A)\right).\label{eq:first_step_P_0_P_1}
    \end{align}
    Recalling that $P_0^n(A)=\alpha_n$ and $P_1^n(A^c)=\beta_n(\varepsilon)$, and applying the tensorisation property for i.i.d. observations ($D_\lambda(P_0^n\|P_1^n)=nD_\lambda(P_0\|P_1)$),  \eqref{eq:first_step_P_0_P_1} becomes
    \begin{equation}\label{eq:second_step_P_0_P_1}
        nD_\lambda(P_0\|P_1)\geq \frac{\lambda}{\lambda-1}\log\left(\alpha_n+(1-\alpha_n)e^{(\lambda-1)c}\right)-\log\left(1-\beta_n(\varepsilon)+\beta_n(\varepsilon)e^{\lambda c}\right).
    \end{equation}
    Rearranging \eqref{eq:second_step_P_0_P_1} to isolate the term containing $\beta_n(\varepsilon)$ yields
    \begin{align}
        \log\left(1-\beta_n(\varepsilon)+\beta_n(\varepsilon)e^{\lambda c}\right)&\geq \frac{\lambda}{\lambda-1}\log\left(\alpha_n+(1-\alpha_n)e^{(\lambda-1)c}\right)-nD_\lambda(P_0\|P_1).\label{eq:third_step_P_0_P_1}
    \end{align}
    Exponentiating both sides of inequality \eqref{eq:third_step_P_0_P_1} gives
    \begin{equation}
        1-\beta_n(\varepsilon)+\beta_n(\varepsilon)e^{\lambda c}\geq\left(\alpha_n+(1-\alpha_n)e^{(\lambda-1)c}\right)^{\frac{\lambda}{\lambda-1}}e^{-nD_\lambda(P_0\|P_1)}.
    \end{equation}
    Grouping the $\beta_n$ terms on the left side, we get
    \begin{equation}\label{eq:fourth_step_P_0_P_1}
        \beta_n(\varepsilon)(e^{\lambda c}-1)\geq \left(\alpha_n+(1-\alpha_n)e^{(\lambda-1)c}\right)^{\frac{\lambda}{\lambda-1}}e^{-nD_\lambda(P_0\|P_1)}-1.
    \end{equation}
    For $c>0$ and $\lambda>1$, the term $e^{\lambda c}-1>0$. Dividing by $e^{\lambda c}-1$ both sides of \eqref{eq:fourth_step_P_0_P_1}, we get
    \begin{align}
        \beta_n&\geq \frac{\left(\alpha_n+(1-\alpha_n)e^{(\lambda-1)c}\right)^{\frac{\lambda}{\lambda-1}}e^{-nD_\lambda(P_0\|P_1)}-1}{e^{\lambda c}-1}\nonumber\\
        &=\frac{\bigg(e^{(\lambda-1)c}-\alpha_n\left(e^{(\lambda-1)c}-1\right)\bigg)^{\frac{\lambda}{\lambda-1}}e^{-nD_\lambda(P_0\|P_1)}-1}{e^{\lambda c}-1}.\label{eq:last_step_P_0_P_1}
    \end{align}
    For $c>0$ and $\lambda>1$, the term $e^{(\lambda-1)c}-1$ is strictly positive, and therefore the right-hand side of \eqref{eq:last_step_P_0_P_1} decreases as $\alpha_n$ increases. Thus, since $\alpha_n\leq\varepsilon$, we obtain the following inequality
    \begin{align}\label{eq:final_formula_P_0_P_1}
        \beta_n(\varepsilon)&\geq\frac{\bigg(e^{(\lambda-1)c}-\varepsilon\left(e^{(\lambda-1)c}-1\right)\bigg)^{\frac{\lambda}{\lambda-1}}e^{-nD_\lambda(P_0\|P_1)}-1}{e^{\lambda c}-1}\nonumber\\
        &=\frac{e^{-nD_\lambda(P_0\|P_1)}\left(\varepsilon +(1-\varepsilon)e^{(\lambda-1)c}\right)^{\frac{\lambda}{\lambda-1}}-1}{e^{\lambda c}-1}.
    \end{align}
    The above inequality \eqref{eq:final_formula_P_0_P_1} holds for any $c>0$ and $\lambda>1$. Therefore, taking the supremum over both parameters yields

    \begin{equation}\label{eq:converse_renyi_P_0_P_1}
        \beta_n(\varepsilon)\geq\sup_{\lambda>1,c>0}\Bigg\{\frac{e^{-nD_\lambda(P_0\|P_1)}\left(\varepsilon +(1-\varepsilon)e^{(\lambda-1)c}\right)^{\frac{\lambda}{\lambda-1}}-1}{e^{\lambda c}-1}\Bigg\}.
    \end{equation}

    Combining \eqref{eq:converse_renyi_P_1_P_0} and \eqref{eq:converse_renyi_P_0_P_1} immediately gives \eqref{eq:renyi_converse_bound_via_vr}, which concludes the proof.

\subsection{Proof of Corollary \ref{cor:semi-closed_form}}\label{app:semi-closed_form}
    Fix $\varepsilon\in(0,1)$. We analyse the optimisation over $c>0$ for the two branches in \eqref{eq:renyi_converse_bound_via_vr} separately.
    
    For the first branch, fix $\lambda\in\Lambda_R$ and let
    \begin{equation}
        R_\lambda(c)=\frac{e^{(\lambda-1)c}-e^{\frac{\lambda-1}{\lambda}nD_\lambda(P_1\|P_0)}\left(1-\varepsilon+\varepsilon e^{\lambda c}\right)^{\frac{\lambda-1}{\lambda}}}{e^{(\lambda-1)c}-1},\quad\text{for } c>0.
    \end{equation}
    
    Differentiating $R_\lambda(c)$ with respect to $c$, we obtain
    
    \[\frac{d}{dc}R_\lambda(c) = \frac{(\lambda-1)e^{(\lambda-1)c}}{\left(e^{(\lambda-1)c}-1\right)^2} \left[ e^{\frac{\lambda-1}{\lambda}nD_\lambda(P_1\|P_0)} \frac{1-\varepsilon+\varepsilon e^c}{\left(1-\varepsilon+\varepsilon e^{\lambda c}\right)^{1/\lambda}} - 1 \right].\]
    
    Since  $\frac{(\lambda-1)e^{(\lambda-1)c}}{(e^{(\lambda-1)c}-1)^2} > 0$ for all $c > 0$ and $\lambda > 1$, the sign of $\frac{d}{dc}R_\lambda(c)$ is entirely determined by the sign of 
    \begin{equation}\label{eq:app:first_expr}
        e^{\frac{\lambda-1}{\lambda}nD_\lambda(P_1\|P_0)} - \frac{\left(1-\varepsilon+\varepsilon e^{\lambda c}\right)^{1/\lambda}}{1-\varepsilon+\varepsilon e^c}.
    \end{equation}
    To study the sign of the expression \eqref{eq:app:first_expr}, 
    let 
    \[\psi(c) = \frac{\left(1-\varepsilon+\varepsilon e^{\lambda c}\right)^{1/\lambda}}{1-\varepsilon+\varepsilon e^c}.\]
    Taking the derivative of the logarithm of $\psi(c)$, we get
    \begin{align}
        \frac{d}{dc}\log \psi(c) &= \frac{\varepsilon e^{\lambda c}}{1-\varepsilon+\varepsilon e^{\lambda c}}-\frac{\varepsilon e^c}{1-\varepsilon+\varepsilon e^c}\nonumber\\
        &=\frac{\varepsilon(1-\varepsilon)(e^{\lambda c}-e^c)}{(1-\varepsilon+\varepsilon e^{\lambda c})(1-\varepsilon+\varepsilon e^{c})}.
    \end{align}
    Since $\lambda>1$ and $c>0$, we have that $e^{\lambda c}>e^{c}>1$. Therefore, $\frac{d}{dc}\log \psi(c)>0$, and this implies that $\psi(c)$ is strictly increasing in $c$ on $(0,\infty)$. Moreover, we have
    \begin{equation}
        \lim_{c\to0^+} \psi(c)=1\quad\text{and}\quad\lim_{c\to \infty} \psi(c)=\varepsilon^{-\frac{\lambda-1}{\lambda}}.
    \end{equation}
    We therefore distinguish three cases according to the value of  $nD_\lambda(P_1\|P_0)$:
    \begin{itemize}
        \item Case $(nD_\lambda(P_1\|P_0)=0)$: In this case, since $e^{\frac{\lambda-1}{\lambda}nD_{\lambda}(P_1\|P_0)}=1$, and since $\psi(c)>1$ for all $c>0$, the derivative $\frac{d}{dc}R_\lambda(c)$ is strictly negative for all $c>0$. Thus, the supremum is attained as $c\to 0^+$, yielding
        
        \[\sup_{c>0}R_\lambda(c)=\lim_{c\to 0^+}R_\lambda(c)=1-\varepsilon.\]
    
        \item  Case $(0<nD_\lambda(P_1\|P_0)<\log(1/\varepsilon))$: In this case, we have
        \[
            1<e^{\frac{\lambda-1}{\lambda}nD_\lambda(P_1\|P_0)}<\varepsilon^{-\frac{\lambda-1}{\lambda}}.
        \]
        Since $\psi(c)$ is strictly increasing from 1 to $\varepsilon^{-\frac{\lambda-1}{\lambda}}$, there exists a unique positive solution $c_{\lambda,R}\in(0,\infty)$ to the equation $\psi(c)=e^{\frac{\lambda-1}{\lambda}nD_\lambda(P_1\|P_0)}$, which  can be written as
    
        \begin{equation}\label{eq:app:stazionary_eq_1}
            \left(1-\varepsilon+\varepsilon e^{\lambda c_{\lambda,R}}\right)^{1/\lambda}=e^{\frac{\lambda-1}{\lambda}nD_\lambda(P_1\|P_0)}\left(1-\varepsilon+\varepsilon e^{ c_{\lambda,R}}\right).
        \end{equation}
        Thus, since the derivative $\frac{d}{dc}R_\lambda(c)$ is positive for $c<c_{\lambda,R}$ and negative for $c>c_{\lambda,R}$, $c_{\lambda,R}$ is also the unique global maximiser, i.e.,
        \[\sup_{c>0} R_\lambda(c)=R_{\lambda}(c_{\lambda,R}).\]

        To evaluate $R_\lambda(c_{\lambda,R})$, we observe that from \eqref{eq:app:stazionary_eq_1}, we have
        \begin{equation}
            e^{\frac{\lambda-1}{\lambda}nD_\lambda(P_1\|P_0)}\left(1-\varepsilon+\varepsilon e^{\lambda c_{\lambda,R}}\right)^{\frac{\lambda-1}{\lambda}}=\frac{1-\varepsilon+\varepsilon e^{\lambda c_{\lambda,R}}}{1-\varepsilon+\varepsilon e^{c_{\lambda,R}}}.
        \end{equation}
        Substituting this into the numerator of $R_\lambda(c_{\lambda,R})$, we obtain
        \begin{align*}
            e^{(\lambda-1)c_{\lambda,R}}-e^{\frac{\lambda-1}{\lambda}nD_\lambda(P_1\|P_0)}(1-\varepsilon+\varepsilon e^{\lambda c_{\lambda,R}})^{\frac{\lambda-1}{\lambda}}&=e^{(\lambda-1)c_{\lambda,R}}-\frac{1-\varepsilon+\varepsilon e^{\lambda c_{\lambda,R}}}{1-\varepsilon+\varepsilon e^{c_{\lambda,R}}}\\
            &=\frac{(1-\varepsilon)(e^{(\lambda-1)c_{\lambda,R}}-1)}{1-\varepsilon +\varepsilon e^{c_{\lambda,R}}}.
        \end{align*}
        Finally, dividing by the denominator $e^{(\lambda-1)c_{\lambda,R}}-1$, we obtain
        \begin{equation*}
            R_{\lambda}(c_{\lambda,R})=\frac{1-\varepsilon}{1-\varepsilon+\varepsilon e^{c_{\lambda,R}}}.
        \end{equation*}
    
        \item  Case ($nD_\lambda(P_1\|P_0)\geq \log (1/\varepsilon)$): In this case, $e^{\frac{\lambda-1}{\lambda}nD_{\lambda}(P_1\|P_0)}\geq \varepsilon^{-\frac{\lambda-1}{\lambda}}>\psi(c)$ for all $c>0$. Hence, the derivative $\frac{d}{dc}R_{\lambda}(c)$ is strictly positive for all $c>0$. Thus, the supremum is attained as $c\to \infty$, yielding \[\sup_{c>0}R_\lambda(c)=\lim_{c\to\infty} R_{\lambda}(c)=1-\left(\varepsilon e^{nD_{\lambda}(P_1\|P_0)}\right)^{\frac{\lambda-1}{\lambda}}.\]
    \end{itemize}
    
    For the second branch, fix $\lambda\in\Lambda_F$ and proceed similarly. Let
    
    \begin{equation}
        F_\lambda(c)=\frac{e^{-nD_\lambda(P_0\|P_1)}(\varepsilon +(1-\varepsilon)e^{(\lambda-1)c})^{\frac{\lambda}{\lambda-1}}-1}{e^{\lambda c}-1},\quad\text{for }c>0.
    \end{equation}
    Differentiating $F_\lambda(c)$ with respect to $c$, we get
    \begin{equation*}
        \frac{d}{dc}F_\lambda(c)=\frac{\lambda e^{\lambda c}}{(e^{\lambda c}-1)^2}\Bigg[1-e^{-nD_\lambda(P_0\|P_1)}\left(\varepsilon+(1-\varepsilon)e^{(\lambda-1)c}\right)^{\frac{1}{\lambda-1}}\left(\varepsilon+(1-\varepsilon)e^{-c}\right)\Bigg]
    \end{equation*}
    The sign of $\frac{d}{dc}F_\lambda(c)$ is determined by
    \begin{equation}\label{eq:app:intermediate_sign}
        1-e^{-nD_\lambda(P_0\|P_1)}\Psi(c),
    \end{equation}
    where 
    \[
        \Psi(c)=\left(\varepsilon+(1-\varepsilon)e^{(\lambda-1)c}\right)^{\frac{1}{\lambda-1}}\left(\varepsilon+(1-\varepsilon)e^{-c}\right).
    \]
    
    To study the sign of \eqref{eq:app:intermediate_sign}, let us analyse $\Psi(c)$. Differentiating the logarithm of $\Psi(c)$ gives
    \begin{equation*}
        \frac{d}{dc}\log \Psi(c)=\frac{\varepsilon(1-\varepsilon)(e^{(\lambda-1)c}-e^{-c})}{(\varepsilon+(1-\varepsilon)e^{(\lambda-1)c})(\varepsilon+(1-\varepsilon)e^{-c})}>0,
    \end{equation*}
    since for $\lambda>1$ and $c>0$, we have $e^{(\lambda-1)c}>e^{-c}$. Consequently, $\Psi(c)$ is strictly increasing on $(0,\infty)$, and
    \begin{equation*}
        \lim_{c\to 0^+} \Psi(c)=1\quad\text{and}\quad\lim_{c\to \infty}\Psi(c)=+\infty.
    \end{equation*}
    Thus, we have to distinguish two cases according to the value of $nD_{\lambda}(P_0\|P_1)$:
    \begin{itemize}
        \item \textbf{Case} $nD_{\lambda}(P_0\|P_1)=0$: Since $e^{-nD_{\lambda}(P_0\|P_1)}=1$ and $\Psi(c)>1$ for all $c>0$, the derivative $\frac{d}{dc}F_\lambda(c)$ is always negative. Consequently, the supremum is attained as $c\to 0^+$, yielding
        \begin{equation*}
            \sup_{c>0}F_\lambda(c)=\lim_{c\to 0^+}F_\lambda(c)=1-\varepsilon.
        \end{equation*}
    
        \item  \textbf{Case} $nD_{\lambda}(P_0\|P_1)>0$: Since $e^{nD_{\lambda}(P_0\|P_1)}>1$ and $\Psi(c)$ is strictly increasing from $1$ to $\infty$, there exists a unique positive value $c_{\lambda, F}\in(0,\infty)$ for which $1-e^{-nD_\lambda(P_0\|P_1)}\Psi(c_{\lambda, F})=0$, or equivalently
        \begin{equation}\label{eq:app:stationary_eq_2}
             \Psi(c_{\lambda, F})=e^{nD_{\lambda}(P_0\|P_1)}.
        \end{equation}
        Consequently, the derivative $\frac{d}{dc} F_{\lambda}(c)$ is positive for $c<c_{\lambda, F}$ and negative for $c>c_{\lambda, F}$. Thereby, $c_{\lambda, F}$ is the unique global maximiser, i.e.,
        \[
            \sup_{c>0}F_{\lambda}(c)=F_{\lambda}(c_{\lambda, F}).
        \]
        Finally, to explicitly evaluate $F_\lambda(c_{\lambda, F})$, we use \eqref{eq:app:stationary_eq_2} to substitute the divergence term, obtaining
        \begin{equation*}
            e^{-nD_\lambda(P_0\|P_1)}\left(\varepsilon+(1-\varepsilon)e^{(\lambda-1)c_{\lambda,F}}\right)^{\frac{\lambda}{\lambda-1}} = \frac{\varepsilon+(1-\varepsilon)e^{(\lambda-1)c_{\lambda,F}}}{\varepsilon+(1-\varepsilon)e^{-c_{\lambda,F}}}.
        \end{equation*}
        Substituting this into the numerator of $F_\lambda(c_{\lambda, F})$ and dividing by $e^{\lambda c_{\lambda,F}}-1$ yields
        \begin{equation*}
            F_\lambda(c_{\lambda, F}) = \frac{(1-\varepsilon)e^{-c_{\lambda, F}}}{\varepsilon+(1-\varepsilon)e^{-c_{\lambda, F}}} = \frac{1-\varepsilon}{1-\varepsilon+\varepsilon e^{c_{\lambda, F}}}.
        \end{equation*}
    \end{itemize}

\subsection{Proof of Remark \ref{rem:DPI_recover}}\label{app:rem:DPI_recover}
    We demonstrate that the two converse bounds
    
    \begin{equation}\label{eq:converse_renyi_P_1_P_0_1}
            \beta_n(\varepsilon)\geq\sup_{\lambda>1,c>0} \Bigg\{\frac{e^{(\lambda-1)c}-e^{\frac{\lambda-1}{\lambda}nD_\lambda(P_1\|P_0)}\left(1-\varepsilon+\varepsilon e^{\lambda c}\right)^{\frac{\lambda-1}{\lambda}}}{e^{(\lambda-1)c}-1}\Bigg\}.
        \end{equation}
    and
    \begin{equation}\label{eq:converse_renyi_P_0_P_1_1}
            \beta_n(\varepsilon)\geq\sup_{\lambda>1,c>0}\Bigg\{\frac{e^{-nD_\lambda(P_0\|P_1)}\left(\varepsilon +(1-\varepsilon)e^{(\lambda-1)c}\right)^{\frac{\lambda}{\lambda-1}}-1}{e^{\lambda c}-1}\Bigg\}.
        \end{equation}
    recover their corresponding bound obtained via DPI.

    Let us start from \eqref{eq:converse_renyi_P_1_P_0_1}.
    Fix $\lambda>1$. Dividing both the numerator and denominator of the right-hand side of \eqref{eq:converse_renyi_P_1_P_0_1} by $e^{(\lambda-1)c}$, and  taking the limit as  $c\to \infty$, we get
    \begin{align}
        \beta_n(\varepsilon)&
        \geq\lim_{c\to\infty}\frac{e^{(\lambda-1)c} -\left((1-\varepsilon(1-e^{\lambda c})) e^{nD_\lambda(P_1\|P_0)}\right)^{\frac{\lambda-1}{\lambda}}}{e^{(\lambda-1)c}-1}\nonumber\\
        &=\lim_{c\to\infty}\frac{1 -e^{\frac{\lambda-1}{\lambda}nD_\lambda(P_1\|P_0)}\left(\varepsilon+(1-\varepsilon)e^{-\lambda c}\right)^{\frac{\lambda-1}{\lambda}}}{1-e^{-(\lambda-1)c}}\label{eq:red_step_P_1_P_0}
    \end{align}
   As $c\to \infty$ the terms $e^{-(\lambda-1)c}$ and $(1-\varepsilon)e^{-\lambda c}$ vanish. Thus, it follows from \eqref{eq:red_step_P_1_P_0} that
    \begin{equation*}
        \beta_n(\varepsilon)\geq\lim_{c\to\infty}\frac{1 -e^{\frac{\lambda-1}{\lambda}nD_\lambda(P_1\|P_0)}\left(\varepsilon+(1-\varepsilon)e^{-\lambda c}\right)^{\frac{\lambda-1}{\lambda}}}{1-e^{-(\lambda-1)c}}=1-\left(\varepsilon\, e^{nD_\lambda(P_1\| P_0)}\right)^{\frac{\lambda-1}{\lambda}}.
    \end{equation*}
    Since the above inequality holds for any $\lambda>1$, we can take the supremum over $\lambda$ (or equivalently, the infimum for the subtracted term), obtaining
    \begin{equation}\label{app:eq:DPI_renyi_P_1_P_0}
        \beta_n(\varepsilon)\geq1-\inf_{\lambda>1}\left(\varepsilon\, e^{nD_\lambda(P_1\| P_0)}\right)^{\frac{\lambda-1}{\lambda}},
    \end{equation}
    which perfectly recovers the corresponding bound achieved via DPI.

    Similarly, let us consider \eqref{eq:converse_renyi_P_0_P_1_1}. Fix $\lambda>1$. Dividing both the numerator and denominator on the right-hand side of \eqref{eq:converse_renyi_P_0_P_1_1} by $e^{\lambda c}$ and taking the limit as $c\to\infty$, we obtain
    \begin{align}
        \beta_n(\varepsilon)&\geq\lim_{c\to\infty} \frac{(e^{(\lambda-1)c}-\varepsilon(e^{(\lambda-1)c}-1))^{\frac{\lambda}{\lambda-1}}e^{-nD_\lambda(P_0\|P_1)}-1}{e^{\lambda c}-1}\nonumber\\
        &=
        \lim_{c\to\infty}
        \frac{
        e^{-nD_\lambda(P_0\|P_1)}
        \left(1-\varepsilon+\varepsilon e^{-(\lambda-1)c}\right)^{\frac{\lambda}{\lambda-1}}
        -e^{-\lambda c}
        }
        {1-e^{-\lambda c}}.
                \label{eq:red_step_P_0_P_1}
            \end{align}
            As $c\to \infty$ the terms $e^{-\lambda c}$ and $\varepsilon e^{-(\lambda-1) c}$ vanish. Thus, from \eqref{eq:red_step_P_0_P_1} we obtain the following:
            \begin{equation*}
                \beta_n(\varepsilon)\geq\lim_{c\to\infty}
        \frac{
        e^{-nD_\lambda(P_0\|P_1)}
        \left(1-\varepsilon+\varepsilon e^{-(\lambda-1)c}\right)^{\frac{\lambda}{\lambda-1}}
        -e^{-\lambda c}
        }
        {1-e^{-\lambda c}}=(1-\varepsilon)^{\frac{\lambda}{\lambda-1}}\,e^{-nD_\lambda(P_0\| P_1)}.
    \end{equation*}
    Optimising over $\lambda>1$, we get
    \begin{equation}
    \beta_n(\varepsilon)\geq\sup_{\lambda>1}\left((1-\varepsilon)^{\frac{\lambda}{\lambda-1}}\,e^{-nD_\lambda(P_0\| P_1)}\right),    
    \end{equation}
    which perfectly recovers the corresponding bound achieved via DPI, thus concluding the proof.

    \subsection{Proof of Theorem~\ref{theorem:beta_upper_bound_renyi}}\label{app:theorem:beta_upper_bound_renyi}
    Let $L_n=dP_1^n/dP_0^n$. Since $\phi$ is a possibly randomised LLRT with threshold $\tau$, it satisfies
    \begin{equation}
        \phi(\mathbf{x})=1\quad\text{if }\log L_n(\mathbf{x})>\tau, \qquad \phi(\mathbf{x})=0\quad\text{if }\log L_n(\mathbf{x})<\tau,
    \end{equation}
    and may take values in $[0,1]$ when $\log L_n(\mathbf{x})=\tau$. For every $\lambda\in(0,1)$,
    \begin{align}
        e^{(\lambda-1)nD_\lambda(P_1\|P_0)}
        &=\E_{P_0^n}[L_n^\lambda]\nonumber\\
        &=\E_{P_0^n}[L_n^\lambda\phi]+\E_{P_0^n}[L_n^\lambda(1-\phi)]\nonumber\\
        &\geq e^{\lambda\tau}\E_{P_0^n}[\phi]+e^{(\lambda-1)\tau}\E_{P_0^n}[L_n(1-\phi)]\nonumber\\
        &=e^{\lambda\tau}\alpha_n(\phi)+e^{(\lambda-1)\tau}\beta_n(\phi).
    \end{align}
    Indeed, $L_n\geq e^\tau$ whenever $\phi>0$, while $L_n\leq e^\tau$ whenever $1-\phi>0$. Isolating $\beta_n(\phi)$ and taking the infimum over $\lambda\in(0,1)$ proves the result.


\subsection{Proof of Corollary~\ref{cor:beta_optimal_upper_bound_renyi}}\label{app:cor:beta_optimal_upper_bound_renyi}
    Let $\phi_\varepsilon$ be a possibly randomised Neyman--Pearson test attaining $\beta_n(\varepsilon)$, with Type~I error equal to $\varepsilon$, and let $\tau\in\mathbb{R}$ denote its likelihood-ratio threshold. Fix $\lambda\in(0,1)$ and define
    \begin{equation*}
        M_\lambda=\exp\big((\lambda-1)nD_\lambda(P_1\|P_0)\big).
    \end{equation*}
    Applying Theorem~\ref{theorem:beta_upper_bound_renyi} to $\phi_\varepsilon$ and setting $t=e^\tau$ gives
    \begin{align}
        \beta_n(\varepsilon)
        &\leq\frac{M_\lambda-\varepsilon e^{\lambda\tau}}{e^{(\lambda-1)\tau}}\nonumber\\
        &=M_\lambda t^{1-\lambda}-\varepsilon t\nonumber\\
        &\leq\sup_{s>0}\left\{M_\lambda s^{1-\lambda}-\varepsilon s\right\}.\label{eq:legendre_type_supremum}
    \end{align}
    The function inside the supremum is strictly concave and attains its maximum at
    \begin{equation*}
        s_\lambda^\star=\left(\frac{(1-\lambda)M_\lambda}{\varepsilon}\right)^{1/\lambda}.
    \end{equation*}
    Substituting $s_\lambda^\star$ into~\eqref{eq:legendre_type_supremum} yields
    \begin{equation*}
        \beta_n(\varepsilon)\leq\lambda\left(\frac{1-\lambda}{\varepsilon}\right)^{\frac{1-\lambda}{\lambda}}M_\lambda^{1/\lambda} =\lambda\left(\frac{1-\lambda}{\varepsilon}\right)^{\frac{1-\lambda}{\lambda}}\exp\left(-\frac{1-\lambda}{\lambda}nD_\lambda(P_1\|P_0)\right).
    \end{equation*}
    Since this holds for every $\lambda\in(0,1)$, taking the infimum over $\lambda$ proves the result.

\section{Unified $f$-Divergence Framework}\label{app:f_div_framework}

In this section, we develop a common variational framework for deriving converse results for $f$-divergences. Indeed, restricting the representation of an $f$-divergence to two-valued functions produces the binary reduction needed for hypothesis testing.

Specifically, restrict the variational representation to functions of the form
\[
g(x)=a\mathbbm{1}_A(x)+b\mathbbm{1}_{A^c}(x),
\]
for a measurable set $A$. This recovers the binary data-processing inequality induced by the map $x\mapsto\mathbbm{1}_A(x)$.

\begin{lemma}
\label{lemma:variational_fdiv}
Let $P$ and $Q$ be probability measures on $(\mathcal{X},\mathcal{B})$, and let $P$ be  absolutely continuous with respect to $Q$. Let
$f:[0,\infty)\to(-\infty,+\infty]$ be a
convex and lower semicontinuous function satisfying $f(1)=0$, and let $f^\star$ denote the convex conjugate
of $f$,
\[
    f^\star(t)
    =
    \sup_{u\geq 0}\{tu-f(u)\}.
\]
Then, for every measurable set $A\in\mathcal{B}$, it holds
\begin{align}
D_f(P\|Q)&\geq\sup_{a,b\in\operatorname{dom}(f^\star)}\Bigg\{b+(a-b)P(A) -f^\star(b)-(f^\star(a)-f^\star(b))Q(A)\Bigg\}
\label{eq:binary_variational_bound}\\
&=
D_f\bigg(\mathrm{Bern}\big(P(A)\big)\big\|\mathrm{Bern}\big(Q(A)\big)\bigg),
\label{eq:binary_fdiv_DPI}
\end{align}
where $\mathrm{Bern}(x)$ denotes the Bernoulli distribution with parameter $x\in[0,1]$ and the endpoint cases are interpreted using the conventions in \eqref{eq:f_divergence_def}.
\end{lemma}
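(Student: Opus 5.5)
The inequality \eqref{eq:binary_variational_bound} will come from restricting the supremum in \eqref{eq:f_divergence_variational} to two-valued test functions. The equality \eqref{eq:binary_fdiv_DPI} will come from applying the same variational formula to the two-point space, where two-valued functions are all the functions.

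\emph{Step 1 (inequality).} Fix $A\in\mathcal{B}$ and $a,b\in\operatorname{dom}(f^\star)$, and set $g=a\mathbbm{1}_A+b\mathbbm{1}_{A^c}$. This $g$ is measurable and takes two finite values. Hence $f^\star\circ g$ also takes only the finite values $f^\star(a),f^\star(b)$, so both expectations are well defined and $g\in\mathcal{G}$. Plugging it into \eqref{eq:f_divergence_variational} gives
\begin{equation*}
\E_P[g]=aP(A)+b(1-P(A))=b+(a-b)P(A),
\end{equation*}
\begin{equation*}
\E_Q[f^\star(g)]=f^\star(b)+(f^\star(a)-f^\star(b))Q(A).
\end{equation*}
Thus $D_f(P\|Q)$ dominates each bracketed expression, and taking the supremum over $a,b$ yields \eqref{eq:binary_variational_bound}. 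Values of $a$ or $b$ with $f^\star=+\infty$ are excluded, which is why we restrict to $\operatorname{dom}(f^\star)$; this costs nothing, since such choices only make the bracket $-\infty$.

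\emph{Step 2 (equality).} Set $p=P(A)$ and $q=Q(A)$, and apply \eqref{eq:f_divergence_variational} to $\mathrm{Bern}(p)$ and $\mathrm{Bern}(q)$ on $\{0,1\}$. Every function $h$ on $\{0,1\}$ is determined by the pair $(a,b)=(h(1),h(0))$, and
\begin{equation*}
\E_{\mathrm{Bern}(p)}[h]-\E_{\mathrm{Bern}(q)}[f^\star(h)]
\end{equation*}
is exactly the bracket in \eqref{eq:binary_variational_bound}. Hence the binary $f$-divergence equals the supremum of that bracket over all admissible $(a,b)$.

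\emph{Main obstacle.} The delicate point is the degenerate cases. First, the supremum in \eqref{eq:f_divergence_variational} over $\{0,1\}$ may be approached only by $a,b$ outside $\operatorname{dom}(f^\star)$, or by choices that make expectations ill defined. I would handle this by noting directly that the binary supremum can be computed coordinatewise. It equals $\sup_{a}\{ap-qf^\star(a)\}+\sup_b\{b(1-p)-(1-q)f^\star(b)\}$. When $q>0$, the first term is $q f^{\star\star}(p/q)=qf(p/q)$ by Fenchel--Moreau, which applies because $f$ is convex and lower semicontinuous; the second term is handled the same way. These two terms sum to the Bernoulli $f$-divergence.

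Second, the endpoints $q\in\{0,1\}$. If $q=0$, absolute continuity forces $p=0$, and the corresponding term vanishes, consistent with the convention $0f(0/0)=0$. The case $q=1$ is symmetric. Only if one wished to go beyond $P\ll Q$ would the term $a\lim_{u\to\infty} f(u)/u$ arise, as the recession function of $f$ equals the support function of $\operatorname{dom} f^\star$. I expect this conjugate and endpoint bookkeeping to be the only nonroutine part.
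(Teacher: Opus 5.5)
Your proposal is correct and follows the paper's own proof. Step 1 is the same two-valued substitution $g=a\mathbbm{1}_A+b\mathbbm{1}_{A^c}$. For \eqref{eq:binary_fdiv_DPI}, the paper also splits the supremum coordinatewise and applies Fenchel--Moreau to get $qf(p/q)+(1-q)f\bigl((1-p)/(1-q)\bigr)$, settling $q\in\{0,1\}$ by absolute continuity and the conventions; your opening reformulation via the variational formula on $\{0,1\}$ is only a rephrasing that you rightly replace by this direct computation.
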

\begin{proof}
    By the variational representation of the $f$-divergence in \eqref{eq:f_divergence_variational}, we have
\begin{equation}\label{eq:app:vr_f_div_def} 
D_f(P\|Q)
=
\sup_{g}
\left\{
    \mathbb{E}_P[g]
    -
    \mathbb{E}_Q[f^\star(g)]
\right\},
\end{equation}

where the supremum is over all admissible measurable functions $g$.

For a measurable set $A$ and $a,b\in\operatorname{dom}(f^\star)$, let
\begin{equation}\label{eq:app:g_def}
    g(x)=a\mathbbm{1}_A(x)+b\mathbbm{1}_{A^c}(x),
\end{equation}
where $\mathbbm{1}_{A}$ denotes the indicator function of the set $A$.

Substituting \eqref{eq:app:g_def} into \eqref{eq:app:vr_f_div_def} yields
\begin{align}\label{eq:app:vr_f_div_sub}
D_f(P\|Q)
&\geq
\mathbb{E}_P[g]
-
\mathbb{E}_Q[f^\star(g)]
\nonumber\\
&=
aP(A)+bP(A^c)
-f^\star(a)Q(A)-f^\star(b)Q(A^c).
\end{align}
Using $P(A^c)=1-P(A)$ and $Q(A^c)=1-Q(A)$, we obtain
\begin{equation}\label{eq:app:vr_f_div_sub_1}
    D_f(P\|Q)\geq b+(a-b)P(A) -f^\star(b)-(f^\star(a)-f^\star(b))Q(A).
\end{equation}
Since \eqref{eq:app:vr_f_div_sub_1} holds for every
$a,b\in\operatorname{dom}(f^\star)$, taking the supremum over these
parameters yields 
\begin{align}\label{eq:app:vr_f_div_gen_low}
D_f(P\|Q)
\geq
\sup_{a,b\in\operatorname{dom}(f^\star)}
\Big\{
b+(a-b)P(A)
-f^\star(b)
-\bigl(f^\star(a)-f^\star(b)\bigr)Q(A)
\Big\},
\end{align}
which proves \eqref{eq:binary_variational_bound}.

We next show that optimising \eqref{eq:app:vr_f_div_gen_low} recovers the
$f$-divergence between the corresponding binary distributions.  

Let
$p=P(A)$ and $q=Q(A)$. 
Assume first that $q\in(0,1)$.
optimising the right-hand side of \eqref{eq:app:vr_f_div_sub_1}  over $a$ and $b$ gives
\begin{align}\label{eq:app:right_hand_side}
\sup_{a,b}\left\{b+(a-b)p-f^\star(b)-(f^\star(a)-f^\star(b))q\right\}&=
\sup_{a,b}\left\{
ap+b(1-p)-qf^\star(a)-(1-q)f^\star(b)
\right\}
\nonumber\\
&=
\sup_a\left\{ap-qf^\star(a)\right\}
+
\sup_b\left\{b(1-p)-(1-q)f^\star(b)\right\}.
\end{align}
Since $f$ is proper, convex, and lower semicontinuous, the Fenchel--Moreau theorem gives $f=f^{**}$, i.e.,
\[
    f(u)=\sup_t\{tu-f^\star(t)\},
\]
and, since $q>0$, it follows that
\begin{align*}
    \sup_a \left\{ap-qf^\star(a)\right\}&=\sup_a \left\{q\left(a\frac{p}{q}-f^\star(a)\right)\right\}\\
    &=q \sup_a \left\{\left(a\frac{p}{q}-f^\star(a)\right)\right\}\\
    &=qf\left(\frac{p}{q}\right).
\end{align*}
    
Analogously, we also obtain
\begin{equation*}
    \sup_b\left\{b(1-p)-(1-q)f^\star(b)\right\}=(1-q)f\left(\frac{1-p}{1-q}\right).
\end{equation*}
Substituting these into \eqref{eq:app:right_hand_side} gives
\begin{equation}
    \sup_{a,b}\left\{b+(a-b)p-f^\star(b)-(f^\star(a)-f^\star(b))q\right\} = qf\left(\frac{p}{q}\right)+(1-q)f\left(\frac{1-p}{1-q}\right).
\end{equation}
Hence, recalling $p=P(A)$ and $q=Q(A)$, we finally get 
\begin{align}
D_f(P\|Q)
&\geq\sup_{a,b\in\operatorname{dom}(f^\star)}\Bigg\{b+(a-b)P(A) -f^\star(b)-(f^\star(a)-f^\star(b))Q(A)\Bigg\}\nonumber\\
&=
Q(A)f\left(\frac{P(A)}{Q(A)}\right)
+
(1-Q(A))f\left(\frac{1-P(A)}{1-Q(A)}\right)
\nonumber\\
&=
D_f(\mathrm{Bern}(P(A))\|\mathrm{Bern}(Q(A)))\nonumber,
\end{align}
which proves the claim for $q\in(0,1)$. If $q\in\{0,1\}$, absolute continuity forces the corresponding value of $p$ at the zero-mass atom, and the result follows by the conventions in \eqref{eq:f_divergence_def}, or equivalently by lower-semicontinuous extension from $q\in(0,1)$.
\end{proof}

Applying Lemma~\ref{lemma:variational_fdiv} to the test output yields the following general converse.
\begin{theorem}\label{thm:fdiv_beta_lowerbound}
Let $P_1^n$ and $P_0^n$ be mutually absolutely continuous, and let $f$ be the generator of an $f$-divergence. For any possibly randomised test $\phi:\mathcal{X}^n\rightarrow[0,1]$ with error probabilities $\alpha_n$ and $\beta_n$,
\begin{align}
    \beta_n&\geq \max\Bigg\{\sup_{\substack{a,b\in\operatorname{dom}(f^\star)\\a>b}}\frac{
a-\alpha_n f^\star(a)
-(1-\alpha_n)f^\star(b)
-D_f(P_1^n\|P_0^n)
}{a-b},\nonumber\\
&\qquad \sup_{\substack{a,b\in\operatorname{dom}(f^\star)\\
f^\star(a)>f^\star(b)}}\frac{a(1-\alpha_n)+b\alpha_n-f^\star(b)-
D_f(P_0^n\|P_1^n)
}{f^\star(a)-f^\star(b)}\Bigg\}.\label{eq:f_div_gen_beta_lowerbound}
\end{align}
Moreover, for any $\varepsilon\in(0,1)$,  we have
\begin{align}
    \beta_n(\varepsilon)&\geq \max\Bigg\{\sup_{\substack{a,b\in\operatorname{dom}(f^\star)\\a>b,f^\star(a)\geq f^\star(b)}}\frac{
a-\varepsilon f^\star(a)
-(1-\varepsilon)f^\star(b)
-D_f(P_1^n\|P_0^n)
}{a-b},\nonumber\\
&\qquad \sup_{\substack{a,b\in\operatorname{dom}(f^\star)\\a\geq b,f^\star(a)>f^\star(b)}}\frac{a(1-\varepsilon)+b\varepsilon-f^\star(b)-
D_f(P_0^n\|P_1^n)
}{f^\star(a)-f^\star(b)}\Bigg\}.\label{eq:f_div_gen_beta_eps_lowerbound}
\end{align}
\end{theorem}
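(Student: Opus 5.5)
The plan is to follow the template of the proof of Theorem~\ref{thm:renyi_converse_bound_via_vr}, using Lemma~\ref{lemma:variational_fdiv} in place of the R\'enyi variational formula. First I handle the randomised test $\phi$ by adjoining an independent $U\sim\mathrm{Unif}[0,1]$ under both hypotheses. I then set $A=\{(\mathbf{x},u):u\le\phi(\mathbf{x})\}$, so that $P_0^n(A)=\alpha_n$ and $P_1^n(A^c)=\beta_n$. The $f$-divergences are unchanged by tensoring both laws with the common factor $\mathsf U$, because the density ratio does not depend on $u$.

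For the first branch I apply \eqref{eq:binary_variational_bound} with $P=P_1^n$, $Q=P_0^n$ and the set $A$. This gives, for all $a,b\in\operatorname{dom}(f^\star)$,
\begin{equation*}
D_f(P_1^n\|P_0^n)\ge b+(a-b)(1-\beta_n)-f^\star(b)-\bigl(f^\star(a)-f^\star(b)\bigr)\alpha_n .
\end{equation*}
The right-hand side equals $a-\alpha_n f^\star(a)-(1-\alpha_n)f^\star(b)-(a-b)\beta_n$. When $a>b$, I can isolate $\beta_n$ by dividing by $a-b>0$, which gives the first expression in \eqref{eq:f_div_gen_beta_lowerbound}.

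For the second branch I apply the lemma with $P=P_0^n$, $Q=P_1^n$ and the complementary set $A^c$. Here $P(A^c)=1-\alpha_n$ and $Q(A^c)=\beta_n$, so
\begin{equation*}
D_f(P_0^n\|P_1^n)\ge b+(a-b)(1-\alpha_n)-f^\star(b)-\bigl(f^\star(a)-f^\star(b)\bigr)\beta_n .
\end{equation*}
The first three terms equal $a(1-\alpha_n)+b\alpha_n-f^\star(b)$. When $f^\star(a)>f^\star(b)$, dividing by $f^\star(a)-f^\star(b)$ gives the second expression. Taking suprema over the parameters and then the maximum of the two branches yields \eqref{eq:f_div_gen_beta_lowerbound}.

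To pass to $\beta_n(\varepsilon)$, I take $\phi$ to be an optimal Neyman--Pearson test, so that $\beta_n=\beta_n(\varepsilon)$ and $\alpha_n\le\varepsilon$. I then replace $\alpha_n$ by $\varepsilon$, which requires each numerator to be nonincreasing in $\alpha_n$.
\begin{itemize}
\item In the first branch, the derivative of the numerator with respect to $\alpha_n$ is $-(f^\star(a)-f^\star(b))$. This is $\le0$ exactly under the added constraint $f^\star(a)\ge f^\star(b)$.
\item In the second branch, the derivative is $b-a$. This is $\le0$ under the added constraint $a\ge b$.
\end{itemize}
Since the denominators do not depend on $\alpha_n$, each bound is monotone in $\alpha_n$, and substituting $\varepsilon$ only weakens it. Optimising over the restricted parameter sets gives \eqref{eq:f_div_gen_beta_eps_lowerbound}.

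The main obstacle I expect is bookkeeping rather than conceptual. The points to check are:
\begin{itemize}
\item keeping the sign conditions consistent, in particular that the added constraints are exactly those needed for monotonicity in $\alpha_n$;
\item confirming that $f^\star(a),f^\star(b)$ are finite on $\operatorname{dom}(f^\star)$, so the divisions are legitimate;
\item treating an infinite divergence as making the corresponding bound trivially true;
\item verifying that the randomisation device preserves $D_f$ for a general $f$, which follows because both augmented laws share the same Radon--Nikodym derivative as the originals.
\end{itemize}
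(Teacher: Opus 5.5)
Your proposal is correct and matches the paper's proof essentially step for step. Both use the same uniform-randomisation device and apply Lemma~\ref{lemma:variational_fdiv} to $(P_1^n,P_0^n,A)$ and to $(P_0^n,P_1^n,A^c)$. Both isolate $\beta_n$ under $a>b$ and $f^\star(a)>f^\star(b)$, and then add the constraints $f^\star(a)\ge f^\star(b)$ and $a\ge b$ so that the numerators are nonincreasing in $\alpha_n$.

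The one difference is cosmetic. The paper shows the $\varepsilon$-bound holds for every test with $\alpha_n\le\varepsilon$, and hence for the infimum, rather than invoking an optimal Neyman--Pearson test. This avoids any reliance on attainment, though attainment does hold here.
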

\begin{proof}
    Let $U\sim\mathrm{Unif}[0,1]$ be independent of the observations under both hypotheses, and define
\begin{equation*}
    A=\{(\mathbf{x},u)\in\mathcal{X}^n\times[0,1]:u\leq\phi(\mathbf{x})\}.
\end{equation*}
Under $P_j^n\otimes\mathrm{Unif}[0,1]$, this realises the randomised test as a deterministic decision region. Hence $P_0^n(A)=\alpha_n$ and $P_1^n(A)=1-\beta_n$, where the product with the common auxiliary law is suppressed from the notation. Adding the same independent auxiliary variable does not change any divergence appearing below.

Now, from Lemma \ref{lemma:variational_fdiv}, setting $P=P_1^n$, $Q=P_0^n$ and using the set $A$, we obtain
\begin{align}
    D_f(P_1^n\|P_0^n)&\geq
b+(a-b)P_1^n(A)-f^\star(b)
-\bigl(f^\star(a)-f^\star(b)\bigr)P_0^n(A)
\nonumber\\
&=
b+(a-b)(1-\beta_n)-f^\star(b)
-\bigl(f^\star(a)-f^\star(b)\bigr)\alpha_n\nonumber\\
&=
a-\alpha_n f^\star(a)
-(1-\alpha_n)f^\star(b)
-(a-b)\beta_n,\label{eq:app:f_div_P_1_P_0_lb}
\end{align}
for any $a,b\in \operatorname{dom}(f^\star)$.

From \eqref{eq:app:f_div_P_1_P_0_lb}, whenever $a>b$, we obtain the following bound on $\beta_n$:
\begin{equation}
    \beta_n
\geq
\frac{
a-\alpha_n f^\star(a)
-(1-\alpha_n)f^\star(b)
-D_f(P_1^n\|P_0^n)
}{a-b}.
\end{equation}

Since the above inequality holds for every
$a,b\in\operatorname{dom}(f^\star)$ such that $a>b$, taking the supremum yields
\begin{equation}
\beta_n
\geq
\sup_{\substack{a,b\in\operatorname{dom}(f^\star)\\a>b}}
\frac{
a-\alpha_n f^\star(a)
-(1-\alpha_n)f^\star(b)
-D_f(P_1^n\|P_0^n)
}{a-b}.
\label{eq:first-corollary-bound-proof}
\end{equation}

Similarly, for the second bound, we apply Lemma~\ref{lemma:variational_fdiv} with $P=P_0^n$ and $Q=P_1^n$ and using the complementary set $A^c$. Thus, since $P_0^n(A^c)=1-\alpha_n$ and $P_1^n(A^c)=\beta_n$, we get
\begin{align*}
D_f(P_0^n\|P_1^n)
&\geq
b+(a-b)(1-\alpha_n)-f^\star(b)
-\bigl(f^\star(a)-f^\star(b)\bigr)\beta_n\\
&=
a(1-\alpha_n)+b\alpha_n-f^\star(b)
-\bigl(f^\star(a)-f^\star(b)\bigr)\beta_n,
\end{align*}
for any $a,b\in \operatorname{dom}(f^\star)$.

From which, whenever $f^\star(a)>f^\star(b)$, we obtain the following bound on $\beta_n$:
\begin{equation}
    \beta_n
\geq
\frac{
a(1-\alpha_n)+b\alpha_n-f^\star(b)
-D_f(P_0^n\|P_1^n)
}{
f^\star(a)-f^\star(b)
}.
\end{equation}
Taking the supremum over all admissible $a,b$ such that $f^\star(a)>f^\star(b)$ gives
\begin{equation}
\beta_n
\geq
\sup_{\substack{a,b\in\operatorname{dom}(f^\star)\\
f^\star(a)>f^\star(b)}}
\frac{
a(1-\alpha_n)+b\alpha_n-f^\star(b)
-D_f(P_0^n\|P_1^n)
}{
f^\star(a)-f^\star(b)
}.
\label{eq:second-corollary-bound-proof}
\end{equation}
Combining \eqref{eq:first-corollary-bound-proof} and
\eqref{eq:second-corollary-bound-proof} yields \eqref{eq:f_div_gen_beta_lowerbound}, proving the first claim.

We now derive a bound also for the optimal Type~II error probability $\beta_n(\varepsilon)$ for any $\varepsilon\in(0,1)$. 

To this end, consider the bound in \eqref{eq:first-corollary-bound-proof}. For any $a>b$ such that
$f^\star(a)\geq f^\star(b)$,  the right-hand side of \eqref{eq:first-corollary-bound-proof}  is non-increasing in $\alpha_n$. Indeed, the numerator in \eqref{eq:first-corollary-bound-proof} can be written as
\[
a-f^\star(b)
-\alpha_n\bigl(f^\star(a)-f^\star(b)\bigr)
-D_f(P_1^n\|P_0^n).
\]
Since $\alpha_n\leq\varepsilon$ and
$f^\star(a)-f^\star(b)\geq0$, we have
\[
-\alpha_n\bigl(f^\star(a)-f^\star(b)\bigr)
\geq
-\varepsilon\bigl(f^\star(a)-f^\star(b)\bigr).
\]
Consequently, for every test satisfying $\alpha_n\leq \varepsilon$ we have
\begin{align*}
    \beta_n
&\geq \frac{
a-\alpha_n f^\star(a)
-(1-\alpha_n)f^\star(b)
-D_f(P_1^n\|P_0^n)
}{a-b}.\\
&\geq\frac{
a-\varepsilon f^\star(a)
-(1-\varepsilon)f^\star(b)
-D_f(P_1^n\|P_0^n)
}{a-b}.
\end{align*}
Since the above inequality holds for every test satisfying $\alpha_n\leq\varepsilon$, it also holds for the optimal one. Thus, taking the supremum over all admissible
$a,b$ yields
\begin{equation}
\beta_n(\varepsilon)
\geq
\sup_{\substack{a,b\in\operatorname{dom}(f^\star)\\
a>b,\ f^\star(a)\geq f^\star(b)}}
\frac{
a-\varepsilon f^\star(a)
-(1-\varepsilon)f^\star(b)
-D_f(P_1^n\|P_0^n)
}{a-b}.
\label{eq:first-eps-bound-proof}
\end{equation}

For the second bound, we proceed analogously from
\eqref{eq:second-corollary-bound-proof}. Fix
$a\geq b$ such that
$f^\star(a)>f^\star(b)$. The numerator in \eqref{eq:second-corollary-bound-proof} can be written as
\[
a-f^\star(b)
-\alpha_n(a-b)
-D_f(P_0^n\|P_1^n)
\]
Since $\alpha_n\leq\varepsilon$ and $a-b\geq0$,
\[
-\alpha_n(a-b)
\geq
-\varepsilon(a-b).
\]
Therefore, for any test satisfying $\alpha_n\leq \varepsilon$ we have
\[
\beta_n
\geq
\frac{
a(1-\varepsilon)+b\varepsilon-f^\star(b)
-D_f(P_0^n\|P_1^n)
}{
f^\star(a)-f^\star(b)
}.
\]
Again, since the right-hand side in the above inequality is independent of the particular test,
the same bound holds for $\beta_n(\varepsilon)$. Taking the supremum over
all admissible
$a,b\in\operatorname{dom}(f^\star)$ satisfying
$a\geq b$ and $f^\star(a)>f^\star(b)$ gives
\begin{equation}
\beta_n(\varepsilon)
\geq
\sup_{\substack{a,b\in\operatorname{dom}(f^\star)\\
a\geq b,\ f^\star(a)>f^\star(b)}}
\frac{
a(1-\varepsilon)+b\varepsilon-f^\star(b)
-D_f(P_0^n\|P_1^n)
}{
f^\star(a)-f^\star(b)
}.
\label{eq:second-eps-bound-proof}
\end{equation}

Combining \eqref{eq:first-eps-bound-proof} and
\eqref{eq:second-eps-bound-proof} proves \eqref{eq:f_div_gen_beta_eps_lowerbound}, which concludes the proof.
\end{proof}

specialising Theorem~\ref{thm:fdiv_beta_lowerbound} to the $E_\gamma$ divergence gives the following result.
\begin{corollary}\label{cor:E_gamma_bound}
Let $P_1^n$ and $P_0^n$ be mutually absolutely continuous. Then, for any
$\varepsilon\in(0,1)$,
\begin{align}
\beta_n(\varepsilon)
\geq
\max\Bigg\{
&\sup_{\gamma>0}
\frac{
1-\varepsilon-E_\gamma(P_0^n\|P_1^n)-(1-\gamma)_+
}{\gamma},\sup_{\gamma>0}\left(
1-\gamma\varepsilon
-E_\gamma(P_1^n\|P_0^n)-(1-\gamma)_+
\right)
\Bigg\},
\label{eq:E_gamma_bound}
\end{align}
where $E_\gamma(P\| Q)=D_{f_\gamma}(P\|Q)$ denotes the $E_\gamma$ divergence, which is an $f$-divergence with function
\[f_\gamma(t)=(t-\gamma)_+-(1-\gamma)_+\quad\text{for }\gamma>0,\]
where $(x)_+= \max\{x,0\}$.
\end{corollary}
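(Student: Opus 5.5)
The plan is to specialise the second part of Theorem~\ref{thm:fdiv_beta_lowerbound}, namely \eqref{eq:f_div_gen_beta_eps_lowerbound}, to the generator $f_\gamma(t)=(t-\gamma)_+-(1-\gamma)_+$. For each fixed $\gamma>0$ I will make the single choice $a=1$, $b=0$ in both suprema, and then take the supremum over $\gamma>0$. First I check that $f_\gamma$ is admissible: it is convex, continuous (hence lower semicontinuous) and finite on $[0,\infty)$, and $f_\gamma(1)=(1-\gamma)_+-(1-\gamma)_+=0$. So $E_\gamma=D_{f_\gamma}$ is a genuine $f$-divergence, and the mutual absolute continuity hypothesis of the theorem is assumed in the statement.

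The main computation is the convex conjugate $f_\gamma^\star(t)=\sup_{u\ge0}\{tu-(u-\gamma)_+\}+(1-\gamma)_+$. The function $u\mapsto tu-(u-\gamma)_+$ is piecewise linear with slope $t$ on $[0,\gamma]$ and slope $t-1$ on $[\gamma,\infty)$. This gives three cases:
\begin{itemize}
\item for $t>1$ the supremum is $+\infty$;
\item for $0\le t\le1$ it is attained at $u=\gamma$;
\item for $t<0$ it is attained at $u=0$.
\end{itemize}
Hence $\operatorname{dom}(f_\gamma^\star)=(-\infty,1]$ and $f_\gamma^\star(t)=\gamma\,t_++(1-\gamma)_+$ there. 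In particular $a=1$ and $b=0$ both lie in the domain, with $f_\gamma^\star(1)=\gamma+(1-\gamma)_+$ and $f_\gamma^\star(0)=(1-\gamma)_+$. Thus $a>b$ and $f_\gamma^\star(a)-f_\gamma^\star(b)=\gamma>0$, so this pair is admissible in both suprema of \eqref{eq:f_div_gen_beta_eps_lowerbound}.

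Next I substitute into each branch.
\begin{itemize}
\item In the branch involving $D_f(P_1^n\|P_0^n)$, the denominator is $a-b=1$. The numerator is $1-\varepsilon(\gamma+(1-\gamma)_+)-(1-\varepsilon)(1-\gamma)_+-E_\gamma(P_1^n\|P_0^n)$, which simplifies to $1-\gamma\varepsilon-(1-\gamma)_+-E_\gamma(P_1^n\|P_0^n)$. This is exactly the second term of \eqref{eq:E_gamma_bound}.
\item In the branch involving $D_f(P_0^n\|P_1^n)$, the numerator is $a(1-\varepsilon)+b\varepsilon-f_\gamma^\star(b)-E_\gamma(P_0^n\|P_1^n)=1-\varepsilon-(1-\gamma)_+-E_\gamma(P_0^n\|P_1^n)$. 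The denominator is $\gamma$, which gives the first term of \eqref{eq:E_gamma_bound}.
\end{itemize}
Since each bound holds for every $\gamma>0$, taking the supremum over $\gamma$ in each branch and then the maximum of the two branches proves the corollary.

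The only step requiring care is the conjugate computation: getting the domain and the constant offset $(1-\gamma)_+$ right, including the regimes $\gamma<1$ and $\gamma\ge1$. Once $f_\gamma^\star(t)=\gamma t_++(1-\gamma)_+$ on $(-\infty,1]$ is established, the rest is direct substitution. One could optimise over other pairs $(a,b)$, but $(1,0)$ already yields the stated form, so no further optimisation is needed.
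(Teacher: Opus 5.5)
Your proposal is correct and is essentially the paper's argument. The paper applies Lemma~\ref{lemma:variational_fdiv} directly on the augmented rejection region $A$, taking $(a,b)=(0,1)$ for $E_\gamma(P_0^n\|P_1^n)$ (the same test function as your $(a,b)=(1,0)$ on $A^c$ inside Theorem~\ref{thm:fdiv_beta_lowerbound}) and $(a,b)=(1,0)$ for $E_\gamma(P_1^n\|P_0^n)$, and it replaces $\alpha_n$ by $\varepsilon$ by hand rather than through the theorem's monotonicity step. Your conjugate $f_\gamma^\star(t)=\gamma t_+ + (1-\gamma)_+$ on $(-\infty,1]$ is actually more accurate than the paper's stated domain $[0,1]$, though this is immaterial since only $t\in\{0,1\}$ is used.
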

\begin{proof}
    Fix $\varepsilon\in(0,1)$, and let $\phi$ be a possibly randomised test achieving $\beta_n(\varepsilon)$, with Type~I error $\alpha_n\leq\varepsilon$. Let $U\sim\mathrm{Unif}[0,1]$ be independent under both hypotheses and define
\begin{equation*}
A=\{(\mathbf{x},u):u\leq\phi(\mathbf{x})\}.
\end{equation*}
Under the augmented laws $P_j^n\otimes\mathrm{Unif}[0,1]$, this set satisfies $P_0^n(A)=\alpha_n$ and $P_1^n(A)=1-\beta_n(\varepsilon)$. We suppress the common auxiliary factor below; adjoining it leaves every $f$-divergence unchanged.
For $\gamma>0$, the $E_\gamma$ divergence corresponds to the $f$-divergence associated with the generator $f_\gamma(x)=(x-\gamma)_+ -(1-\gamma)_+$. Its convex conjugate is
\begin{equation}
    f_\gamma^*(t)=\begin{cases}
\gamma t + (1-\gamma)_+, & 0\leq t\leq 1,\\
+\infty, & \text{otherwise},
\end{cases}
\end{equation}
so that $\operatorname{dom}(f_\gamma^*)=[0,1]$.

We can now apply Lemma~\ref{lemma:variational_fdiv}. Specifically, applying Lemma~\ref{lemma:variational_fdiv} with
$P=P_0^n$, $Q=P_1^n$, the set $A$ and $f=f_\gamma$ gives

\begin{align}\label{eq:app:supremum_thm_e_gamma}
    E_\gamma(P_0^n\|P_1^n)
&\geq
\sup_{a,b\in[0,1]}
\Big\{a\alpha_n+b(1-\alpha_n)-f_\gamma^*(a)+(f^\star_\gamma(a)-f^\star_\gamma(b))\beta_n(\varepsilon)
\Big\}\nonumber\\
&=\sup_{a,b\in[0,1]}
\Big\{a\alpha_n+b(1-\alpha_n)-\gamma a -(1-\gamma)_++\gamma(a-b)\beta_n(\varepsilon)
\Big\}
\end{align}
We now choose $a=0$ and $b=1$ in the supremum \eqref{eq:app:supremum_thm_e_gamma}. This yields
\begin{align}\label{eq:app:e_gamma_first_direction}
    E_\gamma(P_0^n\|P_1^n)
&\geq
1-\alpha_n-(1-\gamma)_+-\gamma\beta_n(\varepsilon)\nonumber\\
&\geq 1-\varepsilon-(1-\gamma)_+-\gamma\beta_n(\varepsilon),
\end{align}
where the last inequality follows from $\alpha_n\leq \varepsilon$.

Rewriting \eqref{eq:app:e_gamma_first_direction} to isolate $\beta_n(\varepsilon)$ gives
\begin{equation}
    \beta_n(\varepsilon)
\geq
\frac{
1-\varepsilon-E_\gamma(P_0^n\|P_1^n)-(1-\gamma)_+
}{\gamma}.
\end{equation}

Because the above bound holds for any $\gamma>0$, taking the supremum over $\gamma>0$ yields
\begin{equation}
    \beta_n(\varepsilon)
\geq
\sup_{\gamma>0}
\frac{
1-\varepsilon-E_\gamma(P_0^n\|P_1^n)-(1-\gamma)_+
}{\gamma}.
\end{equation}

For the reverse direction, applying Lemma~\ref{lemma:variational_fdiv} with
$P=P_1^n$, $Q=P_0^n$, the set $A$ and $f=f_\gamma$ gives

\begin{align}
    E_\gamma(P_1^n\|P_0^n)
&\geq
\sup_{a,b\in[0,1]}\Bigg\{
a-(a-b)\beta_n(\varepsilon)-f_\gamma^*(b)-(f^\star_\gamma(a)-f^\star_\gamma(b))\alpha_n\Bigg\}\nonumber\\
&=\sup_{a,b\in[0,1]}\Bigg\{
a-(a-b)\beta_n(\varepsilon)-\gamma b -(1-\gamma)_+-\gamma(a-b)\alpha_n\Bigg\}
\end{align}
Choosing $a=1$ and $b=0$ in the supremum, we obtain
\begin{align}
    E_\gamma(P_1^n\|P_0^n)
&\geq
1-\beta_n(\varepsilon)-(1-\gamma)_+-\gamma\alpha_n\\
&\geq 1-\beta_n(\varepsilon)-(1-\gamma)_+-\gamma\varepsilon,
\end{align}
where the last inequality follows from $\alpha_n\leq \varepsilon$.

From the above inequality by isolating $\beta_n(\varepsilon)$, we obtain
\begin{equation}
    \beta_n(\varepsilon)
\geq
1-\gamma\varepsilon-E_\gamma(P_1^n\|P_0^n)-(1-\gamma)_+.
\end{equation}

Again, because the above bound holds for any $\gamma>0$, taking the supremum over $\gamma>0$ gives
\begin{equation}
    \beta_n(\varepsilon)
\geq
\sup_{\gamma>0}\{
1-\gamma\varepsilon-E_\gamma(P_1^n\|P_0^n)-(1-\gamma)_+\}
.
\end{equation}

Combining the two bounds concludes the proof.
\end{proof}
Setting $\gamma=1$ in \eqref{eq:E_gamma_bound} recovers the classical total-variation converse.

The same approach yields a Hellinger-$\lambda$ and a KL divergence converse bound, as shown in the corollaries below. 

\begin{corollary}\label{cor:Hellinger_div_bound}
    Let $P_1^n$ and $P_0^n$ be mutually absolutely continuous. Then, for any
$\varepsilon\in(0,1)$ and any $\lambda\in(0,1)$,
\begin{equation}
h_\lambda(P_0^n,P_1^n)
\leq
\varepsilon^\lambda
\bigl(1-\beta_n(\varepsilon)\bigr)^{1-\lambda}
+
(1-\varepsilon)^\lambda
\beta_n(\varepsilon)^{1-\lambda},
\label{eq:Hellinger_div_bound}
\end{equation}
where
\begin{equation}
h_\lambda(P,Q)=
1-H_\lambda(P,Q)
\end{equation}
denotes the Hellinger affinity of order $\lambda$, and 
$H_\lambda$ denotes the Hellinger divergence of order $\lambda$, which 
is an $f$-divergence with function
\begin{equation*}
f_\lambda(t)=1-t^\lambda.
\end{equation*}
\end{corollary}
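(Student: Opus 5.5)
We apply Lemma~\ref{lemma:variational_fdiv}, in its binary data-processing form \eqref{eq:binary_fdiv_DPI}, to the Hellinger generator $f_\lambda(t)=1-t^\lambda$. We then compute the binary Hellinger divergence in closed form and replace the actual Type~I error by $\varepsilon$ through a monotonicity argument.

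\textbf{Step 1 (randomisation and reduction).} Fix $\varepsilon,\lambda\in(0,1)$. Let $\phi$ be a possibly randomised test attaining $\beta_n(\varepsilon)$, with Type~I error $\alpha_n\le\varepsilon$. Adjoin $U\sim\mathrm{Unif}[0,1]$ and set $A=\{(\mathbf{x},u):u\le\phi(\mathbf{x})\}$, exactly as in the proof of Theorem~\ref{thm:fdiv_beta_lowerbound}. Then $P_0^n(A)=\alpha_n$ and $P_1^n(A)=1-\beta_n(\varepsilon)$, and the common auxiliary factor leaves $H_\lambda$ unchanged.

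The function $f_\lambda$ is convex, continuous on $[0,\infty)$ and satisfies $f_\lambda(1)=0$, so Lemma~\ref{lemma:variational_fdiv} applies with $P=P_0^n$ and $Q=P_1^n$. Writing $p=\alpha_n$ and $q=1-\beta_n(\varepsilon)$, it gives
\begin{equation*}
H_\lambda(P_0^n,P_1^n)\ \ge\ q f_\lambda\!\left(\tfrac{p}{q}\right)+(1-q)f_\lambda\!\left(\tfrac{1-p}{1-q}\right)=1-p^\lambda q^{1-\lambda}-(1-p)^\lambda(1-q)^{1-\lambda}.
\end{equation*}
Since $h_\lambda=1-H_\lambda$, this rearranges to
\begin{equation*}
h_\lambda(P_0^n,P_1^n)\le \alpha_n^\lambda(1-\beta_n(\varepsilon))^{1-\lambda}+(1-\alpha_n)^\lambda\beta_n(\varepsilon)^{1-\lambda}.
\end{equation*}
The endpoint cases $q\in\{0,1\}$ are covered by the conventions in \eqref{eq:f_divergence_def}, as in the lemma; the resulting formula remains valid with $0^0=1$.

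\textbf{Step 2 (replacing $\alpha_n$ by $\varepsilon$).} This is the only step that needs care, because the right-hand side is not monotone in $\alpha$ for all $\beta$. Fix $\beta=\beta_n(\varepsilon)$ and set
\begin{equation*}
g(\alpha)=\alpha^\lambda(1-\beta)^{1-\lambda}+(1-\alpha)^\lambda\beta^{1-\lambda}.
\end{equation*}
Its derivative is
\begin{equation*}
g'(\alpha)=\lambda\Big[\big(\tfrac{1-\beta}{\alpha}\big)^{1-\lambda}-\big(\tfrac{\beta}{1-\alpha}\big)^{1-\lambda}\Big],
\end{equation*}
which is nonnegative if and only if $(1-\alpha)(1-\beta)\ge\alpha\beta$, that is, $\alpha+\beta\le1$.

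The constant test $\phi\equiv\varepsilon$ shows that $\beta_n(\varepsilon)\le1-\varepsilon$. Hence every $\alpha'\in[\alpha_n,\varepsilon]$ satisfies $\alpha'+\beta\le\varepsilon+1-\varepsilon=1$. Therefore $g$ is nondecreasing on $[\alpha_n,\varepsilon]$ and $g(\alpha_n)\le g(\varepsilon)$, which yields \eqref{eq:Hellinger_div_bound}. Alternatively, the Neyman--Pearson lemma lets one choose a randomised optimal test with $\alpha_n=\varepsilon$ exactly, which avoids the monotonicity step altogether.
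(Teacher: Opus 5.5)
Your proof is correct and follows the paper's argument. It uses the same randomised rejection region and applies Lemma~\ref{lemma:variational_fdiv} with $P=P_0^n$, $Q=P_1^n$; the monotonicity step is the same, showing $g$ is nondecreasing for $\alpha\le 1-\beta_n(\varepsilon)$ via $\beta_n(\varepsilon)\le 1-\varepsilon$ from the constant test. The only difference is that you invoke the lemma's closed-form binary equality \eqref{eq:binary_fdiv_DPI} instead of computing $f_\lambda^\star$ and the maximising $a,b$ explicitly, which is slightly cleaner because the paper's stated maximiser $a$ is not attained when $\alpha_n=0$.
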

\begin{proof}
    Fix $\varepsilon\in(0,1)$. The constant test $\phi\equiv\varepsilon$ gives $\beta_n(\varepsilon)\leq1-\varepsilon$. Let $\phi$ now be a possibly randomised test achieving $\beta_n(\varepsilon)$, with Type~I error $\alpha_n\leq\varepsilon$. Introduce an independent $U\sim\mathrm{Unif}[0,1]$ under both hypotheses and set
\begin{equation*}
A=\{(\mathbf{x},u):u\leq\phi(\mathbf{x})\}.
\end{equation*}
Under the augmented laws, $P_0^n(A)=\alpha_n$ and $P_1^n(A)=1-\beta_n(\varepsilon)$. We again suppress the common auxiliary factor, which leaves the Hellinger divergence unchanged.
For $\lambda\in(0,1)$, the Hellinger divergence of order $\lambda$ is the $f$-divergence generated by
\begin{equation}
f_\lambda(t)=1-t^\lambda,
\qquad \text{for } 0<\lambda<1.
\end{equation}
Therefore, its convex conjugate is
\begin{equation}
f_\lambda^*(t)
=
\begin{cases}
-1
+(1-\lambda)
\left(\dfrac{\lambda}{-t}\right)^{\frac{\lambda}{1-\lambda}},
& t<0,\\
+\infty,
& t\geq0.
\end{cases}
\end{equation}
Hence, $\operatorname{dom}(f_\lambda^*)=(-\infty,0).$
Applying Lemma~\ref{lemma:variational_fdiv} with
$P=P_0^n$, $Q=P_1^n$, $f=f_\lambda$, and the set $A$, we obtain
\begin{align}
H_\lambda(P_0^n,P_1^n)
&\geq
\sup_{a,b<0}
\Big\{
a\alpha_n
+b(1-\alpha_n)
-f_\lambda^*(b)
-\bigl(f_\lambda^*(a)-f_\lambda^*(b)\bigr)
(1-\beta_n(\varepsilon))
\Big\}\nonumber\\
&=\sup_{a,b<0}
\Big\{
a\alpha_n 
-f_\lambda^*(a)(1-\beta_n(\varepsilon))
+b(1-\alpha_n)
-f_\lambda^*(b)\beta_n(\varepsilon)
\Big\}\nonumber\\
&=\sup_{a<0} \Big\{a\alpha_n-f^\star_\lambda(a)(1-\beta_n(\varepsilon))\Big\} + \sup_{b<0} \Big\{b(1-\alpha_n)-f^\star_\lambda(b)\beta_n(\varepsilon)\Big\}.
\label{eq:app:thm1_hellinger}
\end{align}
From \eqref{eq:app:thm1_hellinger} one can verify that the value of $a$ and $b$ that maximises the two suprema are
\begin{equation*}
    a = -\lambda\left(\frac{\alpha_n}{1-\beta_n(\varepsilon)}\right)^{\lambda-1}
\end{equation*}
and
\begin{equation*}
    b=-\lambda\left(\frac{1-\alpha_n}{\beta_n(\varepsilon)}\right)^{\lambda-1}.
\end{equation*}
Substituting these values into \eqref{eq:app:thm1_hellinger} gives
\begin{align}
H_\lambda(P_0^n,P_1^n)
\geq
1-\alpha_n^\lambda\bigl(1-\beta_n(\varepsilon)\bigr)^{1-\lambda}-
(1-\alpha_n)^\lambda
\beta_n(\varepsilon)^{1-\lambda}.
\label{eq:app:thm_1_hellinger_step2}
\end{align}
From which we obtain
\begin{equation}\label{eq:app:hellinger_bound}
    h_\lambda(P_0^n,P_1^n)=1-H_\lambda(P_0^n,P_1^n)\leq \alpha_n^\lambda\bigl(1-\beta_n(\varepsilon)\bigr)^{1-\lambda}+
(1-\alpha_n)^\lambda
\beta_n(\varepsilon)^{1-\lambda}.
\end{equation}
Let us now show that the right hand side in \eqref{eq:app:hellinger_bound} is increasing in $\alpha_n$. 

Define
\begin{equation}
g(\alpha)
=
\alpha^\lambda
\bigl(1-\beta_n(\varepsilon)\bigr)^{1-\lambda}
+
(1-\alpha)^\lambda
\beta_n(\varepsilon)^{1-\lambda}.
\end{equation}
Its derivative is
\begin{equation}
g'(\alpha)
=
\lambda
\left[
\alpha^{\lambda-1}
\bigl(1-\beta_n(\varepsilon)\bigr)^{1-\lambda}
-
(1-\alpha)^{\lambda-1}
\beta_n(\varepsilon)^{1-\lambda}
\right].
\end{equation}
For $\alpha\leq1-\beta_n(\varepsilon)$, we have
$g'(\alpha)\geq0$. Thus, since 
$\varepsilon+\beta_n(\varepsilon)\leq1$, we have
$\alpha_n\leq\varepsilon\leq1-\beta_n(\varepsilon)$, and 
\begin{equation}
g(\alpha_n)\leq g(\varepsilon).
\end{equation}
Applying this inequality to \eqref{eq:app:hellinger_bound}
yields
\begin{equation*}
    h_\lambda(P_0^n,P_1^n)\leq \varepsilon^\lambda\bigl(1-\beta_n(\varepsilon)\bigr)^{1-\lambda}+
(1-\varepsilon)^\lambda
\beta_n(\varepsilon)^{1-\lambda},
\end{equation*}
which concludes the proof.
\end{proof}

\begin{corollary}\label{cor:KL_converse}
Let $P_0$ and $P_1$ be mutually absolutely continuous. Then for any $\varepsilon\in(0,1)$, \begin{align}
\beta_n(\varepsilon)
&\geq\max\Bigg\{
\exp\left\{
-\frac{nD(P_0\|P_1)+\log 2}{1-\varepsilon}
\right\}, 1-\frac{nD(P_1\|P_0)+\log2}{\log(1/\varepsilon)}\Bigg\}.
\label{eq:KL_bound}
\end{align}
\end{corollary}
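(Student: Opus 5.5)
Both bounds come from the binary KL data-processing inequality, i.e.\ Lemma~\ref{lemma:variational_fdiv} with $f_{\mathrm{KL}}(t)=t\log t$, combined with the elementary estimate $d(p\|q)\geq -h_b(p)-p\log q-(1-p)\log(1-q)\geq -\log 2 - p\log q - (1-p)\log(1-q)$, where $h_b$ is the binary entropy in nats, which is at most $\log 2$. First I realise an optimal randomised test $\phi$ by the set $A=\{(\mathbf{x},u):u\leq\phi(\mathbf{x})\}$, adjoining an independent $U\sim\mathrm{Unif}[0,1]$ as in the proof of Theorem~\ref{thm:fdiv_beta_lowerbound}, so that $P_0^n(A)=\alpha_n\leq\varepsilon$ and $P_1^n(A^c)=\beta_n(\varepsilon)$. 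Lemma~\ref{lemma:variational_fdiv} then gives $D(P^n\|Q^n)\geq d(P(A)\|Q(A))$ in either orientation, and tensorisation gives $D(P_j^n\|P_k^n)=nD(P_j\|P_k)$.

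For the first bound, take $P=P_0^n$, $Q=P_1^n$ and the event $A^c$, so that $p=1-\alpha_n$ and $q=\beta_n$. Dropping the nonnegative term $-\alpha_n\log(1-\beta_n)$ yields
\[
nD(P_0\|P_1)\geq -\log 2+(1-\alpha_n)\log\frac{1}{\beta_n}.
\]
Since $1-\alpha_n\geq 1-\varepsilon$ and $\log(1/\beta_n)\geq 0$, this gives $\log(1/\beta_n)\leq (nD(P_0\|P_1)+\log 2)/(1-\varepsilon)$. Exponentiating gives the first term.

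For the second bound, take $P=P_1^n$, $Q=P_0^n$ and the event $A$, so that $p=1-\beta_n$ and $q=\alpha_n$. The same estimate, after dropping $-\beta_n\log(1-\alpha_n)\geq0$, gives
\[
nD(P_1\|P_0)\geq -\log 2+(1-\beta_n)\log\frac{1}{\alpha_n}\geq -\log 2+(1-\beta_n)\log\frac1\varepsilon,
\]
where the last step uses $\alpha_n\leq\varepsilon<1$. Rearranging gives $\beta_n\geq 1-(nD(P_1\|P_0)+\log2)/\log(1/\varepsilon)$. Taking the maximum of the two bounds completes the argument.

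The only delicate point is the edge cases. If $\alpha_n=0$ or $\beta_n=0$, the terms $\log(1/\alpha_n)$ or $\log(1/\beta_n)$ are infinite; mutual absolute continuity together with finiteness of the divergence then rules out $\beta_n=0$ in the first bound, and the inequality remains consistent with the conventions of \eqref{eq:f_divergence_def}. If the divergence is infinite, both bounds are trivial. I also need to justify using the monotonicity in $\alpha_n$ only after discarding the cross terms, which the ordering of steps above does. Working through these cases is the main check; the rest is routine algebra.
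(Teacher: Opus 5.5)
Your proof is correct and follows the route the paper intends. The paper gives no separate proof: it says only that the KL bound comes from the same binary data-processing reduction of Lemma~\ref{lemma:variational_fdiv}, and that it coincides with Fano's bound. Your steps are exactly that reduction: binary KL data processing on the randomised-test event, the estimate $h_b\le\log 2$, dropping the nonnegative cross term, and only then invoking $\alpha_n\le\varepsilon$.

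The edge cases are simpler than you suggest. Mutual absolute continuity alone forces $\beta_n>0$, since otherwise $\alpha_n=1>\varepsilon$. It also turns $\alpha_n=0$ into $\beta_n=1$, where the second bound holds trivially.
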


It is worth noting that the bound derived in the above corollary coincides precisely with the classical Fano bound~\cite{fano1961transmission}.

\section{KL-based converse bounds with refinements}\label{app:KL_based_bound}
Let $P_1$ and $P_0$ be mutually absolutely continuous, and let $\varepsilon\in(0,1)$.

In finite-alphabet settings with common support, a more refined pair of bounds than the classical KL-based converse presented in Corollary \ref{cor:KL_converse} can be obtained using the smoothing-out method of~\cite{liu2020second}. Specifically, from~\cite[Theorem 2.4]{liu2020second}, for the forward KL $D(P_0\|P_1)$, one gets
\begin{equation*}
    \beta_n(\varepsilon)\geq \exp\Bigg\{-nD(P_0\|P_1)-2\sqrt{\log\frac{1}{1-\varepsilon}}\sqrt{n(a-1)}-\log\frac{1}{1-\varepsilon}\Bigg\},
\end{equation*}
where $a=\left\|\frac{dP_0}{dP_1}\right\|_{\infty}.$
For the reverse KL divergence $D(P_1\|P_0)$, the same approach gives
\begin{equation*}
    \beta_n(\varepsilon)\geq \sup_{t>0}\Bigg\{1-\Bigg(\varepsilon e^{nD(P_1\|P_0)+(b-1)nt}\Bigg)^{\frac{t}{t+1}}\Bigg\},
\end{equation*}
where $b=\left\|\frac{dP_1}{dP_0}\right\|_{\infty}.$
Moreover, in the regime $nD(P_1\|P_0)<\log(1/\varepsilon)$, the supremum over $t$ in the above bound admits the following closed-form solution:
\begin{equation*}
    \beta_n(\varepsilon)\geq 1- \exp\Bigg\{-(b-1)n\left(\sqrt{1-\frac{nD(P_1\|P_0)+\log \varepsilon}{(b-1)n}}-1\right)^2\Bigg\}.
\end{equation*}

When the alphabet is unbounded, the $\left\|\cdot\right\|_{\infty}$ quantities appearing above may be unbounded as well. However, the smoothing-out argument in \cite[Theorem 2.4]{liu2020second} can still be adapted to obtain finite-sample bounds. 

As an example,  consider the Gaussian setting,  $P_0=\mathcal{N}(\mu, 1)$ and $P_1=\mathcal{N}(\mu+\Delta, 1)$.  

For the forward KL, for any $t>0$, we obtain
\begin{equation*}
    \beta_n(\varepsilon)\geq \exp\Bigg\{-nD(P_0\|P_1)+\frac{1}{1-e^{-2t}}\log(1-\varepsilon)-nt-\frac{\Delta^2}{2}\left(e^t-1\right)^2 -n\big(\cosh(2t)-1\big)\Bigg\},
\end{equation*}
where $\cosh(t) = \frac{e^t + e^{-t}}{2}$.

Instead, for the reverse KL, for any $t>0$, we get
\begin{equation*}
    \beta_n(\varepsilon)\geq 1-\exp\Bigg\{(1-e^{-2t})\Bigg[nD(P_1\|P_0)+\log \varepsilon+nt+\frac{\Delta^2}{2}(e^t-1)^2+n\big(\cosh(2t)-1\big)\Bigg]\Bigg\},
\end{equation*}
where $\cosh(t) = \frac{e^t + e^{-t}}{2}$.

\section{Proof of Theorem \ref{th:phase_transition}}\label{app:th:phase_transition}
    Setting $\varepsilon_n=e^{-nr}$ in the data-processing converse of Remark~\ref{rem:DPI_recover} gives, for every $\lambda>1$,
    \begin{equation}\label{eq:phase_transition_proof_converse}
        \beta_n(e^{-nr})\geq1-\exp\left(-\frac{\lambda-1}{\lambda}n\bigl(r-D_\lambda(P_1\|P_0)\bigr)\right).
    \end{equation}
    Optimising over $\lambda>1$ gives~\eqref{eq:r>}.
    
    Similarly, applying Corollary~\ref{cor:beta_optimal_upper_bound_renyi} with $\varepsilon_n=e^{-nr}$ gives, for every $\lambda\in(0,1)$,
    \begin{align}
        \beta_n(e^{-nr})&\leq\lambda(1-\lambda)^{\frac{1-\lambda}{\lambda}}\exp\left(-\frac{1-\lambda}{\lambda}n\bigl(D_\lambda(P_1\|P_0)-r\bigr)\right)\nonumber\\
        &\leq\exp\left(-\frac{1-\lambda}{\lambda}n\bigl(D_\lambda(P_1\|P_0)-r\bigr)\right).\label{eq:phase_transition_proof_achievability}
    \end{align}
    Optimising over $\lambda\in(0,1)$ gives~\eqref{eq:r<}.
    
    \smallskip
    \noindent
    \textbf{Case $r>D(P_1\|P_0)$}: By~\eqref{eq:renyi_right_continuity}, there exists $\lambda_+>1$ sufficiently close to one such that
    \begin{equation}
        r>D_{\lambda_+}(P_1\|P_0).
    \end{equation}
    For this choice of $\lambda$, the exponential term in~\eqref{eq:phase_transition_proof_converse} vanishes as $n\to\infty$. Hence,
    \begin{equation}
        \lim_{n\to\infty}\beta_n(e^{-nr})=1.
    \end{equation}
    
    \smallskip
    \noindent
    \textbf{Case $r<D(P_1\|P_0)$}: Similarly, since $D_{\lambda}(P_1\| P_0)$ converges to  $D(P_1\| P_0)$ as $\lambda\to 1^-$, there exists $\lambda_-\in(0,1)$ sufficiently close to one such that
    \begin{equation}
        r<D_{\lambda_-}(P_1\|P_0).
    \end{equation}
    For this choice of $\lambda$, the exponential term in~\eqref{eq:phase_transition_proof_achievability} vanishes as $n\to\infty$. Hence,
    \begin{equation}
        \lim_{n\to\infty}\beta_n(e^{-nr})=0.
    \end{equation}
    Combining the two cases proves the result.

\section{Proofs of the Sample-Complexity Bounds}
\subsection{Proof of Corollary \ref{cor:lower_bound_sample_complexity}}\label{app:cor:lower_bound_sample_complexity}
Let $n=n(\varepsilon,\delta)$ and set $\beta_\star=\beta_n(\varepsilon)$. By the definition of $n(\varepsilon,\delta)$, we have $\beta_\star\leq\delta$. The first converse in Theorem~\ref{thm:renyi_converse_bound_via_vr} implies that, for any $\lambda>1$ and $c>0$,
\begin{equation}
    \beta_\star\geq \frac{e^{(\lambda-1)c}-e^{\frac{\lambda-1}{\lambda}nD_\lambda(P_1\|P_0)}(1-\varepsilon+\varepsilon e^{\lambda c})^{\frac{\lambda-1}{\lambda}}}{e^{(\lambda-1)c}-1}.
\end{equation}
Rearranging the above inequality to isolate $n$ gives
\begin{equation}\label{eq:first_step_add_sc}
    n\geq
\frac{1}{D_\lambda(P_1\|P_0)}
\left[
\frac{\lambda}{\lambda-1}
\log\left((1-\beta_\star)e^{(\lambda-1)c}+\beta_\star\right)
-\log\left(\varepsilon e^{\lambda c}+1-\varepsilon\right)
\right].
\end{equation}
Since $\beta_\star\leq\delta$ and the right-hand side of \eqref{eq:first_step_add_sc} is decreasing in $\beta_\star$, we obtain
\begin{equation*}
        n\geq \frac{1}{D_{\lambda}(P_1\|P_0)}\left[\frac{\lambda}{\lambda-1}\log((1-\delta) e^{(\lambda-1)c}+\delta)-\log(\varepsilon e^{\lambda c}+1-\varepsilon)\right].
\end{equation*}
Since $n=n(\varepsilon,\delta)$ and the above inequality holds for all $\lambda>1$ and $c>0$, taking the supremum over $c$ and $\lambda$ yields the first bound in \eqref{eq:lb_sample_complexity}. 

\noindent
The same argument applied to the second bound in Theorem~\ref{thm:renyi_converse_bound_via_vr} gives, for every $\lambda>1$ and $c>0$,
\begin{equation}\label{eq:step_sample_mul}
        n\geq \frac{1}{D_{\lambda}(P_0\|P_1)}\left[\frac{\lambda}{\lambda-1}\log((1-\varepsilon)e^{(\lambda-1)c}+\varepsilon)-\log(\beta_\star e^{\lambda c}+1-\beta_\star)\right]
\end{equation}
Since the right-hand side of \eqref{eq:step_sample_mul} is decreasing in $\beta_\star$ and $\beta_\star\leq\delta$, we get
\begin{equation}
        n\geq \frac{1}{D_{\lambda}(P_0\|P_1)}\left[\frac{\lambda}{\lambda-1}\log((1-\varepsilon)e^{(\lambda-1)c}+\varepsilon)-\log(\delta e^{\lambda c}+1-\delta)\right].
\end{equation}
Here, since $n=n(\varepsilon,\delta)$ and the above inequality holds for any $c>0$ and $\lambda>1$, taking the supremum over $\lambda$ and $c$ yields the second bound in \eqref{eq:lb_sample_complexity}.

Let us now optimise the bounds over the auxiliary parameter $c$. For the first bound, for any fixed $\lambda>1$, define
\begin{equation}
        f(c)=\frac{\lambda}{\lambda-1}\log\left((1-\delta) e^{(\lambda-1)c}+\delta\right)-\log(\varepsilon e^{\lambda c}+1-\varepsilon).
\end{equation}
Its derivative $f^\prime(c)$ is given by
\begin{equation}
    f^\prime(c)=\frac{\lambda(1-\delta)e^{(\lambda-1)c}}{(1-\delta)e^{(\lambda-1)c}+\delta}-\frac{\lambda\varepsilon e^{\lambda c}}{\varepsilon e^{\lambda c}+1-\varepsilon}.
\end{equation}
 By evaluating the sign of $f^\prime(c)$, one can verify that $f^\prime(c)$ is positive for 
    \[
        c<  \log\left(\frac{(1-\varepsilon)(1-\delta)}{\varepsilon\delta}\right)
    \]
    and negative for
    \[
        c>  \log\left(\frac{(1-\varepsilon)(1-\delta)}{\varepsilon\delta}\right).
    \]
    Consequently, $f(c)$ achieves its unique global maximum exactly at
    \[
        c^*=\log\left(\frac{(1-\varepsilon)(1-\delta)}{\varepsilon\delta}\right).
    \]
Thus, whenever $\varepsilon+\delta<1$ we have $c^*>0$. Substituting $c^*$ into the first bound in \eqref{eq:lb_sample_complexity} gives the first bound in \eqref{eq:lb_sample_complexity_opt}.

Similarly, for the second bound, for fixed $\lambda>1$, define
\begin{equation}
g(c)
=
\frac{\lambda}{\lambda-1}
\log\left((1-\varepsilon)e^{(\lambda-1)c}+\varepsilon\right)
-
\log\left(\delta e^{\lambda c}+1-\delta\right).
\end{equation}
Its derivative is
\begin{equation}
g^\prime(c)
=
\frac{\lambda(1-\varepsilon)e^{(\lambda-1)c}}
{(1-\varepsilon)e^{(\lambda-1)c}+\varepsilon}
-
\frac{\lambda\delta e^{\lambda c}}
{\delta e^{\lambda c}+1-\delta}.
\end{equation}
As above, one can verify that $g^\prime(c)$ is positive for $c<c^*$ and negative for $c>c^*$, with
\[
        c^*=\log\left(\frac{(1-\varepsilon)(1-\delta)}{\varepsilon\delta}\right).
\]
Consequently, $g(c)$ also attains its unique global maximum at $c^*$. 

Therefore, as for the first bound, whenever $\varepsilon+\delta<1$, we have $c^*>0$. Substituting it into the second bound in \eqref{eq:lb_sample_complexity} yields the second term in \eqref{eq:lb_sample_complexity_opt}, which concludes the proof.

    \subsection{Proof of Corollary~\ref{cor:upper_bound_sample_complexity}}\label{app:cor:upper_bound_sample_complexity}
    Fix $\lambda\in(0,1)$. By Corollary~\ref{cor:beta_optimal_upper_bound_renyi},
    \begin{equation}
        \beta_n(\varepsilon)\leq\lambda\left(\frac{1-\lambda}{\varepsilon}\right)^{\frac{1-\lambda}{\lambda}}\exp\left(-\frac{1-\lambda}{\lambda}nD_\lambda(P_1\|P_0)\right).
    \end{equation}
    The right-hand side is at most $\delta$ whenever
    \begin{equation}
        n\geq\frac{\log\left(\frac{1-\lambda}{\varepsilon}\right)+\frac{\lambda}{1-\lambda}\log\left(\frac{\lambda}{\delta}\right)}{D_\lambda(P_1\|P_0)}.
    \end{equation}
    Therefore, setting
    \begin{equation}
        n_\lambda=\left\lceil\frac{\log\left(\frac{1-\lambda}{\varepsilon}\right)+\frac{\lambda}{1-\lambda}\log\left(\frac{\lambda}{\delta}\right)}{D_\lambda(P_1\|P_0)}\right\rceil
    \end{equation}
    ensures that $\beta_{n_\lambda}(\varepsilon)\leq\delta$. By the definition of sample complexity, $n(\varepsilon,\delta)\leq n_\lambda$. Taking the infimum over $\lambda\in(0,1)$ proves the result.

\subsection{Proof of Remark \ref{remark:sc_bounds_comparison}}\label{app:remark:sc_bounds_comparison}

Consider the strongly asymmetric regime where the Type~II error bound $\delta \in (0, 1/32]$ is fixed and the Type~I error bound $\varepsilon \to 0$. 
In this regime, by expanding the term $\lambda_*/(1-\lambda_*)$, the bound in \eqref{eq:pensia_sc_eps<delta} simplifies to
\begin{equation}
    n(\varepsilon,\delta)\geq \frac{1}{2}\frac{\log(1/(2\varepsilon))}{D_{\lambda_*}(P_1\|P_0)}.
\end{equation}
As $\varepsilon \to 0$, it follows that $\lambda_* \to 1^-$, which implies $D_{\lambda_*}(P_1\|P_0) \to D(P_1\|P_0)$. Furthermore, since $\log(1/(2\varepsilon)) = \log(1/\varepsilon) - \log 2$, the asymptotic behaviour of \eqref{eq:pensia_sc_eps<delta} is governed by
\begin{equation}\label{app:eq:pensia_lb_sc}
n(\varepsilon,\delta) \gtrsim \frac{1}{2}\frac{\log(1/\varepsilon)}{D(P_1\|P_0)}.
\end{equation}
 We now evaluate the behaviour of our bound in \eqref{eq:lb_sample_complexity}. Specifically, since $\varepsilon+\delta<1$, we can consider the bound in \eqref{eq:lb_sample_complexity_opt} yielding
\begin{align}\label{app:eq:step_1_sample_cc_case_1}
n(\varepsilon,\delta) \geq \sup_{\lambda>1} \Bigg\{ \frac{1}{D_{\lambda}(P_1\|P_0)} \Bigg[ &\frac{\lambda}{\lambda-1}\log\left((1-\delta)\left(\frac{(1-\varepsilon)(1-\delta)}{\varepsilon\delta}\right)^{\lambda-1}+\delta\right) \nonumber\\
&- \log\left(\varepsilon \left(\frac{(1-\varepsilon)(1-\delta)}{\varepsilon\delta}\right)^\lambda+1-\varepsilon\right) \Bigg] \Bigg\}.\end{align}
To study the asymptotic behaviour of \eqref{app:eq:step_1_sample_cc_case_1} as $\varepsilon \to 0$ for a fixed $\lambda > 1$, we isolate the dominant terms in the arguments of the two logarithms. For the first logarithm, we have
\begin{equation}\label{eq:case_1_term1}
(1-\delta)\left(\frac{(1-\varepsilon)(1-\delta)}{\varepsilon\delta}\right)^{\lambda-1}+\delta = \frac{(1-\delta)^\lambda}{\delta^{\lambda-1}} \varepsilon^{1-\lambda}(1-\varepsilon)^{\lambda-1} +\delta.
\end{equation}
Since $\lambda > 1$, the term $\varepsilon^{1-\lambda} \to +\infty$ and the term $(1-\varepsilon)^{\lambda-1}\to 1$ as $\varepsilon \to 0$. Thus, \eqref{eq:case_1_term1} can be written as
\begin{align*}
    \frac{(1-\delta)^\lambda}{\delta^{\lambda-1}} \varepsilon^{1-\lambda}(1-\varepsilon)^{\lambda-1} +\delta&=\frac{(1-\delta)^\lambda}{\delta^{\lambda-1}} \varepsilon^{1-\lambda}\left((1-\varepsilon)^{\lambda-1}+\frac{\delta}{\frac{(1-\delta)^\lambda}{\delta^{\lambda-1}} \varepsilon^{1-\lambda}}\right)\\
    &=\frac{(1-\delta)^\lambda}{\delta^{\lambda-1}} \varepsilon^{1-\lambda}\big(1+o(1)\big).
\end{align*}

Following the same reasoning, the argument of the second logarithm can be rearranged as \begin{equation}\label{eq:case_1_term2}
\varepsilon \left(\frac{(1-\varepsilon)(1-\delta)}{\varepsilon\delta}\right)^\lambda+1-\varepsilon = \left(\frac{1-\delta}{\delta}\right)^\lambda \varepsilon^{1-\lambda} (1-\varepsilon)^{\lambda} + 1-\varepsilon,
\end{equation}
which is asymptotically equivalent to
\[
\left(\frac{1-\delta}{\delta}\right)^\lambda \varepsilon^{1-\lambda}(1 + o(1)).
\]
Substituting these back into \eqref{app:eq:step_1_sample_cc_case_1}, and using the property $\log\bigg(x\big(1+o(1)\big)\bigg) = \log x + o(1)$, we get that the term inside the square brackets in \eqref{app:eq:step_1_sample_cc_case_1} becomes
\begin{align}
&\frac{\lambda}{\lambda-1}\log\left( \frac{(1-\delta)^\lambda}{\delta^{\lambda-1}} \varepsilon^{1-\lambda} \right) - \log\left( \left(\frac{1-\delta}{\delta}\right)^\lambda \varepsilon^{1-\lambda} \right) + o(1) \nonumber \\
&= \frac{\lambda}{\lambda-1} \log\left( \frac{(1-\delta)^\lambda}{\delta^{\lambda-1}} \right) + \frac{\lambda}{\lambda-1}(1-\lambda)\log \varepsilon - \log\left( \frac{1-\delta}{\delta} \right)^\lambda - (1-\lambda)\log \varepsilon + o(1) \nonumber \\
&= -\lambda \log \varepsilon + (\lambda-1)\log \varepsilon + \frac{\lambda}{\lambda-1} \log(1-\delta)+ o(1) \nonumber \\
&= \log(1/\varepsilon) + C(\lambda, \delta) + o(1), \label{eq:step_2_sample_cc_case_1}
\end{align}
where 
\begin{equation*}
    C(\lambda, \delta)=\frac{\lambda}{\lambda-1}\log(1-\delta),
\end{equation*}
which is independent of $\varepsilon$.

For every fixed $\lambda>1$, \eqref{eq:step_2_sample_cc_case_1} implies
\begin{equation*}
    \liminf_{\varepsilon\downarrow0}
    \frac{n(\varepsilon,\delta)}{\log(1/\varepsilon)}
    \geq\frac{1}{D_\lambda(P_1\|P_0)}.
\end{equation*}
Letting $\lambda\downarrow1$ and using \eqref{eq:renyi_right_continuity} gives
\begin{equation}\label{eq:final_step_sample_cc_case_1}
    \liminf_{\varepsilon\downarrow0}
    \frac{n(\varepsilon,\delta)}{\log(1/\varepsilon)}
    \geq\frac{1}{D(P_1\|P_0)}.
\end{equation}
Comparing \eqref{eq:final_step_sample_cc_case_1} with \eqref{app:eq:pensia_lb_sc} proves the factor-two improvement in the leading constant.

\subsection{Proof of Remark \ref{remark:sc_upper_comparison}}\label{app:remark:sc_upper_comparison}
Consider the symmetric regime $\varepsilon=\delta\leq 1/32$. In this regime, $\lambda_*=1/2$ and the bound in \eqref{eq:pensia_sc_eps<delta} simplifies to
\begin{equation}\label{eq:pensia_ub_simpl}
    n(\varepsilon,\delta)\leq \Bigg\lceil\frac{2}{D_{1/2}(P_1\|P_0)}\log\left(\frac{1}{2\delta}\right)\Bigg\rceil.
\end{equation}
On the other hand, evaluating \eqref{eq:ub_sample_complexity} at $\lambda=1/2$ gives
\begin{align}
    n(\varepsilon,\delta)&\leq\Bigg\lceil\frac{2}{D_{1/2}(P_1\|P_0)}\log\left(\frac{1}{2\delta}\right)\Bigg\rceil.
\end{align}
Thus, the two bounds exhibits the same behaviour.

Consider the asymmetric regime $\varepsilon<\delta\leq 1/32$ and let $\lambda_*$ be defined as
\begin{equation*}
\lambda_*= \frac{\log\big(1/(2\varepsilon)\big)}{\log\big(1/(2\delta)\big)+\log\big(1/(2\varepsilon)\big)}\in[0.5,1).    
\end{equation*}
 For
\begin{equation*}
    \varepsilon\leq \frac{1}{2}\left(\frac{\delta}{\lambda_*}\right)^{\frac{\lambda_*}{1-\lambda_*}},
\end{equation*}
it follows that 
\begin{equation}\label{eq:upp_eps}
    \frac{\lambda_*}{1-\lambda_*}\log \left(\frac{\lambda_*}{\delta}\right)\leq \log\left(\frac{1}{2\varepsilon}\right).
\end{equation}
Moreover, since $\lambda_*\in[1/2,1)$, we also have that
\begin{equation}
    \log\left(\frac{1-\lambda_*}{\varepsilon}\right)\leq \log\left(\frac{1/2}{\varepsilon}\right)=\log\left(\frac{1}{2\varepsilon}\right).
\end{equation}
Applying the above inequality along with \eqref{eq:upp_eps} to \eqref{eq:ub_sample_complexity} evaluated at $\lambda=\lambda_*$, we get
\begin{align}
    n(\varepsilon,\delta)&\leq\Bigg\lceil\frac{2}{D_{\lambda_*}(P_1\|P_0)}\log\left(\frac{1}{2\varepsilon}\right)\Bigg\rceil.
\end{align}
The last expression corresponds exactly to \eqref{eq:pensia_sc_eps<delta}.

\subsection{Proof of Corollary \ref{cor:sc_gap}}\label{app:cor:sc_gap}

Let $\varepsilon_m=e^{-mR}$ for $R>0$ and fix $\delta\in(0,1)$. We analyze the lower and upper bounds separately.

Fix any $\lambda>1$ and let $c=mR$ in
the first term of \eqref{eq:lb_sample_complexity}. We obtain the following bound
\begin{align}
L_m
&\geq
\frac{1}{D_\lambda(P_1\|P_0)}
\Bigg[
\frac{\lambda}{\lambda-1}
\log\left((1-\delta)e^{(\lambda-1)mR}+\delta\right)
-\log\left(e^{(\lambda-1)mR}+1-e^{-mR}\right)
\Bigg].
\end{align}
Factoring out $e^{(\lambda-1)mR}$ from both logarithms yields
\begin{align}\label{app:eq:L_m_1}
L_m
&\geq
\frac{1}{D_\lambda(P_1\|P_0)}
\Bigg[
mR
+\frac{\lambda}{\lambda-1}
\log\left(1-\delta+\delta e^{-(\lambda-1)mR}\right)-\log\left(1+e^{-(\lambda-1)mR}-e^{-\lambda mR}\right)
\Bigg]\nonumber\\
&=\frac{mR}{D_\lambda(P_1\|P_0)}
\Bigg[
1
+\frac{1}{mR}\frac{\lambda}{\lambda-1}
\log\left(1-\delta+\delta e^{-(\lambda-1)mR}\right)-\frac{1}{mR}\log\left(1+e^{-(\lambda-1)mR}-e^{-\lambda mR}\right)
\Bigg].
\end{align}
For every fixed $\lambda>1$ and $\delta\in(0,1)$,  the two logarithmic
terms in \eqref{app:eq:L_m_1} remain bounded as $m\to\infty$. Consequently,
\[
L_m
\geq
\frac{mR}{D_\lambda(P_1\|P_0)}\big(1+o(1)\big).
\]
Thus,
\begin{equation}\label{eq:lower_asymptotic_m}
\liminf_{m\to\infty}\frac{L_m}{m}
\geq
\frac{R}{D_\lambda(P_1\|P_0)}.
\end{equation}
Since \eqref{eq:lower_asymptotic_m} holds for every $\lambda\in(1,1+\eta]$, letting
$\lambda\to1^+$ and using $
D_\lambda(P_1\|P_0)\to D(P_1\|P_0)$
gives
\begin{equation}\label{eq:lower_asymptotic_m_final}
\liminf_{m\to\infty}\frac{L_m}{m}
\geq
\frac{R}{D(P_1\|P_0)}.
\end{equation}

We next analyse the upper bound. Fix any $\lambda\in(0,1)$. Corollary~\ref{cor:upper_bound_sample_complexity} gives
\begin{align}
    U_m &\leq \left\lceil
    \frac{\log\left(\frac{1-\lambda}{\varepsilon}\right)+\frac{\lambda}{1-\lambda}\log\left(\frac\lambda\delta\right)}{D_\lambda(P_1\|P_0)}
    \right\rceil 
    \\&\leq \left\lceil
    \frac{\log\left(\frac1\varepsilon\right)+\frac{\lambda}{1-\lambda}\log\left(\frac1\delta\right)}{D_\lambda(P_1\|P_0)}
    \right\rceil 
    \\&= 
    \left\lceil
    \frac{mR+\frac{\lambda}{1-\lambda}\log\left(\frac1\delta\right)}{D_\lambda(P_1\|P_0)}
    \right\rceil.
\end{align}
Because $\lambda$ and $\delta$ are fixed,
\begin{equation}
    \limsup_{m\to\infty}\frac{U_m}{m}
    \leq\frac{R}{D_\lambda(P_1\|P_0)}.
\end{equation}
Letting $\lambda\uparrow1$ yields
\begin{equation}\label{eq:app:upper_asymptotic_sharp}
    \limsup_{m\to\infty}\frac{U_m}{m}
    \leq\frac{R}{D(P_1\|P_0)}.
\end{equation}
Since $L_m\leq n(\varepsilon_m,\delta)\leq U_m$, \eqref{eq:lower_asymptotic_m_final} and \eqref{eq:app:upper_asymptotic_sharp} imply
\begin{equation*}
    \frac{R}{D(P_1\|P_0)}
    \leq\liminf_{m\to\infty}\frac{L_m}{m}
    \leq\limsup_{m\to\infty}\frac{U_m}{m}
    \leq\frac{R}{D(P_1\|P_0)}.
\end{equation*}
Thus both bounds converge to the same positive limit, and $U_m/L_m\to1$.

\subsection{Proof of Corollary~\ref{cor:ldp_upper_bound}}\label{app:cor:ldp_upper_bound}
    Fix $\lambda\in(0,1)$ and choose
    \begin{equation}
    \widehat T_\lambda\in\operatorname*{arg\,max}_{T\in\{T_{\rm B},T_{\rm RR}\}}D_\lambda(TP_1\|TP_0).
    \end{equation}
    By the definition of $D_\lambda^{\rm ach}(\epsilon_{\rm dp})$,
    \begin{equation}
    D_\lambda(\widehat T_\lambda P_1\|\widehat T_\lambda P_0)=D_\lambda^{\rm ach}(\epsilon_{\rm dp}).
    \end{equation}
    Both candidate mechanisms have strictly positive transition probabilities, so their induced distributions are mutually absolutely continuous. Applying Corollary~\ref{cor:beta_optimal_upper_bound_renyi} to $\widehat T_\lambda P_0$ and $\widehat T_\lambda P_1$ gives
    \begin{equation}
    \beta_\varepsilon\left((\widehat T_\lambda P_0)^n,(\widehat T_\lambda P_1)^n\right)\leq\lambda\left(\frac{1-\lambda}{\varepsilon}\right)^{\frac{1-\lambda}{\lambda}}\exp\left(-\frac{1-\lambda}{\lambda}nD_\lambda^{\rm ach}(\epsilon_{\rm dp})\right).
    \end{equation}
    Since $\widehat T_\lambda\in\mathcal C_{\epsilon_{\rm dp}}$, choosing $T_i=\widehat T_\lambda$ for every $i$ in~\eqref{eq:private_beta_definition} yields
    \begin{equation}
    \beta_n^{\rm LDP}(\varepsilon, \epsilon_{\rm dp})\leq\beta_\varepsilon\left((\widehat T_\lambda P_0)^n,(\widehat T_\lambda P_1)^n\right).
    \end{equation}
    Combining the preceding inequalities and taking the infimum over $\lambda\in(0,1)$ proves the result.

\subsection{Proof of Theorem~\ref{th:ldp_phase_transition}}\label{app:th:ldp_phase_transition}
    Set $\varepsilon_n=e^{-nr}$. Taking $c\to\infty$ in the first term of Corollary~\ref{cor:LDP_lower_bound} gives
    \begin{align}
        \beta_n^{\rm LDP}(\varepsilon_n,\epsilon_{\rm dp})
        &\geq
        1-\inf_{\lambda>1}
        \left(
        \varepsilon_n
        e^{n(1-e^{-\epsilon_{\rm dp}})D_\lambda(P_1\|P_0)}
        \right)^{\frac{\lambda-1}{\lambda}}\nonumber\\
        &=
        1-\inf_{\lambda>1}
        \exp\left(
        -\frac{\lambda-1}{\lambda}n
        \left[
        r-(1-e^{-\epsilon_{\rm dp}})D_\lambda(P_1\|P_0)
        \right]
        \right),
    \end{align}
    which proves~\eqref{eq:ldp_phase_converse}. Similarly, Corollary~\ref{cor:ldp_upper_bound} yields
    \begin{align}
        \beta_n^{\rm LDP}(\varepsilon_n,\epsilon_{\rm dp})
        &\leq
        \inf_{\lambda\in(0,1)}
        \lambda(1-\lambda)^{\frac{1-\lambda}{\lambda}}
        \exp\left(
        -\frac{1-\lambda}{\lambda}n
        \left[
        D_\lambda^{\rm ach}(\epsilon_{\rm dp})-r
        \right]
        \right)\nonumber\\
        &\leq
        \inf_{\lambda\in(0,1)}
        \exp\left(
        -\frac{1-\lambda}{\lambda}n
        \left[
        D_\lambda^{\rm ach}(\epsilon_{\rm dp})-r
        \right]
        \right),
    \end{align}
    where the second inequality follows from
    $\lambda(1-\lambda)^{(1-\lambda)/\lambda}\leq1$. This proves~\eqref{eq:ldp_phase_achievability}.
    
    \textbf{Case $r>r_{\rm con}$.}
    By~\eqref{eq:renyi_right_continuity}, $\lim_{\lambda\downarrow1}D_\lambda(P_1\|P_0)=D(P_1\|P_0)$. Hence, there exists $\lambda_+>1$ such that
    \begin{equation}
        r>(1-e^{-\epsilon_{\rm dp}})D_{\lambda_+}(P_1\|P_0).
    \end{equation}
    Evaluating~\eqref{eq:ldp_phase_converse} at $\lambda_+$ gives
    \begin{equation}
        \beta_n^{\rm LDP}(\varepsilon_n,\epsilon_{\rm dp})
        \geq
        1-\exp\left(
        -\frac{\lambda_+-1}{\lambda_+}n
        \left[
        r-(1-e^{-\epsilon_{\rm dp}})D_{\lambda_+}(P_1\|P_0)
        \right]
        \right).
    \end{equation}
    The exponent is strictly positive, and therefore
    $\beta_n^{\rm LDP}(\varepsilon_n,\epsilon_{\rm dp})\to1$
    exponentially fast.
    
    \textbf{Case $0<r<r_{\rm ach}$.}
    By the definition of $D^{\rm ach}(\epsilon_{\rm dp})$ and the left-continuity of R\'enyi divergence at order one, there exists $\lambda_-\in(0,1)$ such that $D_{\lambda_-}^{\rm ach}(\epsilon_{\rm dp})>r$. Evaluating the infimum in~\eqref{eq:ldp_phase_achievability} at $\lambda_-$ gives
    \begin{equation}
        \beta_n^{\rm LDP}(\varepsilon_n,\epsilon_{\rm dp})
        \leq
        \exp\left(
        -\frac{1-\lambda_-}{\lambda_-}n
        \left[
        D_{\lambda_-}^{\rm ach}(\epsilon_{\rm dp})-r
        \right]
        \right).
    \end{equation}
    The exponent is strictly positive, and hence
    $\beta_n^{\rm LDP}(\varepsilon_n,\epsilon_{\rm dp})\to0$
    exponentially fast.
    
    Finally, since $T_{\rm B},T_{\rm RR}\in\mathcal C_{\epsilon_{\rm dp}}$,
    \begin{equation}
        D^{\rm ach}(\epsilon_{\rm dp})\leq
        \sup_{T\in\mathcal C_{\epsilon_{\rm dp}}}
        D(TP_1\|TP_0)\leq
        \sup_{T\in\mathcal C_{\epsilon_{\rm dp}}}
        \eta_1(T)D(P_1\|P_0)\leq
        (1-e^{-\epsilon_{\rm dp}})D(P_1\|P_0).
    \end{equation}
    Thus, $r_{\rm ach}\leq r_{\rm con}$, completing the proof.

\IfFileExists{Bibliography.bib}{%
    \bibliographystyle{IEEEtran}
    \bibliography{Bibliography}%
}{%
    \PackageWarningNoLine{renyi-bht}{Bibliography.bib not found; citations remain unresolved}%
}

\end{document}

%% file: defs.tex
\def\l2lim{\,{\buildrel L_2 \over \rightarrow}\,}

\usepackage{tikz}
\usetikzlibrary{arrows,shapes,automata,petri,positioning,calc}

\tikzset{
    place/.style={
        circle,
        thick,
        draw=black,
        fill=gray!50,
        minimum size=6mm,
    },
        state/.style={
        circle,
        thick,
        draw=blue!75,
        fill=blue!20,
        minimum size=6mm,
    },
}

%% file: Bibliography.bib
@article{polyanskiy2010channel,
  title={Channel coding rate in the finite blocklength regime},
  author={Polyanskiy, Yury and Poor, H. Vincent and Verd{\'u}, Sergio},
  journal={IEEE Transactions on Information Theory},
  volume={56},
  number={5},
  pages={2307--2359},
  year={2010},
  publisher={IEEE}
}

@article{neyman1933ix,
  title={On the problem of the most efficient tests of statistical hypotheses},
  author={Neyman, Jerzy and Pearson, Egon Sharpe},
  journal={Philosophical Transactions of the Royal Society of London. Series A, Containing Papers of a Mathematical or Physical Character},
  volume={231},
  number={694-706},
  pages={289--337},
  year={1933},
  publisher={The Royal Society London}
}

@article{bhattacharyya2022approximating,
  title={On approximating total variation distance},
  author={Bhattacharyya, Arnab and Gayen, Sutanu and Meel, Kuldeep S. and Myrisiotis, Dimitrios and Vinodchandran, N.V.},
  journal={arXiv preprint arXiv:2206.07209},
  year={2022}
}

@article{vandenbroucque2026journal,
  title={Contraction of R\'enyi Divergences for Discrete Channels},
  author={Vandenbroucque, Adrien and Esposito, Amedeo Roberto and Gastpar, Michael},
  journal={arXiv preprint},
  year={2026}
}

@article{ahlswede1976bounds,
  title={Bounds on conditional probabilities with applications in multi-user communication},
  author={Ahlswede, Rudolf and G{\'a}cs, Peter and K{\"o}rner, J{\'a}nos},
  journal={Zeitschrift f{\"u}r Wahrscheinlichkeitstheorie und verwandte Gebiete},
  volume={34},
  number={2},
  pages={157--177},
  year={1976}
}

@article{liu2020second,
  title={Second-order converses via reverse hypercontractivity},
  author={Liu, Jingbo and Van Handel, Ramon and Verd{\'u}, Sergio},
  journal={Mathematical Statistics and Learning},
  volume={2},
  number={2},
  pages={103--163},
  year={2020}
}

@book{bar2002complexity,
  title={The complexity of massive data set computations},
  author={Bar-Yossef, Ziv},
  year={2002},
  publisher={PhD Thesis. University of California, Berkeley}
}

@article{mullhaupt2025bounding,
  title={Bounding Neyman-Pearson Region with $f$-Divergences},
  author={Mullhaupt, Andrew and Peng, Cheng},
  journal={arXiv preprint arXiv:2505.08899},
  year={2025}
}

@article{sason2016f,
  title={$f$-divergence Inequalities},
  author={Sason, Igal and Verd{\'u}, Sergio},
  journal={IEEE Transactions on Information Theory},
  volume={62},
  number={11},
  pages={5973--6006},
  year={2016},
  publisher={IEEE}
}

@article{asoodeh2020contraction,
  title={Contraction of {$E_\gamma$}-Divergence and Its Applications to Privacy},
  author={Asoodeh, Shahab and Diaz, Mario and Calmon, Flavio P.},
  journal={arXiv preprint arXiv:2012.11035},
  year={2020}
}

@ARTICLE{vanerven_renyi,
  author={van Erven, Tim and Harremos, Peter},
  journal={IEEE Transactions on Information Theory}, 
  title={Rényi Divergence and Kullback-Leibler Divergence}, 
  year={2014},
  volume={60},
  number={7},
  pages={3797-3820},
  doi={10.1109/TIT.2014.2320500}
}

@article{bhattacharyya2025total,
  title={Total variation distance for product distributions is \#P-complete},
  author={Bhattacharyya, Arnab and Gayen, Sutanu and Meel, Kuldeep S. and Myrisiotis, Dimitrios and Pavan, A. and Vinodchandran, N.V.},
  journal={Information Processing Letters},
  volume={189},
  pages={106560},
  year={2025},
  publisher={Elsevier}
}

@inproceedings{pensia2024sample,
  title={The sample complexity of simple binary hypothesis testing},
  author={Pensia, Ankit and Jog, Varun and Loh, Po-Ling},
  booktitle={The Thirty Seventh Annual Conference on Learning Theory},
  pages={4205--4206},
  year={2024},
  organization={PMLR}
}

@InProceedings{kazemi2025sample,
  title = 	 {The Sample Complexity of Distributed Simple Binary Hypothesis Testing under Information Constraints},
  author =       {Kazemi, Hadi and Pensia, Ankit and Varun, Jog},
  booktitle = 	 {Proceedings of Thirty Eighth Conference on Learning Theory},
  pages = 	 {3213--3214},
  year = 	 {2025},
  editor = 	 {Haghtalab, Nika and Moitra, Ankur},
  volume = 	 {291},
  series = 	 {Proceedings of Machine Learning Research},
  month = 	 {30 Jun--04 Jul},
  publisher =    {PMLR},
}

@article{pensia2024simple,
  title={Simple binary hypothesis testing under local differential privacy and communication constraints},
  author={Pensia, Ankit and Asadi, Amir R and Jog, Varun and Loh, Po-Ling},
  journal={IEEE Transactions on Information Theory},
  volume={71},
  number={1},
  pages={592--617},
  year={2024},
  publisher={IEEE}
}

@article{kairouz2016extremal,
  title={Extremal mechanisms for local differential privacy},
  author={Kairouz, Peter and Oh, Sewoong and Viswanath, Pramod},
  journal={Journal of Machine Learning Research},
  volume={17},
  number={17},
  pages={1--51},
  year={2016}
}

@article{chernoff1956large,
  title={Large-sample theory: Parametric case},
  author={Chernoff, Herman},
  journal={The Annals of Mathematical Statistics},
  volume={27},
  number={1},
  pages={1--22},
  year={1956},
  publisher={JSTOR}
}

@article{chernoff1952measure,
  title={A measure of asymptotic efficiency for tests of a hypothesis based on the sum of observations},
  author={Chernoff, Herman},
  journal={The Annals of Mathematical Statistics},
  pages={493--507},
  year={1952},
  publisher={JSTOR}
}

@book{cover1999elements,
  title={Elements of information theory},
  author={Cover, Thomas M.},
  year={1999},
  publisher={John Wiley \& Sons}
}

@ARTICLE{Espinosa,
  author={Espinosa, Sebastian and Silva, Jorge F. and Piantanida, Pablo},
  journal={IEEE Signal Processing Letters}, 
  title={Finite-Length Bounds on Hypothesis Testing Subject to Vanishing Type I Error Restrictions}, 
  year={2021},
  volume={28},
  number={},
  pages={229-233},
  doi={10.1109/LSP.2021.3050381}
}

@ARTICLE{Blahut,
  author={Blahut, R.},
  journal={IEEE Transactions on Information Theory}, 
  title={Hypothesis testing and information theory}, 
  year={1974},
  volume={20},
  number={4},
  pages={405-417},
  doi={10.1109/TIT.1974.1055254}
}

@ARTICLE{Han_strongconverse,
  author={Han, T.S. and Kobayashi, K.},
  journal={IEEE Transactions on Information Theory}, 
  title={The strong converse theorem for hypothesis testing}, 
  year={1989},
  volume={35},
  number={1},
  pages={178-180},
  doi={10.1109/18.42188}
}

@inproceedings{bruno2026finite,
      title={A Finite-Sample Strong Converse for Binary Hypothesis Testing via (Reverse) {R\'enyi} Divergence}, 
      author={Roberto Bruno and Adrien Vandenbroucque and Amedeo Roberto Esposito},
      booktitle={IEEE International Symposium on Information Theory (ISIT)},
      year={2026},
      organization={IEEE},
}

@inproceedings{lungu2024optimal,
  title={The optimal finite-sample error probability in asymmetric binary hypothesis testing},
  author={Lungu, V. and Kontoyiannis, I.},
  booktitle={2024 IEEE International Symposium on Information Theory (ISIT)},
  pages={831--836},
  year={2024},
  organization={IEEE}
}

@article{mosonyi2015quantum,
  title={Quantum hypothesis testing and the operational interpretation of the quantum R{\'e}nyi relative entropies},
  author={Mosonyi, Mil{\'a}n and Ogawa, Tomohiro},
  journal={Communications in Mathematical Physics},
  volume={334},
  number={3},
  pages={1617--1648},
  year={2015},
  publisher={Springer}
}

@inproceedings{polyanskiy2010arimoto,
  title={Arimoto channel coding converse and R{\'e}nyi divergence},
  author={Polyanskiy, Yury and Verd{\'u}, Sergio},
  booktitle={2010 48th Annual Allerton Conference on Communication, Control, and Computing (Allerton)},
  pages={1327--1333},
  year={2010},
  organization={IEEE}
}

@article{nguyen2010estimating,
  title={Estimating divergence functionals and the likelihood ratio by convex risk minimization},
  author={Nguyen, XuanLong and Wainwright, Martin J and Jordan, Michael I},
  journal={IEEE Transactions on Information Theory},
  volume={56},
  number={11},
  pages={5847--5861},
  year={2010},
  publisher={IEEE}
}

@article{csiszar1967information,
  title={On information-type measure of difference of probability distributions and indirect observations},
  author={Csisz{\'a}r, Imre},
  journal={Studia Sci. Math. Hungar.},
  volume={2},
  pages={299--318},
  year={1967}
}

@article{anantharam2018variational,
  title={A variational characterization of {R{\'e}}nyi divergences},
  author={Anantharam, Venkat},
  journal={IEEE Transactions on Information Theory},
  volume={64},
  number={11},
  pages={6979--6989},
  year={2018},
  publisher={IEEE}
}

@article{fano1961transmission,
  title={Transmission of information: A statistical theory of communications},
  author={Fano, Robert M. and Hawkins, David},
  journal={American Journal of Physics},
  volume={29},
  number={11},
  pages={793--794},
  year={1961},
  publisher={AIP Publishing}
}

@article{dwork2014algorithmic,
  title={The algorithmic foundations of differential privacy},
  author={Dwork, Cynthia and Roth, Aaron},
  journal={Foundations and trends{\textregistered} in theoretical computer science},
  volume={9},
  number={3-4},
  pages={211--487},
  year={2014},
  publisher={Emerald Publishing Limited}
}

@article{duchi2018minimax,
  title={Minimax optimal procedures for locally private estimation},
  author={Duchi, John C and Jordan, Michael I and Wainwright, Martin J},
  journal={Journal of the American Statistical Association},
  volume={113},
  number={521},
  pages={182--201},
  year={2018},
  publisher={Taylor \& Francis}
}

@article{asoodeh2024contraction,
  title={Contraction of locally differentially private mechanisms},
  author={Asoodeh, Shahab and Zhang, Huanyu},
  journal={IEEE Journal on Selected Areas in Information Theory},
  volume={5},
  pages={385--395},
  year={2024},
  publisher={IEEE}
}
